 \definecolor{BLACK}{gray}{0}
 \definecolor{WHITE}{gray}{1}
 \definecolor{RED}{rgb}{1,0,0}
 \definecolor{GREEN}{rgb}{0,1,0}
 \definecolor{BLUE}{rgb}{0,0,1}
 \definecolor{CYAN}{cmyk}{1,0,0,0}
 \definecolor{MAGENTA}{cmyk}{0,1,0,0}
 \definecolor{YELLOW}{cmyk}{0,0,1,0}
\newtheorem{theorem}{Theorem}
\newtheorem{proposition}[theorem]{Proposition}
\newtheorem{example}[theorem]{Example}
\newtheorem{lem}[theorem]{Lemma}
\newtheorem{definition}[theorem]{Definition}
\newtheorem{fact}[theorem]{Fact}
\newenvironment{proof}[1][Proof]{\noindent\textit{#1.} }{\ \rule{0.5em}{0.5em}}
\newenvironment{sproof}[1][Sketch of proof]{\noindent\textit{#1.} }{\ \rule{0.5em}{0.5em}}
\begin{document}

\title{Single trusted qubit is necessary and sufficient for quantum realisation of extremal no-signaling correlations}
\author{Ravishankar Ramanathan}
\affiliation{Department of Computer Science, The University of Hong Kong, Pokfulam Road, Hong Kong}
\email{ravi@cs.hku.hk}
\author{Micha{\l} Banacki}
\affiliation{International Centre for Theory of Quantum Technologies, University of Gda\'{n}sk, Wita Stwosza 63, 80-308 Gda\'{n}sk, Poland}
\affiliation{Institute of Theoretical Physics and Astrophysics, National Quantum Information Centre, Faculty of Mathematics, Physics and Informatics, University of Gda\'{n}sk, Wita Stwosza 57, 80-308 Gda\'{n}sk, Poland}
\author{Ricard Ravell Rodr{\'i}guez}
\affiliation{International Centre for Theory of Quantum Technologies, University of Gda\'{n}sk, Wita Stwosza 63, 80-308 Gda\'{n}sk, Poland}
\author{Pawe{\l} Horodecki}
\affiliation{International Centre for Theory of Quantum Technologies, University of Gda\'{n}sk, Wita Stwosza 63, 80-308 Gda\'{n}sk, Poland}
\affiliation{Faculty of Applied Physics and Mathematics, National Quantum Information Centre, Gda\'{n}sk University of Technology, Gabriela Narutowicza 11/12, 80-233 Gda\'{n}sk, Poland} 

\begin{abstract}
The problem of achieving security of device-independent (or semi-device-independent) cryptography (for quantum key distribution and randomness generation) against the most general no-signalling adversaries has remained open. It has been recognized that the realisation of extremal non-signalling non-local boxes (or extremal non-signalling non-local assemblages) could provide a route towards devising such highly secure protocols. We first prove a general no-go result that in the Bell non-locality scenario, quantum theory does not allow to realise any extremal non-signalling non-local box, even if scenarios of arbitrary sequential measurements are considered. On the other hand, we secondly prove a positive result showing that a one-sided device-independent scenario where a single party trusts their qubit system is already sufficient for quantum theory to realise a self-testing extremal non-local point within the set of non-signalling assemblages. 
\end{abstract}


\keywords{}

\maketitle

\textit{Introduction.-} Correlations in entangled states can not be realised by local hidden variables theories where results of measurements on subsystems are locally predetermined. \cite{EPR1935,Schroedinger1935,H4} This phenomenon evidenced by the violation of Bell inequalities \cite{Bell,RMPBellnonlocality} led to the powerful idea of device-independent (DI) cryptography  \cite{BarrettPRL, PironioNature, PRLAcin, IEEEKessler, Chung} where no assumption on the nature of the quantum systems subject to measurement needs to be made.
In the DI setting, security is ultimately based on the observation of non-local correlations by honest parties and the property of monogamy of quantum non-local correlations \cite{PRLPawlowski,RamanathanPH}. A stronger property than monogamy is that of {\it extremality of the measurement statistics}, i.e., the observation of extremal behavior by honest parties within the set of all measurement behaviors. Such extremal behavior guarantees that their system is completely decoupled from that of any adversary. For any such extremal behavior, one can also find a Bell inequality that is maximally violated by the extremal statistics. 
Moreover, in certain cases, such a violation even permits the \textit{self-testing} \cite{MY04} of the quantum pure state measured i.e., its uniqueness up to irrelevant local operations. 

This analysis can be carried over into general probabilistic theories beyond quantum theory \cite{Barrett2005} that only obey the no-signaling condition of relativity. In this case, there are families of statistics called {\it no-signaling boxes} that obey the no-signaling constraints but may otherwise be super-quantum, and as such may violate Bell inequalities more strongly than quantum boxes, the quintessential example here being the Popescu-Rohrlich (PR) box \cite{Rohlich-Popescu}.
The extremality of a family of statistics in any such no-signaling theory then means that it is uncorrelated from other measurement behaviors (boxes) and as such is very useful in realizing secure DI protocols. 

Later, the weaker scenario of semi-device-independent schemes (semi-DI) has been developed in the setting where some of the
parties may be considered to have full control of the quantum systems in their laboratory (see \cite{Pirandola2019}). Here, instead of just the measurement statistics, one considers \textit{quantum assemblages} and instead of Bell inequalities one considers the so-called steering inequalities  
(see \cite{WJD07}). Similarly, just as the no-signaling boxes, one considers here the \textit {no-signaling assemblages} only constrained by the no-signaling conditions  \cite{SBCSV15}.

The interesting question whether quantum DI cryptography can stay secure against a general no-signaling adversary has been posed \cite{BarrettPRL}. Some partial positive results have been provided in problems of secret key \cite{BarrettPRL} or randomness amplification \cite{ColbeckRennerNaturePhyscis2012, GallegoNatureComm2013, RamanathanNatComm2016}.
These proofs uniformly utilize quantum measurement behaviors that do not represent extremal points in the set of no-signaling behaviors.
It was recognised that if one could realize such extremal post-quantum behaviors by measurements on quantum states, then the security proofs could  
be much more streamlined.
Hence, the natural question was whether there is any scenario in which quantum correlations give rise to extremal no-signaling behaviors.

 
An important, though partial, negative result in this direction was obtained in \cite{RTHHPRL} where it was shown that in the usual Bell non-locality framework, there exists no scenario (number of parties, measurement settings or outcomes) in which quantum correlations  represent an extremal point in the set 
(convex polytope) of no-signaling boxes. An important question was 
left unanswered whether the same is true in more general correlation 
scenarios such as that of sequential Bell non-locality \cite{GWCAN} or in quantum steering scenarios \cite{SBCSV15}.  
 
Here we provide complete answers to both these fundamental questions. First, we extend the no-go result of \cite{RTHHPRL} to
the general scenario of sequential Bell non-locality \cite{GWCAN}: quantum sequential non-local correlations cannot realise extremal non-signaling behaviors, irrespective of the number of measurement settings or outcomes. Second, we also provide a positive answer in the setting of steering inequalities: if one of the parties has a fully trusted qubit system then there exist situations where
quantum assemblages are extremal within general no-signaling assemblages. Crucially, this latter result holds
in the setting of three-party steering where quantum assemblages
have been shown to be a strict subset of the
set of no-signaling assemblages. This surprising result, in view of the unrealisability of super-quantum boxes such as the PR box in non-locality and the interesting consequences thereof \cite{RMPBellnonlocality, RMPBuhrman}, should have interesting consequences both in quantum foundations and in the development of semi-device-independent cryptography secure against no-signaling adversaries.

\textit{Extremality in Sequential Bell non-locality.-} 
We begin with the scenario of sequential Bell non-locality \cite{GWCAN}, where each party performs measurements on their system in a sequential manner, leading to a time-ordered no-signaling structure and the corresponding inequalities consider correlations between outcomes obtained in sequential runs. This scenario is in many ways richer than the usual Bell non-locality scenario, with the appearance of novel phenomena such as 'hidden non-locality' \cite{Popescu}, wherein some quantum states only display local correlations in traditional Bell experiments while exhibiting non-local correlations when correlations are considered also between outcomes of measurements performed in sequence by each party. Here, one party Alice chooses to measure one of $m_A^{(j_A)}$ inputs $i_A^{(j_A)} = 1, \dots, m_A^{(j_A)}$ in the $j_A$-th run of the Bell experiment, and obtains one of $d^{(j_A)}_{A, i_A}$ outputs $o_A^{(j_A)} \in \{1, \dots, d^{(j_A)}_{A, i_A}\}$. Similarly, the other party Bob chooses to measure in the $j_B$-th run, one of $m^{(j_B)}_B$ inputs $i^{(j_B)}_B = 1, \dots, m^{(j_B)}_B $, and obtains one of $d^{(j_B)}_{B, i_B}$ outputs $o^{(j_B)}_B \in \{1, \dots, d^{(j_B)}_{B, i_B}\}$ outputs. Here, $j_A = 1, \dots, N_A$ and $j_B = 1, \dots, N_B$ where $N_A, N_B$ denote the number of measurement runs of Alice and Bob, respectively. Such a sequential Bell scenario is denoted by $\textbf{B}\left(2; (\vec{m}_A, \vec{d}_A); (\vec{m}_B, \vec{d}_B) \right)$, where $\vec{m}_A := (m_A^{(1)}, \dots, m_A^{(N_A)})$, $\vec{d}_A := \left( \left(d^{(1)}_{A,1}, \dots, d^{(1)}_{A, m^{(1)}_A}\right), \dots, \left(d^{(N_A)}_{A,1}, \dots, d^{(N_A)}_{A, m^{(N_A)}_A} \right) \right)$. We will simplify the notation by choosing $m_A^{(j_A)} = m_B^{(j_B)} = m$, $d_A^{(j_A)} = d_B^{(j_B)} = d$ for all $j_A, j_B$, and $N_A = N_B = N$ where this does not affect the generality of the argument. The joint probability of obtaining the outcomes $\textbf{o}_A := (o_A^{(1)}, \dots, o_A^{(N)})$ for Alice, and $\textbf{o}_B := (o_B^{(1)}, \dots, o_B^{(N)})$ for Bob, for given measurement settings $\textbf{i}_A := (i_A^{(1)}, \dots, i_A^{(N)})$ and $\textbf{i}_B := (i_B^{(1)}, \dots, i_B^{(N)})$ respectively, will be denoted by $P_{\textbf{O}_A, \textbf{O}_B | \textbf{I}_A, \textbf{I}_B}(\textbf{o}_A, \textbf{o}_B | \textbf{i}_A, \textbf{i}_B)$. As before, we may view these $n_{seq} := \left(m d \right)^{2N}$ probabilities as forming the components of a vector $P_{\textbf{O}_A, \textbf{O}_B | \textbf{I}_A, \textbf{I}_B} = | P \rangle$ in $\mathds{R}^{n_{seq}}$, and are described as forming a box $P$. 

We consider the set of general Time-Ordered No-Signaling (TONS) boxes in the scenario of sequential non-locality as obeying the time-ordered no-signaling constraints (where there is no-signaling between all rounds of Alice and all rounds of Bob, while signaling is allowed between past rounds of Alice (Bob) to future rounds of Alice (Bob)) in addition to those of normalization and non-negativity, and denote this set as $\textbf{TONS}\left[\textbf{B}\left(2; (\vec{m}_A, \vec{d}_A); (\vec{m}_B, \vec{d}_B) \right)\right]$. The important subset of TONS boxes is the classical Time-Ordered Local Deterministic polytope, denoted by $\textbf{TOLoc}\left[\textbf{B}\left(2; (\vec{m}_A, \vec{d}_A); (\vec{m}_B, \vec{d}_B) \right)\right]$, which is the convex hull of  all boxes where all entries are integral, i.e., in $\{0,1\}$. The boxes obtainable by performing general sequential quantum measurements on a quantum state of arbitrary dimension form the set of sequential quantum correlations denoted by $\textbf{Qseq}\left[\textbf{B}\left(2; (\vec{m}_A, \vec{d}_A); (\vec{m}_B, \vec{d}_B) \right)\right]$. These sets are defined explicitly in the Supplemental Material \cite{Sup}. We ask the question whether quantum correlations can realize the extremal boxes of the general TONS polytope, where an extremal box or a vertex is one that cannot be expressed as a non-trivial convex combination of boxes in the polytope. This fundamental question in quantum foundations gains additional interest in DI quantum cryptography due to the simple but powerful fact that extremal quantum correlations are automatically decoupled from any systems held by any no-signaling adversary \cite{Barrett2005}. By considering an extension to the scenario of sequential non-locality \cite{NPA2, BBS} of the well-known NPA hierarchy \cite{NPA} of semi-definite programming relaxations to the set of quantum correlations, and developing the techniques from \cite{RTHHPRL} to this scenario, we prove (see \cite{Sup}) the following. 

\begin{theorem}
\label{thm:qext-seq}
For  any $(\vec{m}_A, \vec{d}_A), (\vec{m}_B, \vec{d}_B)$ let $P$ be an extremal box of the Time-Ordered No-Signaling polytope $\textbf{TONS}\left[\textbf{B}\left(2; (\vec{m}_A, \vec{d}_A); (\vec{m}_B, \vec{d}_B) \right)\right]$ such that $P \notin \textbf{TOLoc}\left[\textbf{B}\left(2; (\vec{m}_A, \vec{d}_A); (\vec{m}_B, \vec{d}_B) \right)\right]$. Then, $P \notin \text{cl}\left(\textbf{Qseq}\left[\textbf{B}\left(2; (\vec{m}_A, \vec{d}_A); (\vec{m}_B, \vec{d}_B) \right)\right] \right)$. The latter stays true even when the no-signaling constraints are relaxed to allow signaling from the $j$-th run of Alice (Bob) to the $j+k$-th run of Bob (Alice) for all $j = 1, \dots, n$, $k \geq 1$.
\end{theorem} 
Together with the results from \cite{RTHHPRL} the above theorem rules out the quantum realisation of extremal postquantum statistics, at least in the ubiquitous two-party non-locality setting. Nevertheless, subsequently we show below that the situation can be remedied in the steering scenario with the addition of a third party holding a trusted qubit system.

\textit{Extremality of quantum assemblages.-}
Consider a bipartite steering scenario \cite{S36,WJD07} in which two distant subsystems A and B share a quantum state  $\rho^{(AB)}$. We assume that A is uncharacterised (i.e.  dimension of its Hilbert space, reduced quantum state and local measurements which are performed on it are unknown), while the quantum system of B is fully characterised. Let $M^{(A)}_{a|x}$ represent an element of a positive operator valued-measure (POVM) on A, corresponding to the outcome $a \in \mathcal{A}$ of the measurement setting $x \in \mathcal{X}$ with fixed and finite alphabet sizes $|\mathcal{A}|$ and $|\mathcal{X}|$. 
According to measurements performed on A, the subsystem B is then described by the set of subnormalised states
\begin{equation}\label{assemblage}
\sigma^{(B)}_{a|x}=\mathrm{Tr}_{A}\left(M^{(A)}_{a|x}\otimes \mathds{1}\rho^{(AB)}\right).
\end{equation}
The probability of obtaining outcome $a$ while performing measurement $x$ on subsystem A is given by $\mathrm{Tr}_{B}(\sigma^{(B)}_{a|x})$, and subsystem B after this measurement is described by the state $\frac{\sigma^{(B)}_{a|x}}{\mathrm{Tr}_{B}(\sigma^{(B)}_{a|x})}$. The collection of subnormalised states $\Sigma^{(B)}=\left\{\sigma^{(B)}_{a|x}\right\}_{a,x}$ acting on a Hilbert space of dimension $d_B$ is known as a \textit{quantum assemblage}.

One can also consider a general abstract notion of a \textit{no-signaling assemblage} (also acting on a $d_B$ dimensional Hilbert space) defined by the following no-signaling conditions $\forall_{a,x}\ \sigma^{(B)}_{a|x}\geq 0$, $\forall_{x,x'}\ \sum_a\sigma^{(B)}_{a|x}=\sigma^{(B)}=\sum_a\sigma^{(B)}_{a|x'}$ and $\mathrm{Tr}(\sigma^{(B)})=1$.
One can think of such a no-signaling assemblage as the effect of the steering of a quantum state describing subsystem B (with dimension $d_B$) by local measurements performed on an uncharacterised separated subsystem A, when the joint state of both subsystems is no longer described by quantum mechanics, but rather as the state in some no-signaling generalised probabilistic theory. However, it has been proven in  \cite{G89,HJW93}, that any two-party no-signaling assemblage also admits a \textit{quantum realisation}, i.e., there exist a subsystem A, POVM elements $M^{(A)}_{a|x}$ and a joint quantum state $\rho^{(AB)}$, such that all the elements $\sigma^{(B)}_{a|x}$ can be reconstructed as in formula (\ref{assemblage}). Therefore there is no post-quantum steering in this bipartite setting.

The situation dramatically changes if we consider assemblages with three separated subsystems A, B, C, in which a characterised subsystem C associated with a Hilbert space of dimension $d_C$, shares with uncharacterised parties A, B a joint state in some no-signaling generalised probabilistic theory \cite{CS15}. Analogously to the bipartite case, one may perform uncharacterised (local, independent) measurements on A and B (with settings and outcomes labeled by pairs $x,a$ and $y,b$ respectively). System C is then described by a set of subnormalised states $\sigma^{(C)}_{ab|xy}$ satisfying the no-signaling conditions. In this case, the  abstract \textit{no-signaling assemblage} (acting on the $d_C$ dimensional space) $\Sigma^{(C)}=\left\{\sigma^{(C)}_{ab|xy}\right\}_{a,b,x,y}$ is therefore defined by the conditions 
\begin{equation}
\forall_{b,x,x',y}\ \sum_a\sigma^{(C)}_{ab|xy}=\sum_a\sigma^{(C)}_{ab|x'y},
\end{equation}
\begin{equation}
\forall_{a,x,y,y'}\ \sum_b\sigma^{(C)}_{ab|xy}=\sum_b\sigma^{(C)}_{ab|xy'},
\end{equation}
\begin{equation}
\forall_{x,y}\ \mathrm{Tr}\left(\sum_{a,b}\sigma^{(C)}_{ab|xy}\right)=1,\ \forall_{a,b,x,y}\ \sigma^{(C)}_{ab|xy}\geq 0.
\end{equation}

Crucially, as opposed to the bipartite setting, not all no-signaling assemblages $\Sigma^{(C)}$ in the tripartite setting, admit a \textit{quantum realisation} \cite{SBCSV15} as $\sigma^{(C)}_{ab|xy}=\mathrm{Tr}_{AB}\left(M^{(A)}_{a|x}\otimes N^{(B)}_{b|y}\otimes \mathds{1}\rho^{(ABC)} \right)$, with POVM elements $M^{(A)}_{a|x}, N^{(B)}_{b|y}$ and tripartite state $\rho^{(ABC)}$ of the quantum system ABC. 

Indeed, one may consider a no-signaling assemblage defined as $\sigma^{(C)}_{ab|xy}=p^{(AB)}(ab|xy)\rho^{(C)}$ with $p^{(AB)}(ab|xy)$ denoting the so-called Popescu-Rohrlich (PR) box distributions \cite{SBCSV15}. This assemblage is post-quantum and this is a direct consequence of the post-quantum non-locality of the PR box \cite{Sup}. Interestingly, it has been found that there are also no-signaling assemblages $\Sigma^{(C)}=\left\{\sigma^{(C)}_{ab|xy}\right\}_{a,b,x,y}$, for which any POVM elements $R^{(C)}_{c|z}$ provide no-signaling boxes $p^{(ABC)}(abc|xyz)=\mathrm{Tr}\left(R^{(C)}_{c|z}\sigma^{(C)}_{ab|xy}\right)$ with quantum realisation, and yet the whole assemblage does not admit quantum realisation \cite{SBCSV15}. These show that postquantum non-locality and post-quantum steering are genuinely different phenomena in the tripartite setting and beyond. It is noteworthly that the (i) set of no-signaling assamblages and (ii) the subset of assemblages that admit quantum realisation are both convex \cite{Sup}.

Inside the discussed set of quantum assemblages one can single out the convex subset of LHS (local hidden state) assemblages which represent steering with a classically correlated system \cite{SAPHS18}. A no-signaling assemblage admits \textit{LHS model} if it can be represented by $\sigma^{(C)}_{ab|xy}=\sum_i q_i p^{(A)}_i(a|x)p^{(B)}_i(b|y)\sigma^{(C)}_i$ where $q_i\geq 0, \sum_i q_i=1$, and $\sigma^{(C)}_i$ are some states of characterised subsystem C and $p^{(A)}_i(a|x),p^{(B)}_i(b|y)$ denote conditional probability distributions for uncharacterised subsystems A and B respectively. Equivalently, for LHS $\sigma^{(C)}_{ab|xy}=\sum_i q_ip^{(AB)}_i(ab|xy)\sigma^{(C)}_i$ where $L_i=\left\{p^{(AB)}_i(ab|xy)\right\}_{a,b,x,y}$ is a deterministic box of conditional probabilities \cite{Sup}. 

 As in a tripartite case one can discuss different type of separability (entanglement), we introduce another convex set of \textit{biseparable assemblages} (BIS) as a collection of all assemblages with quantum realisation $\sigma^{(C)}_{ab|xy}=\mathrm{Tr}_{AB}\left(M^{(A)}_{a|x}\otimes N^{(B)}_{b|y}\otimes \mathds{1}\rho^{(ABC)} \right)$ where $\rho^{(ABC)}$ is biseparable \cite{CS15} (see further discussion in Supp. Material \cite{Sup}). It is easy to see that biseparable assemblages form an intermediate set between LHS and quantum assemblages.\color{black}.

One can show that a no-signaling assemblage can be excluded from the set of LHS assemblages by the violation of a \textit{steering inequality}, i.e. for any no-signaling assemblage $\Sigma^{(C)}$ that does not admit an LHS model, there exists a linear real valued functional $F$ on no-signaling assemblages such that $F(\Sigma^{(C)})>c$ and $F(\Sigma^{(C)}_{LHS})\leq c$ for all LHS assemblages $\Sigma^{(C)}_{LHS}$. Similarly certain subclass of such inequalities may be used for certification that a given assemblage is not biseparable. In particular, in case of quantum assemblages, steering inequities may indicate that the initial state is not fully separable or biseprarable. \color{black}

One can easily generalise the notion of no-signaling assemblages to the scenario with $n>2$ uncharacterised parties \cite{SAPHS18}. For simplicity, we will restrict our attention to the case when $n=2$ and $a,b,x,y\in \left\{0,1\right\}$. Note that a no-signaling assemblage can be then seen as a box of positive operators (i.e. subnormalised states) where $(a|x)$ label rows and $(b|y)$ label columns, i.e. 
\begin{equation}\label{eq:NS-assemblage}
\Sigma^{(C)}=\begin{pmatrix}
\begin{array}{cc|cc}
 \sigma^{(C)}_{00|00} &  \sigma^{(C)}_{01|00} & \sigma^{(C)}_{00|01} &  \sigma^{(C)}_{01|01} \\  
 \sigma^{(C)}_{10|00} & \sigma^{(C)}_{11|00}& \sigma^{(C)}_{10|01}  & \sigma^{(C)}_{11|01} \\ \hline
 \sigma^{(C)}_{00|10} & \sigma^{(C)}_{01|10} & \sigma^{(C)}_{00|11}&  \sigma^{(C)}_{01|11}  \\
   \sigma^{(C)}_{10|10} & \sigma^{(C)}_{11|10} & \sigma^{(C)}_{10|11} & \sigma^{(C)}_{11|11}
\end{array}
\end{pmatrix}.
\end{equation}In particular, LHS assemblages are convex combinations of extremal boxes  (of operators) which have only four nonzero positions occupied by the same pure state forming a rectangle with exactly one element for each pair $(x,y)$ (see example (B20) in \cite{Sup}).

\color{black}

 In analogy to the fundamental question in nonlocality it is interesting to ask whether a quantum assemblage can realize an extremal non-classical point in the set of no-signaling assemblages. In the case of bipartite steering all no-signaling assemblages admit quantum realisation, therefore such a question is uninteresting. The first relevant scenario is a tripartite setup with at least two measurement settings on uncharacterised parties \cite{Sup}. We show below that, remarkably, in contrast to non-locality the question admits the affirmative answer in this setting.

Below we shall introduce the concepts of {\it similarity} and  {\it inflexibility} of no-signaling assemblages with pure rank one elements $\sigma_{ab|xy}^{(C)}$, which is crucial for further analysis.

\begin{definition}\label{similarity}
Consider a general no-signaling assemblage $\Sigma^{(C)}$ as in Eq.(\ref{eq:NS-assemblage}) with all positions occupied by at most rank one operators and denote it by $\Sigma^{(C)}=\left\{p_i|\psi^{(C)}_i\rangle \langle \psi^{(C)}_i|\right\}_i$, where $i=(ab|xy)$ and $p_i=\mathrm{Tr}(\sigma^{(C)}_i)$.  Any other assemblage $\tilde{\Sigma}^{(C)}=\left\{q_i|\psi^{(C)}_i\rangle \langle \psi^{(C)}_i|\right\}_i$ with the same states at the same positions and the additional property that $p_i=0$ implies $q_i=0$ is called {\it similar} to $\Sigma^{(C)}$.
\end{definition}
Note that the above relation is not symmetric, i.e. it may happen that $\tilde{\Sigma}^{(C)}$ is similar to $\Sigma^{(C)}$, 
but $\Sigma^{(C)}$ is not similar to $\tilde{\Sigma}^{(C)}$. The second concept is defined here

\begin{definition}\label{inflexibility}
An assemblage $\Sigma^{(C)}$ is called \textit{inflexible} if for any $\tilde{\Sigma}^{(C)}$ similar to $\Sigma^{(C)}$ we 
get $\Sigma^{(C)}=\tilde{\Sigma}^{(C)}$.
\end{definition}

Note that in particular {\it inflexibility implies extremality} in the set of all no-signaling assemblages (see Section B2 in \textit{Sup. Material} \cite{Sup}).\color{black}

Any extremal quantum assemblage can be obtained by measurements performed on a pure state, therefore we may restrict  only to such states \cite{Sup}. Recall that a pure state $|\psi^{(ABC)}\rangle\in \mathbb{C}^{d_A}\otimes \mathbb{C}^{d_B}\otimes \mathbb{C}^{d_C}$ is genuine three-party entangled if it is entangled with respect to any bipartite splitting of the tripartite system. 

\begin{proposition}\label{proposition_genuine}
For any pure genuine three-party entangled state $|\psi^{(ABC)}\rangle\in \mathbb{C}^{2}\otimes \mathbb{C}^{2}\otimes \mathbb{C}^{d}$ there exists a pair of PVMs with two outcomes on subsystems A and B respectively such that a no-signaling assemblage $\Sigma^{(C)}$ obtained by these measurements is inflexible. In particular $\Sigma^{(C)}$ is extremal, not LHS and not biseparable. Moreover, $\Sigma^{(C)}$ is the unique no-signaling assemble that maximally violates some steering inequality $F_{\Sigma^{(C)}}$.\color{black}
\end{proposition}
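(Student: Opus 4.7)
The plan is to reduce the inflexibility condition to a linear-algebra problem on the weights of the rank-one entries of $\Sigma$, and then exploit genuine tripartite entanglement to pick PVMs for which the only solution is the quantum one. Since $A$ and $B$ are qubits with two-outcome measurements, every PVM consists of rank-one projectors $P_{a|x}=|a_x\rangle\langle a_x|$ and $Q_{b|y}=|b_y\rangle\langle b_y|$. Applied to the pure state $|\psi\rangle$, they yield $\sigma_{ab|xy}=p_{ab|xy}|\phi_{ab|xy}\rangle\langle\phi_{ab|xy}|$, with the conditional state $|\phi_{ab|xy}\rangle\in\mathbb{C}^{d}$ determined (up to phase) by $|\psi\rangle$ and the PVMs, and $p_{ab|xy}$ equal to the squared norm of $(\langle a_x|\otimes\langle b_y|\otimes\mathbb{I})|\psi\rangle$. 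So the assemblage is automatically rank one at every position.

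Any $\tilde\Sigma$ similar to $\Sigma$ is thus parametrized by scalars $q_{ab|xy}\ge 0$ with $q_{ab|xy}=0$ whenever $p_{ab|xy}=0$. The no-signaling conditions become the Hermitian matrix identities
\begin{equation}
\sum_{a}q_{ab|0y}|\phi_{ab|0y}\rangle\langle\phi_{ab|0y}|=\sum_{a}q_{ab|1y}|\phi_{ab|1y}\rangle\langle\phi_{ab|1y}|
\end{equation}
for each $(b,y)$, together with the symmetric identities swapping the roles of $A$ and $B$, and a single trace normalization. Proving inflexibility amounts to showing that this linear system in the $q_{ab|xy}$ has only the solution $q=p$.

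The key step is to leverage genuine tripartite entanglement. Entanglement across the cut $A|BC$ guarantees that for a generic PVM on $A$ the two conditional vectors $|\phi_{0b|xy}\rangle$ and $|\phi_{1b|xy}\rangle$ on $C$ are linearly independent whenever both $p_{0b|xy}$ and $p_{1b|xy}$ are positive, for otherwise $|\psi\rangle$ would factorise on that cut; the same holds on the $B|AC$ cut. Under these generic choices each no-signaling identity equates two positive operators of rank at most two, each expressed in linearly independent rank-one components, and since such a decomposition is essentially unique it pins down the $q$'s on the left-hand side in terms of those on the right. Chaining these relations across all $(b,y)$ and $(a,x)$ and imposing the trace normalization forces $q_{ab|xy}=p_{ab|xy}$.

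The main obstacle is controlling this genericity argument and ruling out accidental coincidences between conditional states across different settings that would produce a nontrivial kernel in the linear system. This should be handled by observing that the set of PVM pairs producing such degeneracies is a proper algebraic subvariety of the compact parameter space of PVM pairs, so a good pair always exists. With inflexibility established, extremality in the no-signaling assemblage set follows immediately, because any convex decomposition of the rank-one $\Sigma$ consists of assemblages that are automatically rank one at the same positions and hence similar to $\Sigma$. Finally, since the no-signaling assemblage set is a compact convex set in a finite-dimensional space and $\Sigma$ is an exposed extreme point of it by the same rigidity, Hahn-Banach provides a supporting real functional $F_{\Sigma}$ uniquely maximised at $\Sigma$, which is the desired self-testing steering inequality.
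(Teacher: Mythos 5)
Your overall strategy mirrors the paper's: exploit the automatic rank-one structure of the assemblage, use genuine entanglement to force linear independence of the conditional states, and let the no-signaling equalities rigidify the weights. Where you diverge is in trying to make \emph{every} row and column nondegenerate by genericity, which would indeed bypass the paper's case analysis (its key lemma assumes only one nondegenerate column and two nondegenerate rows, and must then handle by hand the cases where the remaining rows collapse to one or two distinct conditional states). Your shortcut is viable, but the step you wave at is precisely the mathematical content of the paper's auxiliary lemmas: you must show that the set of directions $|\varphi\rangle\in\mathbb{C}^{2}$ for which $\langle\varphi|\psi\rangle$ is a product vector is a \emph{proper} algebraic subset (hence finite on the Bloch sphere), and similarly on $B$. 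Your justification --- ``for otherwise $|\psi\rangle$ would factorise on that cut'' --- is not an argument: a single degenerate direction implies nothing; what is needed is that degeneracy of \emph{all} directions contradicts genuine entanglement, which the paper derives from a structure theorem for states whose $BC$ marginal admits an eigendecomposition into product vectors. Also, ``such a decomposition is essentially unique'' is false for a rank-two positive operator; the correct statement is that the weights are determined once the (linearly independent) directions are fixed, so each no-signaling equality leaves one free scalar per row or column, and these scalars must then be chained through shared nonzero entries and killed by normalization. None of this is fatal, but it is the actual proof rather than a detail.

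The second gap is at the end. You infer the existence of a functional uniquely maximized at $\Sigma$ from Hahn--Banach plus the claim that $\Sigma$ is an \emph{exposed} extreme point ``by the same rigidity.'' Extremality does not imply exposedness for compact convex sets, and inflexibility does not obviously yield it either; you would have to prove exposedness separately. The paper avoids this by exhibiting the functional explicitly, $F_\Sigma(\tilde\Sigma)=\sum_{a,b,x,y}\mathrm{Tr}(\rho_{ab|xy}\tilde\sigma_{ab|xy})$ with $\rho_{ab|xy}$ the normalized entries of $\Sigma$: its maximal value $4$ is attained exactly on assemblages similar to $\Sigma$, hence, by inflexibility, only at $\Sigma$ itself. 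You also never verify that LHS assemblages stay strictly below the maximum, which is needed for $F_\Sigma$ to be a steering inequality that $\Sigma$ \emph{violates}; this follows from the fact that a row containing two distinct rank-one states cannot arise from an LHS model, together with compactness of the LHS set.
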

\begin{sproof}
Since $|\psi^{(ABC)}\rangle$ is genuine three-party entangled there exists a PVM with elements $P^{(A)}_{a|0} =|\phi_{a|0}\rangle \langle \phi_{a|0}|$ on the subsystem A such that 
$\langle \phi_{a|0}|\psi^{(ABC)}\rangle$ are entangled and linearly independent (see Lemma 4 and 5 in \cite{Sup}). Therefore, one can choose a pair of PVMs with respective elements $Q^{(B)}_{b|0},Q^{(B)}_{b|1}$ on the subsystem B, such that each of the first two rows of $\Sigma^{(C)}$  consists of elements  proportional to normalised pure states which are all different. Moreover, there is an index $(b|y)$ for which $\sigma^{(C)}_{0b|0y}$ and $\sigma^{(C)}_{1b|0y}$ are not proportional to the same pure state (see Lemma 6 in \cite{Sup}). Choosing the second PVM on subsystem A such that its elements $P^{(A)}_{a|1}$ do not commute with the first, we obtain a column $(b|y)$ with the same property as the first and the second row i.e. 
having all the normalised elements pure and different. A detailed analysis of assemblages with such properties proves that $\Sigma^{(C)}$ is inflexible and hence extremal (see Supp. Material \cite{Sup}). Define now
\begin{equation}
 \rho_{ab|xy}=
\begin{cases}
0\ \ \ \mathrm{for}\ \sigma^{(C)}_{ab|xy}=0,\\
\frac{\sigma^{(C)}_{ab|xy}}{\mathrm{Tr}(\sigma^{(C)}_{ab|xy})} \ \ \  \mathrm{for}\ \sigma^{(C)}_{ab|xy}\neq 0.
\end{cases}
\end{equation}For any no-signaling assemblage $\tilde{\Sigma}^{(C)}$ consider
\begin{equation}\label{expression}
F_{\Sigma^{(C)}}(\tilde{\Sigma}^{(C)})=\sum_{a,b,x,y = 0,1}\mathrm{Tr}(\rho_{ab|xy}\tilde{\sigma}^{(C)}_{ab|xy}).
\end{equation}Observe that $F_{\Sigma^{(C)}}(\tilde{\Sigma}^{(C)})\leq 4$ and equality holds if and only if the no-signaling assemblage $\tilde{\Sigma}^{(C)}$ is similar to $\Sigma^{(C)}$, so by inflexibility the  maximal value of $F_{\Sigma^{(C)}}$ is uniquely obtained for $\Sigma^{(C)}$. Since any LHS assemblage is a convex combination of assemblages consisting of  {\it the same} pure state occupying four positions forming a rectangle, and the first and second row of $\Sigma^{(C)}$ consist of {\it different} rank one operators, $\Sigma^{(C)}$ is not an LHS assemblage and $\mathcal{C}_{\mathbf{LHS}}=\sup_{LHS}F_{\Sigma^{(C)}}(\Sigma^{(C)}_{LHS})<4$. Analyzing structure of the set of biseparale assemblages one can additionally show that $\mathcal{C}_{\mathbf{BIS}}=\sup_{BIS}F_{\Sigma^{(C)}}(\Sigma^{(C)}_{BIS})<4$ as form of $\Sigma^{(C)}$ dose not agree with possible form of extremal no-signaling assemblage that is biseparable (see \cite{Sup}).\color{black}
\end{sproof}

To find $\mathcal{C}_{\mathbf{LHS}}$\color{black}, the value $\sum_{a,b,x,y\in I(L)}\mathrm{Tr}(\rho_{ab|xy}|\phi\rangle \langle \phi|)$ is maximized over pure states $|\phi\rangle$ and deterministic boxes $L$, where $I(L)$ denotes the set of $(ab|xy)$ (forming a rectangle) for which $p(ab|xy)=1$ in $L$ - by convexity, optimization need be performed only over the extremal points \cite{Sup}. Optimization over biseparabe assemblages boils down to optimization over three types of assemblages - compare with \cite{Sup}.\color{black}


\begin{example}
\label{eq:Example-GHZ}
Consider a GHZ three qubit state $|\psi^{(ABC)}\rangle=\frac{1}{\sqrt{2}}\left(|000\rangle +|111\rangle\right)$. Let $\Sigma^{(C)}_{GHZ}$ be given by $\sigma^{(C)}_{ab|xy}=\mathrm{Tr}_{AB}(P^{(A)}_{a|x}\otimes Q^{(B)}_{b|y}\otimes \mathds{1}|\psi^{(ABC)}\rangle \langle \psi^{(ABC)}|)$ with $P^{(A)}_{0|0}=Q^{(B)}_{0|0}=|+\rangle \langle +|$ and $P^{(A)}_{0|1}=Q^{(B)}_{0|1}=|0\rangle \langle 0|$, i.e.
\begin{equation}\nonumber
\Sigma^{(C)}_{GHZ}=\frac{1}{4}\begin{pmatrix}
\begin{array}{cc|cc}
 |+\rangle \langle +| &  |-\rangle \langle -| & |0\rangle \langle 0| &  |1\rangle \langle 1| \\  
 |-\rangle \langle -| & |+\rangle \langle +|& |0\rangle \langle 0|  &|1\rangle \langle 1|\\ \hline
 |0\rangle \langle 0| & |0\rangle \langle 0| &2|0\rangle \langle 0|&  0  \\
   |1\rangle \langle 1| &  |1\rangle \langle 1|& 0 & 2|1\rangle \langle 1| 
\end{array}
\end{pmatrix}.
\end{equation}Assemblage $\Sigma^{(C)}_{GHZ}$ is inflexible by Proposition (\ref{proposition_genuine}) and 
$
\mathcal{C}_{\mathbf{LHS}}=\sup_{|\phi\rangle}\mathrm{Tr}\left[(3|0\rangle \langle 0|+|+\rangle \langle +|)|\phi\rangle \langle \phi|\right]=\frac{4+\sqrt{10}}{2}
$ while

$
\mathcal{C}_{\mathbf{BIS}}=\frac{5+\sqrt{5}}{2}
$(see detailed calculation in Supp. Material \cite{Sup}).
\end{example}\color{black}

To investigate the result further, let us fix $|\psi^{(ABC)}\rangle$ together with its related $\Sigma^{(C)}$ obtained using PVMs $P^{(A)}_{a|x}, Q^{(B)}_{b|y}$ as in Proposition \ref{proposition_genuine}. Consider an arbitrary pure state $|\tilde{\psi}^{(ABC)}\rangle\in \mathbb{C}^{2}\otimes \mathbb{C}^{2}\otimes \mathbb{C}^{d}$ and corresponding assemblage $\tilde{\Sigma}^{(C)}$ given by two pairs of PVMs $\tilde{P}^{(A)}, \tilde{Q}^{(B)}$ with two outcomes as $\tilde{\sigma}^{(C)}_{ab|xy}=\mathrm{Tr}_{AB}(\tilde{P}^{(A)}_{a|x}\otimes \tilde{Q}^{(B)}_{b|y}\otimes \mathds{1}|\tilde{\psi}^{(ABC)}\rangle \langle \tilde{\psi}^{(ABC)}|)$. One can see that $F_\Sigma(\tilde{\Sigma})=4$ iff $|\tilde{\psi}^{(ABC)}\rangle=U_A\otimes U_B\otimes \mathds{1}|\psi^{(ABC)}\rangle$, $\tilde{P}^{(A)}_{a|x}=U_AP^{(A)}_{a|x}U_A^\dagger$ and $\tilde{Q}^{(B)}_{b|y}=U_BQ^{(B)}_{b|y}U_B^\dagger$ for some local unitaries $U_A,U_B$. This leads to the following \textit{self-testing} result (see \cite{Sup} for the proof).

\begin{proposition}
For any pure state $|\tilde{\psi}^{(A'B'C)}\rangle\in \mathbb{C}^{d_{A'}}\otimes \mathbb{C}^{d_{B'}}\otimes \mathbb{C}^{d_{C}}$ and assemblage $\tilde{\Sigma}^{(C)}$ with elements $\tilde{\sigma}^{(C)}_{ab|xy}=\mathrm{Tr}_{A'B'}\left(\tilde{P}^{(A')}_{a|x}\otimes \tilde{Q}^{(B')}_{b|y}\otimes \mathds{1}|\tilde{\psi}^{(A'B'C)}\rangle \langle \tilde{\psi}^{(A'B'C)}| \right)$, $F_{\Sigma^{(C)}}(\tilde{\Sigma}^{(C)})=4$ iff $V_{A'} \otimes V_{B'} \otimes \mathds{1}|\tilde{\psi}^{(A'B'C)}\rangle=|\phi^{junk}_{A''B''}\rangle|\psi^{(ABC)}\rangle$, and $(V_{A'} \otimes V_{B'} \otimes \mathds{1})(\tilde{P}^{(A')}_{a|x}\otimes \tilde{Q}^{(B')}_{b|y}\otimes \mathds{1})|\tilde{\psi}^{(A'B'C)}\rangle=|\phi_{A''B''}\rangle(P^{(A)}_{a|x}\otimes Q^{(A)}_{b|y}\otimes \mathds{1})|\psi^{(ABC)}\rangle$ where $V_{A'},V_{B'}$ are some local isometries and $|\psi^{(ABC)}\rangle$ together with $P^{(A)}_{a|x},  Q^{(A)}_{b|y}$ are as in Proposition \ref{proposition_genuine}, while $|\phi^{junk}_{A'' B''} \rangle$ is some irrelevant junk state. \color{black}
\end{proposition}

{\it Discussion.-} In this paper we have proved that in the most general scenario of sequential measurements it is impossible to quantumly realise non-local extremal points of the Time-Ordered No-Signaling polytope. This answers the open question posed in \cite{RTHHPRL}. On the contrary, if one of the parties has access to at least one fully trusted qubit system, we have shown that one can obtain quantum assemblages which are extremal within the set of no-signaling assemblages. While this opens the path towards security proofs for semi-device-independent cryptographic protocols against general adversaries, numerous interesting open questions arise for future research. In particular, in the setting of sequential Bell non-locality, the immediate question is to extend the result to the multi-partite setting, as well as to the scenario of single-party contextuality. Quantitative bounds on the distance of quantum boxes from extremal time-ordered no-signaling ones should be obtained utilizing the methods involved in the proof of Theorem \ref{thm:qext-seq}, i.e., by lower bounding the minimum eigenvalue of the matrix $\tilde{A}^{\top}_P \tilde{A}_P$ associated with the extremal box $P$.  Is it possible to generalise the main result of the Proposition \ref{proposition_genuine} by showing that 
for any genuinely entangled 3-party $d_{A} \otimes d_{B} \otimes d_{C}$ state, 
there are some $k_{A}$ PVMs (or POVMs) with $s_{A}$ outcomes on system A and $k_{B}$ PVMs (or POVMs) with $s_{B}$ outcomes
on system B such that the corresponding assemblage is again extremal in the set of all no-signaling assemblages? If yes, what are the minimal number of settings and outcomes? Clear extensions to many-party scenarios should naturally be explored. 
Finally, as clearly not all extremal no-signaling assemblages admit quantum realisation (e.g. $\sigma^{(C)}_{ab|xy}:=p^{(AB)}(ab|xy)|\phi^{(C)}\rangle \langle \phi^{(C)}|$ with $p^{(AB)}(ab|xy)$ coming from a PR-box), it would be natural to ask for a characterisation of such extremal points and information-theoretic consequences thereof.


{\it Acknowledgments.-}
R. R. acknowledges support from the Start-up Fund "Device-Independent Quantum Communication Networks" from The University of Hong Kong, the Seed Fund "Security of Relativistic Quantum Cryptography" (Grant No. $201909185030$) and the Early Career Scheme (ECS) grant "Device-Independent Random Number Generation and Quantum Key Distribution with Weak Random Seeds" (Grant No. $27210620$). This work was supported by the National Natural Science Foundation of China through grant 11675136, the Hong Kong Research Grant Council through grant 17300918, and the John Templeton Foundation through grants 60609, Quantum Causal Structures, and 61466, The Quantum Information Structure of Spacetime (qiss.fr). The opinions expressed in this publication are those of the authors and do not necessarily reflect the views of the John Templeton Foundation. M.B., R.R.R. and P.H. acknowledge support by the Foundation for Polish Science (IRAP project, ICTQT, contract no. 2018/MAB/5, co-financed by EU within the Smart Growth Operational Programme).


{\it Competing Interests.-} The authors declare that there are no competing interests.

{\it  Data Availability.-} Data available within the article and supplementary materials.

{\it Author Contributions.-} The authors contributed equally to this work.


\section{Supplemental Material: Extremal non-signaling non-local boxes cannot be realized in quantum theory even in sequential Bell scenarios }


Here, we give the formal proofs of the Theorem $1$ from the main text that quantum theory does not allow for the realisation of extremal non-local non-signalling boxes even in sequential Bell scenarios. 


\subsection{Preliminaries}

While the scenario of sequential non-locality is the subject of this section, for didactical purposes we shall start with the usual two-party Bell non-locality scenario in this preliminary section.

Consider a two-party Bell experiment. Suppose one party Alice chooses to measure one of $m_A$ inputs $i_A = 1, \dots, m_A$, and obtains one of $d_{A, i_A}$ outputs $o_A \in \{1, \dots, d_{A, i_A}\}$. Similarly, the other party Bob chooses to measure one of $m_B$ inputs $i_B = 1, \dots, m_B $, and obtains one of $d_{B, i_B}$ outputs $o_B \in \{1, \dots, d_{B, i_B}\}$ outputs. Such a Bell scenario is denoted by $\textbf{B}(2; m_A, \vec{d}_A; m_B, \vec{d}_B)$, where $\vec{d}_A = (d_{A,1}, \dots, d_{A, m_A})$ and $\vec{d}_B = (d_{B,1}, \dots, d_{B, m_B})$. This notation can also be shortened to $\textbf{B}(\vec{d}_A, \vec{d}_B)$ for simplicity. The joint probability of obtaining the outcomes $(o_A, o_B)$ given the measurement settings $(i_A, i_B)$ is denoted as $P_{O_A, O_B | I_A, I_B}(o_A, o_B | i_A, i_B)$. We may view these $n = \left(\sum_{i_A = 1}^{m_A} d_{A, i_A} \right) \left(\sum_{i_B = 1}^{m_B} d_{B, i_B} \right)$ probabilities as forming the components of a vector $P_{O_A, O_B | I_A, I_B} = | P \rangle$ in $\mathds{R}^n$, where the inputs and outputs are implicit, and the probabilities are also described as forming a box $P$.

The box $P$ is a valid normalized no-signaling box, satisfying the no-signaling constraints of relativity and the normalization of probabilities, if it obeys the constraints of
\begin{enumerate}
\item Non-negativity: $P_{O_A, O_B | I_A, I_B}(o_A, o_B | i_A, i_B) \geq 0$ for all $o_A, o_B, i_A, i_B$, 

\item Normalization: $\sum_{o_A = 1}^{d_{A, i_A}} \sum_{o_B = 1}^{d_{B, i_B}} P_{O_A, O_B | I_A, I_B}(o_A, o_B | i_A, i_B)  = 1$ for all $i_A, i_B$,

\item No-Signaling: 
\begin{widetext}
\begin{eqnarray}
\sum_{o_A = 1}^{d_{A, i_A}} P_{O_A, O_B | I_A, I_B}(o_A, o_B | i_A, i_B) &=& \sum_{o'_A = 1}^{d_{A, i'_A}} P_{O_A, O_B | I_A, I_B}(o'_A, o_B | i'_A, i_B), \quad \text{for all} \; \; i_A, i'_A, o_B, i_B, \nonumber \\
\sum_{o_B = 1}^{d_{B, i_B}} P_{O_A, O_B | I_A, I_B}(o_A, o_B | i_A, i_B) &=& \sum_{o'_B = 1}^{d_{B, i'_B}} P_{O_A, O_B | I_A, I_B}(o_A, o'_B | i_A, i'_B), \quad \text{for all} \; \; i_B, i'_B, o_A, i_A.
\end{eqnarray}
\end{widetext}
\end{enumerate}

The convex hull of all boxes $P$ that satisfy the above constraints forms the No-Signaling Polytope of the Bell scenario $\textbf{NS}\left[\textbf{B}(2; m_A, \vec{d}_A; m_B, \vec{d}_B)\right]$. 

The boxes within the No-Signaling polytope that additionally satisfy the integrality constraint given by
\begin{enumerate}
\setcounter{enumi}{3}
\item Integrality $P_{O_A, O_B | I_A, I_B}(o_A, o_B | i_A, i_B) \in  \{0,1\}$ for all $o_A, o_B, i_A, i_B$,
\end{enumerate}
are said to be Local Deterministic Boxes (LDBs). The convex hull of these LDBs forms the classical or Bell polytope denoted by $\textbf{C}\left[\textbf{B}(2; m_A, \vec{d}_A; m_B, \vec{d}_B)\right]$. This is the set of all correlations obtainable from local hidden variable theories. 

The set of Quantum Correlations denoted by $\textbf{Q}\left[\textbf{B}(2; m_A, \vec{d}_A; m_B, \vec{d}_B)\right]$ also lies within the No-Signaling polytope. This set consists of boxes $P$ where each component $P_{O_A, O_B | I_A, I_B}(o_A, o_B | i_A, i_B)$ is obtained as:
\begin{equation}
\label{eq:qcorr-def}
P_{O_A, O_B | I_A, I_B}(o_A, o_B | i_A, i_B) = \text{Tr}\left[ \rho \left(E^{A}_{i_A, o_A} \otimes E^{B}_{i_B, o_B} \right) \right]
\end{equation}
for some quantum state $\rho \in \mathcal{H}_d$ of some arbitrary dimension $d$, and sets of projection operators $\{E^{A}_{i_A, o_A}\}$ for Alice and $\{E^{B}_{i_B, o_B}\}$ for Bob. Notably, the measurement operators satisfy the requirements of (i) Hermiticity: $\left(E^{A}_{i_A, o_A}\right)^{\dagger} = E^{A}_{i_A, o_A}$ for all $i_A, o_A$, and $\left(E^{B}_{i_B, o_B}\right)^{\dagger} = E^{B}_{i_B, o_B}$, for all $i_B, o_B$, (ii) Orthogonality: $E^{A}_{i_A, o_A} E^{A}_{i_A, o'_A} = \delta_{o_A, o'_A} E^{A}_{i_A, o_A}$ for all $i_A$, and $E^{B}_{i_B, o_B} E^{B}_{i_B, o'_B} = \delta_{o_B, o'_B} E^{B}_{i_B, o_B}$ for all $i_B$, and (iii) Completeness: $\sum_{o_A} E^{A}_{i_A, o_A} = \mathds{1}$ for all $i_A$ and $\sum_{o_B} E^{B}_{i_B, o_B} = \mathds{1}$ for all $i_B$. The set $\textbf{Q}\left[\textbf{B}(2; m_A, \vec{d}_A; m_B, \vec{d}_B)\right]$ is convex but not a polytope. We have the inclusions $\textbf{C}\left[\textbf{B}(2; m_A, \vec{d}_A; m_B, \vec{d}_B)\right] \subseteq \textbf{Q}\left[\textbf{B}(2; m_A, \vec{d}_A; m_B, \vec{d}_B)\right] \subseteq \textbf{NS}\left[\textbf{B}(2; m_A, \vec{d}_A; m_B, \vec{d}_B)\right]$.

In \cite{NPA}, a hierarchy of semi-definite programs was formulated for optimization with non-commuting variables, and this Navascues-Pironio-Acin (NPA) hierarchy is ubiquitously employed to efficiently determine upper bounds to the quantum violation for general Bell inequalities. The hierarchy was also shown to converge to a set $\textbf{Q}^{pr}\left[\textbf{B}(2; m_A, \vec{d}_A; m_B, \vec{d}_B)\right]$, which is the set consisting of boxes $P$ where each component $P_{O_A, O_B | I_A, I_B}(o_A, o_B | i_A, i_B)$ is obtained as:
\begin{equation}
P_{O_A, O_B | I_A, I_B}(o_A, o_B | i_A, i_B) = \text{Tr}\left[ \rho \left(E^{A}_{i_A, o_A}  E^{B}_{i_B, o_B} \right) \right],
\end{equation}
with $\left[E^{A}_{i_A, o_A}, E^{B}_{i_B, o_B} \right] = 0$ for all $i_A, o_A, i_B, o_B$. The above differs from Eq.(\ref{eq:qcorr-def}) in that the strict requirement of tensor product structure is replaced with the requirement of only commutation between different parties' measurements. It is clear that $\textbf{Q}\left[\textbf{B}(2; m_A, \vec{d}_A; m_B, \vec{d}_B)\right] \subseteq \textbf{Q}^{pr}\left[\textbf{B}(2; m_A, \vec{d}_A; m_B, \vec{d}_B)\right]$. 

In the NPA hierarchy, one considers sets consisting of sequences of product projection operators $S_1 = \{ \mathds{1} \} \cup \{E^{A}_{i_A, o_A} \} \cup \{E^{B}_{i_B, o_B}\}$, $S_2 = S_1 \cup \{E^{A}_{i_A, o_A} E^{B}_{i_B, o_B}\}$, etc. The convex sets corresponding to different levels of this hierarchy $\textbf{Q}_{l}\left[\textbf{B}(2; m_A, \vec{d}_A; m_B, \vec{d}_B)\right]$ are constructed by testing for the existence of a certificate $\Gamma^{l}$ associated to the set of operators $S_l$ by means of a semi-definite program. This certificate $\Gamma^{l}$ corresponding to level $l$ of the NPA hierarchy is a $\left|S_l \right| \times |S_l|$ matrix whose rows and columns are indexed by the operators in the set $S_l$. The certificate $\Gamma^{l}$ is required to be a complex Hermitian positive semi-definite matrix satisfying the following constraints on its entries: (i) $\Gamma^{l}_{\mathds{1}, \mathds{1}} = 1$, and (ii) $\Gamma^{l}_{Q, R} = \Gamma^{l}_{S, T}$ if $Q^{\dagger} R = S^{\dagger} T$. The latter condition in particular imposes that $\Gamma^{l}_{\mathds{1}, E^{A}_{i_A, o_A} E^{B}_{i_B, o_B}} = \Gamma^{l}_{E^{A}_{i_A, o_A}, E^{B}_{i_B, o_B}} = \Gamma^{l}_{E^{A}_{i_A, o_A} E^{B}_{i_B, o_B}, E^{A}_{i_A, o_A} E^{B}_{i_B, o_B}} = P_{O_A, O_B | I_A, I_B}(o_A, o_B | i_A, i_B)$. \newline

\subsection{No quantum realization of sequential non-local no-signaling boxes}
\label{sec:Qext-seqnl}
In general, one may consider the scenario of sequential Bell non-locality, where each party performs measurements on their system in a sequential manner in multiple runs of the Bell experiment, leading to a time-ordered no-signaling structure. This scenario is in many ways richer than the single-run Bell non-locality scenario, with the appearance of novel phenomena such as 'hidden non-locality' \cite{Popescu}, wherein some quantum states only display local correlations in single-run Bell experiments while exhibiting non-local correlations when two measurements are performed in sequence by each party. Now, one party Alice chooses to measure one of $m_A^{(j_A)}$ inputs $i_A^{(j_A)} = 1, \dots, m_A^{(j_A)}$ in the $j_A$-th run of the Bell experiment, and obtains one of $d^{(j_A)}_{A, i_A}$ outputs $o_A^{(j_A)} \in \{1, \dots, d^{(j_A)}_{A, i_A}\}$. Similarly, the other party Bob chooses to measure in the $j_B$-th run, one of $m^{(j_B)}_B$ inputs $i^{(j_B)}_B = 1, \dots, m^{(j_B)}_B $, and obtains one of $d^{(j_B)}_{B, i_B}$ outputs $o^{(j_B)}_B \in \{1, \dots, d^{(j_B)}_{B, i_B}\}$ outputs. Here, $j_A = 1, \dots, N_A$ and $j_B = 1, \dots, N_B$ where $N_A, N_B$ denote the number of measurement runs of Alice and Bob, respectively. Such a sequential Bell scenario is denoted by $\textbf{B}\left(2; (\vec{m}_A, \vec{d}_A); (\vec{m}_B, \vec{d}_B) \right)$, where $\vec{m}_A := (m_A^{(1)}, \dots, m_A^{(N_A)})$, $\vec{d}_A := \left( \left(d^{(1)}_{A,1}, \dots, d^{(1)}_{A, m^{(1)}_A}\right), \dots, \left(d^{(N_A)}_{A,1}, \dots, d^{(N_A)}_{A, m^{(N_A)}_A} \right) \right)$. We will simplify the notation by choosing $m_A^{(j_A)} = m_B^{(j_B)} = m$, $d_A^{(j_A)} = d_B^{(j_B)} = d$ for all $j_A, j_B$, and $N_A = N_B = N$ where this does not affect the generality of the argument. The joint probability of obtaining the outcomes $\textbf{o}_A := (o_A^{(1)}, \dots, o_A^{(N)})$ for Alice, and $\textbf{o}_B := (o_B^{(1)}, \dots, o_B^{(N)})$ for Bob, for given measurement settings $\textbf{i}_A := (i_A^{(1)}, \dots, i_A^{(N)})$ and $\textbf{i}_B := (i_B^{(1)}, \dots, i_B^{(N)})$ respectively, will be denoted by $P_{\textbf{O}_A, \textbf{O}_B | \textbf{I}_A, \textbf{I}_B}(\textbf{o}_A, \textbf{o}_B | \textbf{i}_A, \textbf{i}_B)$. As before, we may view these $n_{seq} := \left(m d \right)^{2N}$ probabilities as forming the components of a vector $P_{\textbf{O}_A, \textbf{O}_B | \textbf{I}_A, \textbf{I}_B} = | P \rangle$ in $\mathds{R}^{n_{seq}}$, and are described as forming a box $P$. 

We consider the general time-ordered no-signaling boxes in the scenario of sequential non-locality as obeying the time-ordered no-signaling constraints in addition to those of normalization and non-negativity. A box $P$ is a valid time-ordered no-signaling box, if it obeys the constraints:  
\begin{enumerate}
\item Non-negativity: $P_{\textbf{O}_A, \textbf{O}_B | \textbf{I}_A, \textbf{I}_B}(\textbf{o}_A, \textbf{o}_B | \textbf{i}_A, \textbf{i}_B) \geq 0$ for all $\textbf{o}_A, \textbf{o}_B, \textbf{i}_A, \textbf{i}_B$, 

\item Normalization: $\sum_{\textbf{o}_A, \textbf{o}_B}  P_{\textbf{O}_A, \textbf{O}_B | \textbf{I}_A, \textbf{I}_B}(\textbf{o}_A, \textbf{o}_B | \textbf{i}_A, \textbf{i}_B)  = 1$ for all $\textbf{i}_A, \textbf{i}_B$,

\item Time-Ordered No-Signaling: 
\begin{widetext}
\begin{eqnarray}
\sum_{o_A^{(k)}, \dots, o_A^{(N)}} P_{\textbf{O}_A, \textbf{O}_B | \textbf{I}_A, \textbf{I}_B}(\textbf{o}_A, \textbf{o}_B | \textbf{i}_A, \textbf{i}_B), \quad \text{is independent of} \; i_A^{(k)}, \dots, i_A^{(N)} \; \text{for all} \; \; 1 \leq k \leq N, \textbf{i}_A, \textbf{o}_A, \textbf{i}_B, \textbf{o}_B, \nonumber \\
\sum_{o_B^{(k)}, \dots, o_B^{(N)}} P_{\textbf{O}_A, \textbf{O}_B | \textbf{I}_A, \textbf{I}_B}(\textbf{o}_A, \textbf{o}_B | \textbf{i}_A, \textbf{i}_B), \quad \text{is independent of} \; i_B^{(k)}, \dots, i_B^{(N)} \; \text{for all} \; \; 1 \leq k \leq N, \textbf{i}_A, \textbf{o}_A, \textbf{i}_B, \textbf{o}_B.
\end{eqnarray}
\end{widetext}
\end{enumerate}
The above time-ordered no-signaling constraints capture the restriction that measurement outcomes for each party do not input on the future inputs chosen by that party, as well as any of the inputs chosen by the other party. The convex hull of all boxes $P$ that satisfy the above constraints forms the Time-Ordered No-Signaling Polytope of the Bell scenario and is denoted by $\textbf{TONS}\left[\textbf{B}\left(2; (\vec{m}_A, \vec{d}_A); (\vec{m}_B, \vec{d}_B) \right)\right]$. One may also consider corresponding constraints in the situation when the input $i_A^{(k)}$ is allowed to influence the outputs $o_B^{(k+1)}, \dots, o_B^{(N)}$ for any $1 \leq k \leq N$, and vice versa. In this scenario, the spatial separation between the parties only restricts communication within each round, while input choices of any one run are able to influence the outputs of either party in subsequent runs.  The corresponding extension of the Time-Ordered No-Signaling polytope in that scenario is defined by fewer restrictions than the $\textbf{TONS}\left[\textbf{B}\left(2; (\vec{m}_A, \vec{d}_A); (\vec{m}_B, \vec{d}_B) \right)\right]$ polytope considered in this paper. 

The boxes within the Time-Ordered No-Signaling polytope that in addition satisfy the integrality constraint given by 
\begin{enumerate}
\setcounter{enumi}{3}
\item Integrality $P_{\textbf{O}_A, \textbf{O}_B | \textbf{I}_A, \textbf{I}_B}(\textbf{o}_A, \textbf{o}_B | \textbf{i}_A, \textbf{i}_B) \in  \{0,1\}$ for all $\textbf{o}_A, \textbf{o}_B, \textbf{i}_A, \textbf{i}_B$,
\end{enumerate}
are said to be Time-Ordered Local Deterministic boxes. The convex hull of these boxes forms the Time-Ordered Local polytope denoted by $\textbf{TOLoc}\left[\textbf{B}\left(2; (\vec{m}_A, \vec{d}_A); (\vec{m}_B, \vec{d}_B) \right)\right]$ \cite{GWCAN}. This is the set of correlations obtainable in time-ordered local models, i.e., where the response of each party for its $j$-th measurement depends only on a hidden variable $\lambda$, the first $j$ measurement settings and the first $j-1$ measurement outcomes of that party.  

The set of sequential Quantum Correlations denoted by $\textbf{Qseq}\left[\textbf{B}\left(2; (\vec{m}_A, \vec{d}_A); (\vec{m}_B, \vec{d}_B) \right)\right]$ also lies within the Time-Ordered No-Signaling polytope. This set consists of boxes $P$ where each component $P_{\textbf{O}_A, \textbf{O}_B | \textbf{I}_A, \textbf{I}_B}(\textbf{o}_A, \textbf{o}_B | \textbf{i}_A, \textbf{i}_B)$ is obtained as
\begin{eqnarray}
P_{\textbf{O}_A, \textbf{O}_B | \textbf{I}_A, \textbf{I}_B}(\textbf{o}_A, \textbf{o}_B | \textbf{i}_A, \textbf{i}_B) = \langle \psi | \textbf{E}^A_{\textbf{i}_A, \textbf{o}_A} \otimes \textbf{E}^B_{\textbf{i}_B, \textbf{o}_B} | \psi \rangle. \nonumber \\
\end{eqnarray}
Here $|\psi \rangle \in \mathds{C}^{d_{seq}}$ for some arbitrary dimension $d_{seq}$, and $\textbf{E}^A_{\textbf{i}_A, \textbf{o}_A}, \textbf{E}^B_{\textbf{i}_B, \textbf{o}_B}$ are projective measurements defined by \cite{BBS}
\begin{widetext}
\begin{eqnarray}
\textbf{E}^J_{\textbf{i}_J, \textbf{o}_J} := \sum_{\alpha^{(1)}_J, \dots, \alpha^{(N)}_J} \left( K^{J}_{i^{(1)}_J, o^{(1)}_J, \alpha^{(1)}_J} \right)^{\dagger} \dots  \left( K^{J}_{i^{(N)}_J, o^{(N)}_J, \alpha^{(N)}_J} \right)^{\dagger} K^{J}_{i^{(N)}_J, o^{(N)}_J, \alpha^{(N)}_J} \dots K^{J}_{i^{(1)}_J, o^{(1)}_J, \alpha^{(1)}_J}, 
\end{eqnarray}
\end{widetext}
for $J = A, B$ with sets of Kraus operators $\bigg\{K^{J}_{i^{(j)}_J, o^{(j)}_J, \alpha^{(j)}_J} \bigg\}$ corresponding to the $j$-th run of party $J$ satisfying
\begin{eqnarray}
\sum_{o^{(j)}_J, \alpha^{(j)}_J} \left( K^{J}_{i^{(j)}_J, o^{(j)}_J, \alpha^{(j)}_J} \right)^{\dagger} K^{J}_{i^{(j)}_J, o^{(j)}_J, \alpha^{(j)}_J} = \mathbf{1}, \; \; \forall i^{(j)}_J.
\end{eqnarray}
Here, the sum over $\alpha^{(j)}_J$ is incorporated to allow for the fact that multiple Kraus operators may be mapped to a single output $o^{(j)}_J$ \cite{BBS}. Being projective, the measurement operators can be shown to satisfy 
\begin{equation}
\label{eq:Proj}
\textbf{E}^J_{\textbf{i}_J, \textbf{o}_J} \textbf{E}^J_{\textbf{i}_J, \textbf{o'}_J} = \delta_{\textbf{o}_J, \textbf{o'}_J} \textbf{E}^J_{\textbf{i}_J, \textbf{o}_J} \; \forall \textbf{i}_J, \textbf{o}_J, \textbf{o'}_J
\end{equation}
They also satisfy the condition of sequential measurement: 
\begin{eqnarray}
\label{eq:SM2}
\sum_{o^{(k)}_J, \dots, o^{(N)}_J}  \textbf{E}^J_{\textbf{i}_J, \textbf{o}_J} = \sum_{o^{(k)}_J, \dots, o^{(N)}_J}  \textbf{E}^J_{\textbf{i'}_J, \textbf{o}_J} \nonumber \\ \forall 2 \leq k \leq N, o^{(1)}_J, \dots, o^{(k-1)}_J, \nonumber \\
 \forall \textbf{i}_J, \textbf{i'}_J \; \text{s.t.} \; i^{(j)}_J = i'^{(j)}_J \; (1 \leq j \leq k-1)
\end{eqnarray}
The projective measurement operators also satisfy a condition analogous to the 'Local Orthogonality' condition \cite{LO}:
\begin{eqnarray}
\label{eq:LO}
&&\textbf{E}^J_{\textbf{i}_J, \textbf{o}_J} \textbf{E}^J_{\textbf{i'}_J, \textbf{o'}_J} = 0 \; \forall \textbf{i}_J, \textbf{o}_J, \textbf{i'}_J, \textbf{o'}_J \; \text{s.t.} \nonumber \\
&&(i^{(1)}_J, \dots, i^{(k)}_J) = (i'^{(1)}_J, \dots, i'^{(k)}_J) \;  \wedge \nonumber \\ &&(o^{(1)}_J, \dots, o^{(k)}_J) \neq (o'^{(1)}_J, \dots, o'^{(k)}_J) \; \text{for} \; 1 \leq k \leq N. \nonumber \\
\end{eqnarray}

The set $\textbf{Qseq}\left[\textbf{B}\left(2; (\vec{m}_A, \vec{d}_A); (\vec{m}_B, \vec{d}_B) \right)\right]$ is convex but not a polytope. We have the inclusions $\textbf{TOLoc}\left[\textbf{B}\left(2; (\vec{m}_A, \vec{d}_A); (\vec{m}_B, \vec{d}_B) \right)\right] \subseteq \textbf{Qseq}\left[\textbf{B}\left(2; (\vec{m}_A, \vec{d}_A); (\vec{m}_B, \vec{d}_B) \right)\right] \subseteq \textbf{TONS}\left[\textbf{B}\left(2; (\vec{m}_A, \vec{d}_A); (\vec{m}_B, \vec{d}_B) \right)\right]$.

The general hierarchy of semi-definite programs for polynomial optimization with non-commuting variables was introduced by Pironio, Navascu\'{e}s and Ac\'{i}n in \cite{NPA2}, which in principle covers also the scenario of sequential Bell non-locality. Nevertheless, this specific case was explicitly handled recently in \cite{BBS} whose treatment we may follow. The hierarchy was defined in an exactly analogous fashion to the NPA hierarchy for single-run Bell experiments treating the sequences of measurements as single measurements, i.e., based on the projection operators $\textbf{E}^A_{\textbf{i}_A, \textbf{o}_A}, \textbf{E}^B_{\textbf{i}_B, \textbf{o}_B}$ in place of the operators $E^{A}_{i_A, o_A}  E^{B}_{i_B, o_B}$. The additional structure of sequential measurements is taken care of by including linear constraints (\ref{eq:Proj}), (\ref{eq:SM2}), (\ref{eq:LO}) in the certificate. The hierarchy was in this case shown to converge to the set $\textbf{Qseq}^{pr}\left[\textbf{B}\left(2; (\vec{m}_A, \vec{d}_A); (\vec{m}_B, \vec{d}_B) \right)\right]$ consisting of boxes $P$ where each component $P_{\textbf{O}_A, \textbf{O}_B | \textbf{I}_A, \textbf{I}_B}(\textbf{o}_A, \textbf{o}_B | \textbf{i}_A, \textbf{i}_B)$ is obtained as 
\begin{eqnarray}
P_{\textbf{O}_A, \textbf{O}_B | \textbf{I}_A, \textbf{I}_B}(\textbf{o}_A, \textbf{o}_B | \textbf{i}_A, \textbf{i}_B) = \langle \psi | \textbf{E}^A_{\textbf{i}_A, \textbf{o}_A}  \textbf{E}^B_{\textbf{i}_B, \textbf{o}_B} | \psi \rangle, \nonumber \\
\end{eqnarray}
where the measurements of the different parties commute, i.e., $\left[ \textbf{E}^A_{\textbf{i}_A, \textbf{o}_A}, \textbf{E}^B_{\textbf{i}_B, \textbf{o}_B} \right] = 0$ for all $\textbf{i}_A, \textbf{o}_A, \textbf{i}_B, \textbf{o}_B$.

The Time-Ordered No-Signaling polytope consisting of boxes obeying the non-negativity, normalization and time-ordered no-signaling constraints can be expressed succintly as
\begin{eqnarray}
\label{eq:TONS-def}
\textbf{TONS}\left[\textbf{B}\left(2; (\vec{m}_A, \vec{d}_A); (\vec{m}_B, \vec{d}_B) \right)\right] = \nonumber \\ 
\big\{ | P \rangle \in \mathds{R}^{n_{seq}} : A \cdot | P \rangle \leq | b \rangle \big\}.
\end{eqnarray} 
Here, the matrix $A$ and vector $|b \rangle$ encode the non-negativity, normalization and time-ordered no-signaling constraints, and the box $P$ is written as a vector $| P \rangle$ of length $n_{seq} = \left(m d \right)^{2N}$. 

We are interested in the question whether quantum correlations can realize the extremal boxes of the Time-Ordered No-Signaling polytope, where an extremal box or a vertex is one that cannot be expressed as a non-trivial convex combination of boxes in the TONS polytope. An equivalent mathematical characterization of vertices is based on the following fact:
\begin{fact}
A box $P$ is a vertex of the Time-Ordered No-Signaling polytope $\textbf{TONS}\left[\textbf{B}\left(2; (\vec{m}_A, \vec{d}_A); (\vec{m}_B, \vec{d}_B) \right)\right]$ if and only if $rank(\tilde{A}_P) = n_{seq}$, where $\tilde{A}_P$ denotes the sub-matrix of the matrix $A$ from (\ref{eq:TONS-def}) consisting of those row vectors $A_i$ for which $A_i \cdot | P \rangle = |b \rangle_i$. 
\end{fact}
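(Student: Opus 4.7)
The plan is to prove both directions by the classical polyhedral argument characterising vertices of polytopes as basic feasible solutions. Since the paper already casts $\textbf{TONS}$ as the feasible region of a linear system $A|P\rangle\leq|b\rangle$ in~(\ref{eq:TONS-def}), the Fact reduces to the standard statement: $P$ is a vertex iff the rows of $A$ active at $P$ have full column rank.

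For the forward direction ($P$ a vertex implies $\mathrm{rank}(\tilde{A}_P)=n_{seq}$), I would argue by contrapositive. If $\mathrm{rank}(\tilde{A}_P)<n_{seq}$, pick $|\Delta\rangle\neq 0$ with $\tilde{A}_P|\Delta\rangle=0$. By choice of $|\Delta\rangle$, every saturated row is annihilated, so $A_i\cdot(|P\rangle\pm\epsilon|\Delta\rangle)=|b\rangle_i$. For every non-saturated row $A_i\cdot|P\rangle<|b\rangle_i$, continuity gives $A_i\cdot(|P\rangle\pm\epsilon|\Delta\rangle)<|b\rangle_i$ for all sufficiently small $\epsilon>0$. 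Hence both $|P\rangle\pm\epsilon|\Delta\rangle$ lie in $\textbf{TONS}$, and
\begin{equation}
|P\rangle=\frac{1}{2}(|P\rangle+\epsilon|\Delta\rangle)+\frac{1}{2}(|P\rangle-\epsilon|\Delta\rangle)
\end{equation}
is a nontrivial convex decomposition, contradicting the extremality of $P$.

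For the reverse direction, suppose $|P\rangle=\lambda|P_1\rangle+(1-\lambda)|P_2\rangle$ with $\lambda\in(0,1)$ and $|P_1\rangle,|P_2\rangle\in\textbf{TONS}$. For any row $A_i$ of $\tilde{A}_P$,
\begin{equation}
\lambda(A_i\cdot|P_1\rangle)+(1-\lambda)(A_i\cdot|P_2\rangle)=|b\rangle_i,
\end{equation}
while each summand is bounded above by $|b\rangle_i$ because $|P_1\rangle,|P_2\rangle$ are feasible; strict positivity of $\lambda$ and $1-\lambda$ then forces $A_i\cdot|P_1\rangle=A_i\cdot|P_2\rangle=|b\rangle_i$, so $\tilde{A}_P(|P_1\rangle-|P_2\rangle)=0$. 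Under the hypothesis $\mathrm{rank}(\tilde{A}_P)=n_{seq}$ this kernel is trivial, whence $|P_1\rangle=|P_2\rangle=|P\rangle$ and $P$ is a vertex.

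No genuine obstacle arises: this is essentially the textbook characterisation of vertices of polyhedra. The only bookkeeping subtlety is that normalization and time-ordered no-signaling are equality constraints, which must be encoded as pairs of opposing inequalities inside the system $A|P\rangle\leq|b\rangle$ so that the active-constraint matrix $\tilde{A}_P$ is unambiguously defined; once this reformulation is fixed, the two-line argument above applies verbatim, and the dimension count $n_{seq}=(md)^{2N}$ is simply the ambient dimension of the embedding space $\mathds{R}^{n_{seq}}$ in which $|P\rangle$ lives.
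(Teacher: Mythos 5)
Your proof is correct and is exactly the standard polyhedral argument that the paper implicitly relies on: the paper states this as a Fact without proof, treating it as the textbook characterisation of vertices as points where the active constraints have full column rank. Your handling of the one genuine subtlety --- encoding the normalization and time-ordered no-signaling equalities as pairs of opposing inequalities so that $\tilde{A}_P$ is well defined, and choosing $\epsilon$ small enough that the finitely many strictly slack inequalities remain satisfied --- is all that needs to be said.
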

In other words, every extremal box $P$ satisfies, besides the normalization and time-ordered no-signaling equality constraints, a certain number of the non-negativity inequalities with equality, i.e., the box sets $P_{\textbf{O}_A, \textbf{O}_B | \textbf{I}_A, \textbf{I}_B}(\textbf{o}_A, \textbf{o}_B | \textbf{i}_A, \textbf{i}_B) = 0$ for a uniquely identifiable set of $\textbf{o}_A, \textbf{i}_A, \textbf{o}_B, \textbf{i}_B$. For two extremal boxes $P$ and $P'$, the corresponding sub-matrices are not equal $\tilde{A}_P \neq \tilde{A}_{P'}$ and the sub-matrix $\tilde{A}_P$ can therefore be used to uniquely identify the vertex $P$. Finally, a local vertex of the Time-Ordered No-Signaling polytope $\textbf{TONS}\left[\textbf{B}\left(2; (\vec{m}_A, \vec{d}_A); (\vec{m}_B, \vec{d}_B) \right)\right]$ is one that has only entries in $\{0,1\}$ by the integrality constraint stated earlier. Therefore, a non-local vertex is one that has at least one non-integral entry, i.e., one entry that is neither $0$ nor $1$. 

We now state the central result of this section as the following theorem.

\begin{theorem}
For any $(\vec{m}_A, \vec{d}_A), (\vec{m}_B, \vec{d}_B)$ let $P$ be an extremal box of the Time-Ordered No-Signaling polytope $\textbf{TONS}\left[\textbf{B}\left(2; (\vec{m}_A, \vec{d}_A); (\vec{m}_B, \vec{d}_B) \right)\right]$ such that $P \notin \textbf{TOLoc}\left[\textbf{B}\left(2; (\vec{m}_A, \vec{d}_A); (\vec{m}_B, \vec{d}_B) \right)\right]$. Then, $P \notin \text{cl}\left(\textbf{Qseq}\left[\textbf{B}\left(2; (\vec{m}_A, \vec{d}_A); (\vec{m}_B, \vec{d}_B) \right)\right] \right)$. 
\end{theorem}

\begin{proof}
Following the Cabello-Severini-Winter (CSW) framework for single-run Bell non-locality \cite{CSW1, CSW2}, we define an orthogonality graph $G_{\textbf{B}}$ corresponding to a sequential Bell scenario as follows. To each measurement event $\left(\textbf{o}_A, \textbf{o}_B | \textbf{i}_A, \textbf{i}_B \right)$ of the Bell scenario, we associate a vertex $v_{\left(\textbf{o}_A, \textbf{o}_B | \textbf{i}_A, \textbf{i}_B \right)}$ of the graph $G_{\textbf{B}}$.  Two such vertices $v_{\left(\textbf{o}_A, \textbf{o}_B | \textbf{i}_A, \textbf{i}_B \right)}, v_{\left(\textbf{o'}_A, \textbf{o'}_B | \textbf{i'}_A, \textbf{i'}_B \right)}$ are connected by an edge if and only if the associated measurements are orthogonal, 
\begin{equation}
\label{eq:LO2}
\textbf{E}^A_{\textbf{i}_A, \textbf{o}_{A}} \textbf{E}^B_{\textbf{i}_B, \textbf{o}_{B}} \textbf{E}^A_{\textbf{i'}_A, \textbf{o'}_{A}} \textbf{E}^B_{\textbf{i'}_B, \textbf{o'}_{B}} = 0,
\end{equation}
i.e., either $\textbf{E}^A_{\textbf{i}_A, \textbf{o}_A} \textbf{E}^A_{\textbf{i'}_A, \textbf{o'}_A} = 0$ or $ \textbf{E}^B_{\textbf{i}_B, \textbf{o}_B} \textbf{E}^B_{\textbf{i'}_B, \textbf{o'}_B} = 0$. 

A crucial observation is that the normalization and time-ordered no-signaling constraints on a box $P$ are equivalent to maximum clique (in)equalities of the graph $G_{\textbf{B}}$. Here, a clique inequality is an inequality of the form $\sum_{i \in c} | P \rangle_i \leq 1$ for some clique $c$ in the graph (a clique is a set of mutually edge-connected vertices). The fact that normalization is a clique inequality is clear, since normalization by definition considers events with the same inputs $\textbf{i}_A, \textbf{i}_B$ and different outputs $\textbf{o}_A, \textbf{o}_B$. That the time-ordered no-signaling constraints also form a clique inequality can be seen by a modification of the argument from \cite{LO} as:
\begin{widetext}
\begin{eqnarray}
\label{eq:tons-as-lo}
\sum_{o_A^{(k)}, \dots, o_A^{(N)}} P_{\textbf{O}_A, \textbf{O}_B | \textbf{I}_A, \textbf{I}_B}(\textbf{o}_A, \textbf{o}_B | \textbf{i}_A, \textbf{i}_B) + \sum_{(o'^{(1)}_A, \dots, o'^{(k-1)}_A) \neq (o^{(1)}_A, \dots, o^{(k-1)}_A) } \sum_{o'^{(k)}_A, \dots, o'^{(N)}_A} P_{\textbf{O}_A, \textbf{O}_B | \textbf{I}_A, \textbf{I}_B}(\textbf{o'}_A, \textbf{o}_B | \textbf{i'}_A, \textbf{i}_B)  \nonumber \\ + \sum_{\textbf{o''}_B \neq \textbf{o}_B} \sum_{\textbf{o''}_A} P_{\textbf{O}_A, \textbf{O}_B | \textbf{I}_A, \textbf{I}_B}(\textbf{o''}_A, \textbf{o''}_B | \textbf{i''}_A, \textbf{i}_B) = 1, \; \nonumber \\
\forall \; 1 \leq k \leq N, \textbf{i}_A, \textbf{i'}_A \; \text{with} \; (i^{(1)}_A, \dots, i^{(k-1)}_A) = (i'^{(1)}_A, \dots, i'^{(k-1)}_A), \; \forall \textbf{i''}_A, \textbf{i}_B, \textbf{o}_A, \textbf{o}_B \nonumber \\
\end{eqnarray}
\end{widetext}
A similar equation shows that $\sum_{o_B^{(k)}, \dots, o_B^{(N)}} P_{\textbf{O}_A, \textbf{O}_B | \textbf{I}_A, \textbf{I}_B}(\textbf{o}_A, \textbf{o}_B | \textbf{i}_A, \textbf{i}_B)$ is independent of  $i_B^{(k)}, \dots, i_B^{(N)}$ for all $1 \leq k \leq N, \textbf{i}_A, \textbf{o}_A, \textbf{i}_B, \textbf{o}_B$. It is readily seen that each of the measurement events in (\ref{eq:tons-as-lo}) correspond to orthogonal measurements (\ref{eq:LO2}). Therefore, by the definition of the edge in (\ref{eq:LO2}) the time-ordered no-signaling constraints correspond to a saturated clique inequality. The dimension of the convex sets in question is given by $n_{seq} - D$ where $D$ denotes the number of independent normalization and time-ordered no-signaling constraints. 

Now, in the hierarchy for sequential Bell non-locality, we consider sets of sequences of product projection operators $S^{seq}_1 = \{ \mathds{1} \} \cup \{\textbf{E}^{A}_{\textbf{i}_A, \textbf{o}_A} \} \cup \{\textbf{E}^{B}_{\textbf{i}_B, \textbf{o}_B}\}$, $S^{seq}_2 = S_1 \cup \{\textbf{E}^{A}_{\textbf{i}_A, \textbf{o}_A} \textbf{E}^{B}_{\textbf{i}_B, \textbf{o}_B}\}$, etc. The convex sets $\textbf{Qseq}_l\left[\textbf{B}\left(2; (\vec{m}_A, \vec{d}_A); (\vec{m}_B, \vec{d}_B) \right)\right]$ corresponding to different levels of this hierarchy consist of boxes $P$ associated with a positive semi-definite certificate $\Gamma^{seq, l}$. The certificate $\Gamma^{seq, l}$ is associated to the set of operators $S^{seq}_l$ and is a $\big|S^{seq}_l \big| \times \big|S^{seq}_l \big|$ matrix indexed by the operators in $S^{seq}_l$. The certificate is a Hermitian positive semi-definite matrix satisfying (i) $\Gamma^{seq, l}_{\mathds{1}, \mathds{1}} = 1$, (ii) $\Gamma^{seq, l}_{Q, R} = \Gamma^{seq, l}_{S, T}$ if $Q^{\dagger} R = S^{\dagger} T$. The latter condition in particular imposes that $\Gamma^{seq, l}_{\mathds{1}, \textbf{E}^{A}_{\textbf{i}_A, \textbf{o}_A} \textbf{E}^{B}_{\textbf{i}_B, \textbf{o}_B}} = \Gamma^{seq, l}_{\textbf{E}^{A}_{\textbf{i}_A, \textbf{o}_A}, \textbf{E}^{B}_{\textbf{i}_B, \textbf{o}_B}} = \Gamma^{seq, l}_{\textbf{E}^{A}_{\textbf{i}_A, \textbf{o}_A} \textbf{E}^{B}_{\textbf{i}_B, \textbf{o}_B}, \textbf{E}^{A}_{\textbf{i}_A, \textbf{o}_A} \textbf{E}^{B}_{\textbf{i}_B, \textbf{o}_B}} = P_{\textbf{O}_A, \textbf{O}_B | \textbf{I}_A, \textbf{I}_B}(\textbf{o}_A, \textbf{o}_B | \textbf{i}_A, \textbf{i}_B)$. The lack of existence of a positive semi-definite certificate corresponding to any level of the hierarchy for a box $P$ implies the exclusion of that box from the quantum set. Following earlier investigations in \cite{our}, we consider a level of the hierarchy that we denote $\tilde{\textbf{Q}}_{seq}\left[\textbf{B}\left(2; (\vec{m}_A, \vec{d}_A); (\vec{m}_B, \vec{d}_B) \right)\right]$ corresponding to the set of operators $\{ \mathds{1} \} \cup \{\textbf{E}^{A}_{\textbf{i}_A, \textbf{o}_A} \textbf{E}^{B}_{\textbf{i}_B, \textbf{o}_B}\}$. This level is the analog of the Almost Quantum set for bipartite correlations in the traditional Bell scenario (see \cite{AQ}). Crucially, the only constraints from this set of operators are those that impose the Hermiticity of the matrix by the identity
\begin{eqnarray}
\left( \textbf{E}^{A}_{\textbf{i}_A, \textbf{o}_A} \textbf{E}^{B}_{\textbf{i}_B, \textbf{o}_B} \right) \left(\textbf{E}^{A}_{\textbf{i'}_A, \textbf{o'}_A} \textbf{E}^{B}_{\textbf{i'}_B, \textbf{o'}_B} \right)^{\dagger} = \nonumber \\ \left( \left( \textbf{E}^{A}_{\textbf{i'}_A, \textbf{o'}_A} \textbf{E}^{B}_{\textbf{i'}_B, \textbf{o'}_B} \right) \left(\textbf{E}^{A}_{\textbf{i}_A, \textbf{o}_A} \textbf{E}^{B}_{\textbf{i}_B, \textbf{o}_B} \right)^{\dagger} \right)^{\dagger},
\end{eqnarray}
due to the commutativity of the operators corresponding to different parties. As in previous studies for the single-run Bell scenario \cite{our, Fritz2, AQ}, this set may be identified with the familiar Lov\'{a}sz-theta set $TH(G_{\textbf{B}})$ from graph theory, with the clique inequalities corresponding to the normalization and the time-ordered no-signaling conditions (\ref{eq:tons-as-lo}) being set to equalities. In other words, $\tilde{\textbf{Q}}_{seq}\left[\textbf{B}\left(2; (\vec{m}_A, \vec{d}_A); (\vec{m}_B, \vec{d}_B) \right)\right] = TH(G_{\textbf{B}}) \cap C_{n, tons}$ where $C_{n, tons}$ denotes the set of clique equalities from the normalization and the time-ordered no-signaling constraints. To see this explicitly, note that the set $TH(G_{\textbf{B}})$ is defined as \cite{Schrijver}
\begin{widetext}
\begin{equation}
\label{theta-body-2}
TH(G_{\textbf{B}}) = \left\{ 
\begin{tabular}{ c|c c } 
$| \mathcal{P} \rangle \in \mathbb{R}^{|V|}$ & $\exists \mathbf{\Pi} \in \mathbb{S}^{|V|} \; \; \text{s.t.}$ & \;\;\;\; $\mathbf{\Pi}_{i,j} = 0 \quad ((i,j) \in E),$ \\
   &&$\mathbf{\Pi}_{i,i} = | \mathcal{P} \rangle_i \quad (i \in V),$ \\
   &&$\mathbf{\Pi} - | \mathcal{P} \rangle \langle \mathcal{P} | \succeq 0$\\
 \end{tabular} 
 \right\}
\end{equation}
\end{widetext}
Here $V$ denotes the vertex set of the graph $G_{\textbf{B}}$ and $\mathbb{S}^{|V|}$ denotes the set of symmetric matrices of size $|V| \times |V|$. 
We may identify the requirement on the certificate of $\tilde{\textbf{Q}}_{seq}$ with the constraints $C_{n, tons}$ together with the requirement of the real symmetric positive semi-definite matrix $\begin{pmatrix}
\mathds{1}& \langle P | \\ | P \rangle & \mathbf{\Pi}
\end{pmatrix}$ which by the use of Schur complements is equivalent to the condition that $\mathbf{\Pi} -  | \mathcal{P} \rangle \langle \mathcal{P} | \succeq 0$ in Eq.(\ref{theta-body-2}). 

Furthermore, we have that $\text{cl}\left(\textbf{Qseq}\left[\textbf{B}\left(2; (\vec{m}_A, \vec{d}_A); (\vec{m}_B, \vec{d}_B) \right)\right] \right) \subseteq \tilde{\textbf{Q}}_{seq}\left[\textbf{B}\left(2; (\vec{m}_A, \vec{d}_A); (\vec{m}_B, \vec{d}_B) \right)\right]$ since the sequential hierarchy converges to the closure of the corresponding quantum set, so that exclusion from $\tilde{\textbf{Q}}_{seq}$ implies exclusion from $\text{cl}\left(\textbf{Qseq}\right)$. Putting these facts together, and following analogous steps to \cite{our, FT02, Shepherd, Silva} now gives the result. 
\end{proof}

\section{Supplemental Material: Quantum realisation of extremal non-signaling non-local assemblages}
Here we give a formal proof of the propositions from the main text that state that quantum theory allows for the realisation of extremal non-signalling non-local assemblages. 

\subsection{Postquantum steering}

In this subsection we will recall the notions already introduced in the main text and for the reader's convenience we will evoke explicit reasoning backing up some statements presented in the main text of the paper.

We will begin by considering so called steering scenario, firstly proposed by Schr{\"o}dinger \cite{S36} and putted into modern perspective in \cite{WJD07}. Assume that two distant (separated) subsystems A (Alice) and B (Bob) share a quantum state. Bob's subsystem is completely characterised, dimension $d$ of its underlying Hilbert space is known as well as all potential operations which may be performed on it by Bob. Alice may perform measurements with settings labeled by $x$ (from some finite and fixed set of indexes $\mathcal{X}$) and outcomes labeled by $a$ from some finite and fixed set of indexes $\mathcal{A}$ (note that in full generality $\mathcal{A}_x$ may be dependent on the choice of $x$, for simplicity we will omit this possibility), but apart form that, subsystem A in uncharacterised. Dimension of Hilbert space describing subsystem A is unknown and measurements performed on subsystem A, described by POVM elements $M_{a|x}$, may be arbitrary. After obtaining $a$ while measuring $x$, local state of subsystem B is characterised by a subnormalised state
\begin{equation}\label{first}
\sigma_{a|x}=\mathrm{Tr}_{A}(M^{(A)}_{a|x}\otimes \mathds{1}\rho^{(AB)}),
\end{equation}i.e. while measuring $x$ Alice obtains $a$ with probability $\mathrm{Tr}_{B}(\sigma_{a|x})$, and with that probability Bob's subsystem is described by the normalised version of $\sigma^{(B)}_{a|x}$. Described scenario can be therefore characterised by the collection of subnormalised states $\Sigma^{(B)}=\left\{\sigma^{(B)}_{a|x}\right\}_{a,x}$ acting on $d_B$ dimensional space. This family is known as an (quantum) assemblage. Observe that (\ref{first}) induces no-signaling conditions (together with positivity and normalisation)
\begin{equation}\label{sup_no1}
\forall_{a,x}\ \sigma^{(B)}_{a|x}\geq 0,
\end{equation}
\begin{equation}\label{sup_no2}
\forall_{x,x'}\ \sum_a\sigma^{(B)}_{a|x}=\sigma^{(B)}=\sum_a\sigma^{(B)}_{a|x'},
\end{equation}
\begin{equation}\label{sup_no3}
\mathrm{Tr}(\sigma^{(B)})=1,
\end{equation}where $\sigma^{(B)}$ is a local state of B. Now consider special case of a quantum steering in which the shared state $\tilde{\rho}^{(AB)}=\sum_i q_i \rho^{(A)}_i\otimes \sigma^{(B)}_i$ is separable ($\rho^{(A)}_i, \sigma^{(B)}_i$ denote local states and $q_i\geq 0, \sum_i q_i=1$). Note that in such a case, according to (\ref{first}), we obtain
\begin{equation}\label{sup_lhs}
\sigma^{(B)}_{a|x}=\sum_i q_i\mathrm{Tr}(M^{(A)}_{a|x}\rho_i^{(A)})\sigma_i^{(B)}=\sum_i q_ip^{(A)}_i(a|x)\sigma^{(B)}_i
\end{equation}where we use the fact that $\mathrm{Tr}(M^{(A)}_{a|x}\rho_i)=p^{(A)}_i(a|x)$ defines a conditional probability distribution. We say that assemblage of the form (\ref{sup_lhs}) admits a LHS (local hidden state) model, or simply that it is LHS. If $\tilde{p}^{(A)}_k(a|x)$ denote deterministic conditional probability distributions, then we may use the convex combination $p^{(A)}_i(a|x)=\sum_{k}s_{ki}\tilde{p}^{(A)}_k(a|x)$ and write (\ref{sup_lhs}) as
\begin{equation}\label{sup_lhs22}
\sigma_{a|x}=\sum_{ik} q_is_{ki}\tilde{p}^{(A)}_k(a|x)\sigma^{(B)}_i=\sum_{k} \tilde{q}_k\tilde{p}^{(A)}_k(a|x)\tilde{\sigma}^{(B)}_k
\end{equation}with states $\tilde{\sigma}^{(B)}_{k}=\frac{1}{\tilde{q}_{k}}\sum_i q_is_{ki}\sigma^{(B)}_i$ and $\tilde{q}_{k}=\mathrm{Tr}(\sum_i q_is_{ki}\sigma^{(B)}_i)$ as coefficients of convex combinations. Observe that assemblage is LHS if and only if it has a quantum realisation with a bipartite separable state. Thus distinction of LHS assembles mirrors existence of separable states in a bipartite setting.

Finally, one can consider situation in which system B is described by a local quantum state, but the composite system is characterised rather by some no-signaling generalised probabilistic theory. While Alice measures $x$ she still obtains $a$ with probability $\mathrm{Tr}_{B}(\sigma^{(B)}_{a|x})$, and with that probability Bob's subsystem is described by the normalised version of $\sigma^{(B)}_{a|x}$, but the measurements are no longer expressed in the language of quantum theory and $\sigma^{(B)}_{a|x}$ may not be described by the formula (\ref{first}). Measurements are now seen as a device with classical inputs $x$ and classical outputs $a$, with no-signaling constraints expressed by the relation between given operators $\sigma^{(B)}_{a|x}$. Abstract no-signaling assemblage is defined as a collection of operators $\Sigma^{(B)}=\left\{\sigma_{a|x}^{(B)}\right\}_{a,x}$ acting on fixed $d_B$ dimensional space, which satisfy constraints (\ref{sup_no1}-\ref{sup_no3}) \cite{SBCSV15}. In principal we say that a no-signaling assemblage $\Sigma^{(B)}$ admits quantum realisation (or simply that $\Sigma^{(B)}$ is quantum assemblage) if there is a choice of subsystem A, POVMs and a joint state of two subsystems such that $\Sigma^{(B)}$ can be expressed by the formula (\ref{first}). Similarly, we may talk about no-signaling assemblages with LHS model.

Note that the set of no-signaling assemblages is convex. Indeed, if $\pi^{(B)}_{a|x}, \tau^{(B)}_{a|x}$ define no-signaling assemblages, then $\sigma^{(B)}_{a|x}=p\pi^{(B)}_{a|x}+(1-p)\tau^{(B)}_{a|x}$ for any $0\leq p\leq 1$ fulfills conditions (\ref{sup_no1}-\ref{sup_no3}), so it defines a no-signaling assemblage as well. Similar observation holds also for quantum assemblages and LHS assemblages. To see this assume that there are two quantum assembles given by $\pi^{(B)}_{a|x}=\mathrm{Tr}_{A}(M^{(A)}_{a|x}\otimes \mathds{1}\rho^{(AB)}_1)$ and $\tau^{(B)}_{a|x}=\mathrm{Tr}_{A'}(N^{(A')}_{a|x}\otimes \mathds{1}\rho_2^{(A'B)})$ with $\rho_1^{(AB)}$ being a state of system AB and $\rho_2^{(A'B)}$ being a state of system A'B. Then
\begin{equation}\nonumber
\sigma^{(B)}_{a|x}=p\pi^{(B)}_{a|x}+(1-p)\tau^{(B)}_{a|x}=\mathrm{Tr}_{A\oplus A'}(O^{(A\oplus A')}_{a|x}\otimes \mathds{1}\rho^{((A\oplus A')B)})
\end{equation}where $\rho^{((A\oplus A')B)}=p\rho^{(AB)}_1\oplus (1-p)\rho^{(A'B)}_2$ and $O^{(A\oplus A')}_{a|x}=M^{(A)}_{a|x}\oplus N^{(A')}_{a|x}$, so $\sigma^{(B)}_{a|x}$ defines quantum assemblage. Note that $A\oplus A'$ denotes here composed system described by the direct sum of a matrix algebras describing subsystem A and A' respectively. Similarly expression $(A\oplus A')B$ refers to composed system described by the tensor product of the previously evoked algebra and the matrix algebra which describes subsystem B (therefore, from this perspective both $\rho^{(AB)}_1$ and $\rho^{(A'B)}_2$ can be seen as states on larger system $(A\oplus A')B$).

Finally, convexity of the set of LHS assemblages follows form characterisation (\ref{sup_lhs22}).

It is however well known \cite{G89,HJW93}, that any abstract no-signaling assemblage (in the discussed bipartite scenario) admits a quantum realisation, i.e. there exist subsystem A described by some Hilbert space, POVMs elements $M^{(A)}_{a|x}$ acting on that space and a joint quantum state $\rho^{(AB)}$ of subsystems A and B, such that $\sigma^{(B)}_{a|x}$ is reconstructed by formula (\ref{first}). Indeed, for a given no-signaling assemblage  $\Sigma^{(B)}=\left\{\sigma_{a|x}\right\}_{a,x}$ assume spectral decomposition $\sum_a\sigma^{(B)}_{a|x}=\sigma^{(B)}=\sum_k \lambda_k |\phi^{(B)}_k\rangle \langle \phi^{(B)}_k|$ and put $|\psi^{(AB)}\rangle=\sqrt{\lambda_k}|\phi_k^{(A)}\rangle|\phi_k^{(B)}\rangle$. Let $M^{(A)}_{a|x}=(\sigma^{(A)})^{-\frac{1}{2}}(\sigma^{(A)}_{a|x})^T(\sigma^{(A)})^{-\frac{1}{2}}$ for $a\neq |\mathcal{A}|-1$ and
$M^{(A)}_{a|x}=(\sigma^{(A)})^{-\frac{1}{2}}(\sigma^{(A)}_{a|x})^T(\sigma^{(A)})^{-\frac{1}{2}}+(\mathds{1}-R_E)$ for $a= |\mathcal{A}|-1$ with $(\sigma^{(A)})^{-\frac{1}{2}}=\sum_k \lambda_k^{-\frac{1}{2}} |\phi_k^{(A)}\rangle \langle \phi_k^{(A)}|$, $R_E$ being projection on the image of  $\sigma^{(B)}$ and $T$ denoting transposition (with respect to eigenvectors of $\sigma^{(B)}$). Direct calculations show that with this characterisation $\sigma^{(B)}_{a|x}=\mathrm{Tr}_{A}(M^{(A)}_{a|x}\otimes \mathds{1}|\psi^{(AB)}\rangle \langle \psi^{(AB)}|)$. Therefore, we may say that there is no postquantum steering (any no-signaling assemblage is a quantum assemblage).

In order to witness postquantum steering, we will consider the simplest generalisation of bipartite no-signaling steering scenario. Let us restrict to the tripartite case with two uncharacterised subsystems. We consider three distant (separated) subsystems A (Alice), B (Bob), C (Charlie). While system C has quantum nature and it is described by some Hilbert space of known dimension $d_C$, subsystems of A and B remains unknown to Charlie and can be seen as devices described only by set of classical inputs (labels of measurements settings) and classical outputs (labels of measurements outcomes). Once more Alice and Bob may perform local (and independent) measurements with finite settings labeled by $x\in \mathcal{X}$ or $y\in\mathcal{Y}$ (respectively) and finite outcomes labeled by $a\in\mathcal{A}$ or $b\in\mathcal{B}$  (respectively). According to measurements with fixed settings $x,y$, quantum system of Charlie is described by the normalised version of operator $\sigma^{(C)}_{ab|xy}$ with probability $\mathrm{Tr}(\sigma^{(C)}_{ab|xy})$ (which also describes the probability of Alice and Bob obtaining outcomes labeled by $a,b$). The fact that measurements of Alice and Bob are local (do not affect each other) is encoded by the following no-signaling conditions (together with positivity and normalisation) imposed on operators $\sigma^{(C)}_{ab|xy}$
\begin{equation}\label{sup_as1}
\forall_{a,b,x,y}\ \sigma^{(C)}_{ab|xy}\geq 0,
\end{equation}
\begin{equation}\label{sup_as2}
\forall_{b,x,x',y}\ \sum_a\sigma^{(C)}_{ab|xy}=\sum_a\sigma^{(C)}_{ab|x'y},
\end{equation}
\begin{equation}\label{sup_as3}
\forall_{a,x,y,y'}\ \sum_b\sigma^{(C)}_{ab|xy}=\sum_b\sigma^{(C)}_{ab|xy'},
\end{equation}
\begin{equation}\label{sup_as4}
\forall_{x,y}\ \mathrm{Tr}(\sum_{a,b}\sigma^{(C)}_{ab|xy})=\mathrm{Tr}(\sigma^{(C)})=1, 
\end{equation}where $\sigma^{(C)}$ is interpreted as a local state of subsystem C. In analogy to the previously discussed case, abstract no-signaling assemblage (for tripartite steering with two uncharacterised subsystems) is defined as a collection of operators $\Sigma^{(C)}=\left\{\sigma^{(C)}_{ab|xy}\right\}_{a,b,x,y}$ acting on fixed $d_C$ dimensional space, which satisfy constraints (\ref{sup_as1}-\ref{sup_as4}) \cite{SBCSV15}.  
If for given no-signaling assemblage $\left\{\sigma^{(C)}_{ab|xy}\right\}_{a,b,x,y}$ there exist subsystems A and B, local POVMs given respectively by elements $M^{(A)}_{a|x},N^{(B)}_{b|y}$ and some tripartite state $\rho^{(ABC)}$ of the composed system ABC (note that subsystem C is always fixed for considered set of no-signaling assemblages, as fixed is dimension $d_C$) such that
\begin{equation}\label{sup_q2}
\sigma^{(C)}_{ab|xy}=\mathrm{Tr}_{AB}(M^{(A)}_{a|x}\otimes N^{(B)}_{b|y}\otimes \mathds{1}\rho^{(ABC)}),
\end{equation}we say that $\Sigma^{(C)}$ admits quantum realisation (or simply that $\Sigma^{(C)}$ is a quantum assemblage). However, in contrast with bipartite steering, it appears that not all no-signaling assemblages in tripartite scenario admits quantum realisation (i.e. we say that there are postquantum assemblages - there exists postquantum tripartite steering with two uncharacterised subsystems). To show this, consider assemblage defined as $\sigma^{(C)}_{ab|xy}=p^{(AB)}(ab|xy)\rho^{(C)}$ with no-signaling conditional probabilities (correlations) $p^{(AB)}(ab|xy)$ coming from PR-box (or any other non-local extremal point in the polytope of no-signaling boxes). Indeed, if there were a tripartite state $\rho^{ABC}$ and local measurements  given by elements of local POVMs $M^{(A)}_{a|x}$ and $N^{(B)}_{b|y}$ (acting on subsystem A and B respectively), such that $p^{(AB)}(ab|xy)\rho^{(C)}=\mathrm{Tr}_{AB}(M^{(A)}_{a|x}\otimes N^{(B)}_{b|y}\otimes \mathds{1}\rho^{(ABC)})$, then it would be true that $p^{(AB)}(ab|xy)=\mathrm{Tr}_{AB}(M^{(A)}_{a|x}\otimes N^{(B)}_{b|y}\rho^{(AB)})$ with $\rho^{(AB)}=\mathrm{Tr}_C(\rho^{(ABC)})$. The last statement leads to contradiction, as we know that non-local (nontrivial) extremal points in the polytope of no-signaling correlations (such as PR-box) do not admit quantum realisation \cite{our}. As we see postquantum nature of such assemblage is based purely on postquantum non-locality (the fact that not all no-signaling correlations can be reconstructed by quantum measurements \cite{our}). 

However, as it was shown (using SDP) in \cite{SBCSV15}, there are assemblages $\Sigma^{(C)}=\left\{\sigma^{(C)}_{ab|xy}\right\}_{a,b,x,y}$ without quantum realisation, for which with any choice of POVMs elements $R^{(C)}_{c|z}$, constructed  no-signaling boxes $p^{(ABC)}(abc|xyz)=\mathrm{Tr}(R^{(C)}_{c|z}\sigma^{(C)}_{ab|xy})$ admit a quantum realisation (in the usual sense for multipartite correlations). While non-locality can led to postquantum steering, in general it may not be the case. This observation enable us to treat postquantum steering (existence of no-signaling assemblages without quantum realisation) and postquantum non-locality as different phenomena. 
It is however interesting that, despite the fact that some no-signaling assemblages are not quantum, any no-signaling assemblage admits realisation given by (\ref{sup_q2}), if a state $\rho^{(ABC)}$ is in general exchange for tripartite hermitian operator $W^{(ABC)}$ with $\mathrm{Tr}(W^{(ABC)})=1$ \cite{SAPHS18}. Therefore, any abstract no-signaling assemblage has in a sense a concrete realisation.

In analogy with the bipartite steering, it is clear that the set of no-signaling assemblages and the subsets of quantum assemblages (assemblages with quantum realisation) are convex. The same is true for the subset of LHS assemblages - a no-signaling assemblage from tripartite steering admits LHS model \cite{SAPHS18} if it can be represented by 
\begin{equation}\label{sup_lhs2}
\sigma^{(C)}_{ab|xy}=\sum_i q_i p^{(A)}_i(a|x)p^{(B)}_i(b|y)\sigma^{(C)}_i
\end{equation}where $q_i\geq 0, \sum_i q_i=1$, $\sigma^{(C)}_i$ are some states of characterised subsystem C and $p^{(A)}_i(a|x),p^{(B)}_i(b|y)$ denotes conditional probability distributions for uncharacterised subsystem A and B respectively. One can see this LHS assemblages precisely as assemblages with quantum realisation obtained by measurements on tripartite fully separable states. Because conditional probabilities may be expressed as convex combinations of deterministic conditional probabilities, we may write $p^{(A)}_i(a|x)=\sum_k r_{ki} \tilde{p}^{(A)}_k(a|x)$ and $p^{(B)}_i(b|y)=\sum_l s_{li} \tilde{p}^{(B)}_l(b|y)$. If so, then for any LHS assemblage (\ref{sup_lhs2})
\begin{align}\nonumber
\sigma_{ab|xy}&=\sum_{i,k,l} q_ir_{ki}s_{li}\tilde{p}^{(A)}_k(a|x)\tilde{p}^{(B)}_l(b|y)\sigma^{(C)}_i\\ \nonumber
&=\sum_{k,l}\tilde{p}^{(A)}_k(a|x)\tilde{p}^{(B)}_l(b|y)\sum_i q_ir_{ki}s_{li}\sigma^{(C)}_i\\ \nonumber
&=\sum_{k,l}\tilde{p}^{(A)}_k(a|x)\tilde{p}^{(B)}_l(b|y) \tilde{q}_{kl}\tilde{\sigma}^{(C)}_{kl}\\ \label{sup_new_lhs}
&=\sum_j \tilde{q}_jp^{(AB)}_j(ab|xy)\tilde{\sigma}^{(C)}_j 
\end{align}with states $\tilde{\sigma}^{(C)}_{kl}=\frac{1}{\tilde{q}_{kl}}\sum_i q_ir_{ki}s_{li}\sigma^{(C)}_i$, $\tilde{q}_{kl}=\mathrm{Tr}(\sum_i q_ir_{ki}s_{li}\sigma^{(C)}_i)$ as coefficients of convex combination, index $j=(k,l)$ and $p^{(AB)}_j(ab|xy)$ defining some deterministic box of conditional probabilities. This provides a convenient presentation of LHS assemblages.

In the tripartite case distinction between fully separable states and biseparable states implies that one can consider additional class of biseparable assemblages, forming an intermediate set between LHS and quantum assemblages. An assemblage $\Sigma^{(C)}=\left\{\sigma^{(C)}_{ab|xy}\right\}$ is \textit{biseparable} \cite{CS15} if it can be expressed by a convex combination of the following form
\begin{align}\label{bisep}
\sigma_{ab|xy}=&\sum_i p_i p^{(A)}_i(a|x)\sigma^{(C),i}_{b|y}+\sum_j q_j p^{(B)}_j(b|y)\sigma^{(C),j}_{a|x}\nonumber \\
&+\sum_k r_k  p^{(AB)}_k(ab|xy)\rho^{(C)}_k
\end{align}where $P_{A,i}=\left\{p^{(A)}_i(a|x)\right\}$ are some local box on subsystem A, $P_{B,j}=\left\{p^{(B)}_j(b|y)\right\}$ are some local box on subsystem B, $P_{AB,k}=\left\{p^{(AB)}_k(ab|xy)\right\}$ are some quantum box on subsystem AB, $\Sigma_{i}^{(C)}=\left\{\sigma^{(C),i}_{b|y}\right\}$ are some bipartite no-signaling assemblage, $\Sigma^{(C)}_{j}=\left\{\sigma^{(C),j}_{a|x}\right\}$ are some bipartite no-signaling assemblage, $\rho^{(C)}_k$ are some states on subsystem C and $p_i,q_j,r_k\geq 0$ for all $i,j,k$ with $\sum_ip_i+\sum_jq_j+\sum_kr_k=1$.

One can easily see that quantum assemblage obtained by local measurements performed on two subsystems of tripartite biseparable state is in fact biseparble assemblage. Indeed, this is true as any such biseparable state $\rho^{(ABC)}$ is given as 
\begin{align}
\rho^{(ABC)}=&\sum_i p_i\rho^{(A)}_i\otimes \rho^{(BC)}_i+\sum_j q_j \rho^{(B)}_j\otimes \rho^{(AC)}_{j}\nonumber \\
&+\sum_k r_k \rho^{(AB)}_{k}\otimes \rho^{(C)}_k.
\end{align}In particular if quantum assemblage $\Sigma^{(C)}=\left\{\sigma_{ab|xy}^{(C)}\right\}$ define by formula $\sigma^{(C)}_{ab|xy}=\mathrm{Tr}_{AB}(M^{(A)}_{a|x}\otimes N^{(B)}_{b|y}\otimes \mathds{1}\rho^{(ABC)})$ is not biseparable, then the initial state $\rho^{(ABC)}$ is genuinely entangled (i.e. it s not biseparable). Moreover, it can be shown that any biseparable assemblage admits quantum realisation where local measurements are performed on some biseparable state (but some biseparable assemblages may admit quantum realisation with genuine entangled tripartite states).

Observe that the set of biseparable assemblages can be also defined as a set of all convex combination of assemblages of the form $\Sigma^{(C)}_{I}=\left\{\sigma_{ab|xy}^{(C),I}\right\}$, $\Sigma^{(C)}_{II}=\left\{\sigma_{ab|xy}^{(C),II}\right\}$ or $\Sigma^{(C)}_{III}=\left\{\sigma_{ab|xy}^{(C),III}\right\}$ where
\begin{equation}\label{sig_1}
\sigma_{ab|xy}^{(C),I}=p^{(A)}(a|x)\sigma^{(C),I}_{b|y}
\end{equation}for some local deterministic box $L_{I}=\left\{p^{(A)}(a|x)\right\}$ on subsystem A and some bipartite assemblage $\tilde{\Sigma}^{(C)}_{I}=\left\{\sigma_{b|y}^{(C),I}\right\}$,
\begin{equation}\label{sig_2}
\sigma_{ab|xy}^{(C),II}=p(b|y)\sigma^{(C),II}_{a|x}
\end{equation}for some local deterministic box $L_{II}=\left\{p(b|y)\right\}$ on subsystem B and some bipartite assemblage $\tilde{\Sigma}^{(C)}_{II}=\left\{\sigma_{a|x}^{(C),II}\right\}$,
\begin{equation}\label{sig_3}
\Sigma^{(C)}_{III}=P\otimes |\psi\rangle\langle\psi|
\end{equation}for some quantum box $P$ on subsystem AB and some pure state $|\psi\rangle$ on subsystem C.

Therefore, any optimization of linear function over set of biseparable assemblages can be reduced to optimization over assemblages given by one of the above forms (\ref{sig_1}-\ref{sig_3}). Moreover, extremal no-signaling assemblage belongs to the set of biseparable assemblages only if is one of the forms (\ref{sig_1}-\ref{sig_3}) - in particular this shows that any inflexible assemblage constructed according to subsection \ref{sec} cannot be biseparable.
\color{black}

In the main text of this paper we restrict our attention only to the simplest nontrivial case of tripartite steering with two uncharacterised subsystems. Namely we assume $a,b,x,y\in \left\{0,1\right\}$. In this case it is useful to rearrange all elements of a given assemblage into the following box
\begin{equation}\label{sup_box}
\Sigma^{(C)}=\begin{pmatrix}
\begin{array}{cc|cc}
 \sigma^{(C)}_{00|00} &  \sigma^{(C)}_{01|00} & \sigma^{(C)}_{00|01} &  \sigma^{(C)}_{01|01} \\  
 \sigma^{(C)}_{10|00} & \sigma^{(C)}_{11|00}& \sigma^{(C)}_{10|01}  & \sigma^{(C)}_{11|01} \\ \hline
 \sigma^{(C)}_{00|10} & \sigma^{(C)}_{01|10} & \sigma^{(C)}_{00|11}&  \sigma^{(C)}_{01|11}  \\
   \sigma^{(C)}_{10|10} & \sigma^{(C)}_{11|10} & \sigma^{(C)}_{10|11} & \sigma^{(C)}_{11|11}
\end{array}
\end{pmatrix}
\end{equation}where pairs $a,x$ label rows and pairs $b,y$ label columns. No-signaling conditions (\ref{sup_as2}-\ref{sup_as3}) have now simple interpretation. In each row sum of two operators on the right-hand side must be equal to the sum of two operators on the left-hand side. Similarly, in each column the sum of two operators in the upper part must be equal to the sum of two operators in the lower part. Normalisation (\ref{sup_as4}) is on the other hand encoded in the sum of traces of operators with the same pair $x,y$.

From the definition of LHS assemblage (see (\ref{sup_new_lhs})) it follows, that any such assemblage can be characterised by a convex combination of boxes (\ref{sup_box}) which have only four nonzero positions occupied by the same pure state forming a rectangle with exactly one element corresponding to each pair $x,y$, for example
\begin{equation}
\Sigma^{(C)}=\begin{pmatrix}
\begin{array}{cc|cc}
 |\phi\rangle \langle \phi| &  0 & 0 & |\phi\rangle \langle \phi|  \\  
 0 & 0 & 0  & 0 \\ \hline
 0 & 0 & 0 &  0  \\
 |\phi\rangle \langle \phi| & 0 & 0 & |\phi\rangle \langle \phi|
\end{array}
\end{pmatrix}.
\end{equation}Note that each assemblage of this form is inflexible (see Section \ref{sec}), hence it is an extremal point in the set of all no-signaling assemblages (obviously it is then an extremal point in the set of LHS assemblages as well). In particular, all problems based on optimization over set of LHS assemblages my be reduced to optimization over set of this extremal points. Because of the previous discussion, this optimization boils down to optimization over pure states $|\phi\rangle$ (from $d_C$ dimensional space describing assemblage) and deterministic boxes of conditional probabilities (which describe which four position in the box (\ref{sup_box}) are occupied by $|\phi\rangle \langle \phi|$). Note that in considering setting (i.e. $a,b,x,y\in \left\{0,1\right\}$) there are 16 deterministic boxes $L_{\alpha\beta\gamma\delta}=\left\{p^{(AB)}_{\alpha\beta\gamma\delta}(ab|xy)\right\}_{a,b,x,y}$, defined by conditional probabilities ($\oplus$ stands here for addition modulo 2 and $\alpha,\beta, \gamma, \delta\in\left\{0,1\right\}$).
\begin{equation}
 p^{(AB)}_{\alpha\beta\gamma\delta}(ab|xy)=
\begin{cases}
1\ \ \ \mathrm{for}\ a=\alpha x\oplus \beta, b=\gamma y\oplus \delta \\
0 \ \ \  \mathrm{otherwise}.
\end{cases}
\end{equation}This explains remark after after Proposition 4.

As it was shown not all no-signaling assemblages admit quantum realisation. In particular one can find extremal points in the set of no-signaling assemblages without quantum realisation. Indeed, consider assemblage $\sigma^{(C)}_{ab|xy}=p^{(AB)}(ab|xy)|\phi^{(C)}\rangle \langle \phi^{(C)}|$ with $p^{(AB)}(ab|xy)$ coming from non-local (nontrivial) extremal points in the polytope of no-signaling correlations (no-signaling boxes). This assemblage is inflexible (see Section \ref{sec}) and therefore extremal. In the next subsections we introduce a set of tools which together with reasoning presented in the main text provide a construction of extremal no-signaling assemblages with quantum realisation. Note that existence of such an extremal points provide a contrast with the theory of polytopes of no-signaling correlations (no-signaling boxes) in which similar situation is impossible, regardless of the number of parties, settings or outcomes \cite{our}.

Note that all presented notions can be further extended to the general multipartie setting with $n$ uncharacterised parties performing a measurements on a joint $n+1$ party system (see \cite{SAPHS18}). In this multipartite steering scenario each party related to uncharacterised subsystem $A_i$ perform local measurements labeled by $x_i\in \mathcal{X}_i$ obtaining outcomes labeled by $a_i\in \mathcal{A}_i$. Probabilistic description of a state of characterised quantum subsystem $C$ conditioned upon indexes of measurements $\mathbf{x}_n=(x_1,\ldots,x_n)$ and outcomes $\mathbf{a}_n=(a_1,\ldots,a_n)$ in this experimental setting under no-signaling (but non necessarily quantum) constrains is once more given by the notion of no-signaling assemblage. Multipartite no-signaling assemblage $\Sigma^{(C)}=\left\{\sigma^{(C)}_{\mathbf{a}_n|\mathbf{x}_n}\right\}_{\mathbf{a}_n,\mathbf{x}_n}$ is therefore a collection of subnormalised states $\sigma_{\mathbf{a}^{(C)}_n|\mathbf{x}_n}$ (acting on $d_C$-dimensional Hilbert space describing characterised subsystem $C$) such that
\begin{equation}\label{def11}
\forall_{\mathbf{x}_n} \sum_{\mathbf{a}_n} \sigma^{(C)}_{\mathbf{a}_n|\mathbf{x}_n}=\sigma_C,
\end{equation}where $\sigma_C$ is some state, and for any set of indexes $I=\left\{i_1,\ldots, i_s\right\}\subset \left\{1,\ldots, n\right\}$ with $1\leq s<n$ there exist an operator $\sigma^{(C)}_{a_{i_1}\ldots a_{i_s}|x_{i_1}\ldots x_{i_s}}$ that fulfills 
\begin{equation}\label{def12}
\forall_{a_k,k\in I}\forall_{\mathbf{x}_n} \sum_{a_j,j\notin I} \sigma^{(C)}_{\mathbf{a}_n|\mathbf{x}_n}=\sigma^{(C)}_{a_{i_1}\ldots a_{i_s}|x_{i_1}\ldots x_{i_s}}.
\end{equation}
Assemblage $\Sigma^{(C)}=\left\{\sigma^{(C)}_{\mathbf{a}_n|\mathbf{x}_n}\right\}$ of this form is quantum (i.e. admits quantum realisation) if there exist some quantum state $\rho^{(A_1\ldots A_nC)}$ of some composed system $A_1\ldots A_nC$ and some local POVMs $M^{(A_1)}_{a_1|x_1}, \ldots, M^{(A_n)}_{a_n|x_n}$ such that
\begin{equation}
\sigma_{\mathbf{a}_n|\mathbf{x}_n}^{(C)}=\mathrm{Tr}_{A_1\ldots A_n}(M^{(A_1)}_{a_1|x_1}\otimes \ldots M^{(A_n)}_{a_n|x_n}\otimes \mathds{1}\rho^{(A_1\ldots A_nC)}.
\end{equation}Note that both set of no-signaling and quantum assemblages are convex sets. Once more, similar to the above discussion, one may also introduce convex subsets of quantum assemblages given by all assemblages admitting LHS model \cite{SAPHS18} of the form
\begin{equation}\label{sup_lhsxxx}
\sigma^{(C)}_{\mathbf{a}_n|\mathbf{x}_n}=\sum_i q_i p^{(A_1)}_i(a_1|x_1)\ldots p^{(A_n)}_i(a_n|x_n)\sigma^{(C)}_i
\end{equation}where $q_i\geq 0, \sum_i q_i=1$, $\sigma^{(C)}_i$ are some states of characterised subsystem C and $p^{(A_j)}_i(a_j|x_j)$ denotes conditional probability distributions for uncharacterised subsystem $A_j$ respectively. In a natural way one can also consider general multipartite biseparable assemblages (forming a convex set as well) defined as those assemblages with quantum realisation given by local measurements performed on some biseparable (i.e. not genuine entangled) state. 

\subsection{Notion on inflexibility in the multipartite case}\label{multipartite_case}

In order to tackle the problem of quantum realisation of extremal no-signaling assemblage we firstly introduce the notion of inflexibilty of assemblages of pure state in the full generality for multipartite steering scenario with uncharcterised subsystems $A_i,\ldots A_n$ and characterised subsystem $C$.

We say that a given no-signaling assemblage $\Sigma^{(C)}=\left\{\sigma^{(C)}_{\mathbf{a}_n|\mathbf{x}_n}\right\}_{\mathbf{a}_n,\mathbf{x}_n}$ is an assemblage of pure states if for any index $\mathbf{a}_n,\mathbf{x}_n$ rank of $\sigma^{(C)}_{\mathbf{a}_n|\mathbf{x}_n}$ is not greater than one (thus in general such assemblage may consist of subnormalized pure states and zeros). Note that given assemblage of pure states may not be extremal in the set of all no-signaling assemblages, as well as there are extremal no-signaling assembles that are not assemblages of pure states. 

Fixed any no-signaling assemblage of pure states $\Sigma^{(C)}=\left\{p_{\mathbf{a}_n|\mathbf{x}_n}|\psi_{\mathbf{a}_n|\mathbf{x}_n}\rangle \langle \psi_{\mathbf{a}_n|\mathbf{x}_n}|\right\}_{\mathbf{a}_n|\mathbf{x}_n}$. Consider assemblages of pure states $\hat{\Sigma}^{(C)}=\left\{q_{\mathbf{a}_n|\mathbf{x}_n}|\psi_{\mathbf{a}_n|\mathbf{x}_n}\rangle \langle \psi_{\mathbf{a}_n|\mathbf{x}_n}|\right\}_{a,b,x,y}$ where all pure states $|\psi_{\mathbf{a}_n|\mathbf{x}_n}\rangle \langle \psi_{\mathbf{a}_n|\mathbf{x}_n}|$ are the same as in $\Sigma^{(C)}$. Finally, let $S_{\Sigma^{(C)}}$ denote the set of all such $\hat{\Sigma}^{(C)}$ for which $p_{\mathbf{a}_n|\mathbf{x}_n}=0$ implies $q_{\mathbf{a}_n|\mathbf{x}_n}=0$ for any index $\mathbf{a}_n,\mathbf{x}_n$. We say that assemblage $\tilde{\Sigma}^{(C)}\in S_{\Sigma^{(C)}}$ is \textit{similar} to $\Sigma^{(C)}$ and we say that an assemblage of pure states $S_{\Sigma^{(C)}}$ is \textit{inflexible}, if the set $S_{\Sigma^{(C)}}$ consists of only one element (namely $\Sigma^{(C)}$ itself).

To justify introduction of this notion, consider a convex decomposition of some assemblage of pure states $\Sigma^{(C)}=p\Sigma^{(C)}_1+(1-p)\Sigma^{(C)}_2$, with $p\in(0,1)$ and $\Sigma^{(C)}_i=\left\{\sigma_{\mathbf{a}_n|\mathbf{x}_n}^{(C,i)}\right\}_{\mathbf{a}_n,\mathbf{x}_n}$. For a given index $\mathbf{a}_n|\mathbf{x}_n$ we have

\begin{equation}\label{ex_pure}
p_{\mathbf{a}_n|\mathbf{x}_n}|\psi_{\mathbf{a}_n|\mathbf{x}_n}\rangle\langle \psi_{\mathbf{a}_n|\mathbf{x}_n}|=p\sigma^{(1)}_{\mathbf{a}_n|\mathbf{x}_n}+(1-p)\sigma^{(2)}_{\mathbf{a}_n|\mathbf{x}_n}.
\end{equation}

Let $p_{\mathbf{a}_n|\mathbf{x}_n}=0$ and $\sigma^{(C,i)}_{\mathbf{a}_n|\mathbf{x}_n}\neq 0$ for $i=1,2$. Because $p_{\mathbf{a}|\mathbf{x}}=p\mathrm{Tr}(\sigma^{(1)}_{\mathbf{a}|\mathbf{x}})+(1-p)\mathrm{Tr}(\sigma^{(2)}_{\mathbf{a}_n|\mathbf{x}_n})$ we get the following convex combination of states
\begin{widetext}
\begin{equation}\label{ex_pure}
|\psi_{\mathbf{a}_n|\mathbf{x}_n}\rangle\langle \psi_{\mathbf{a}_n|\mathbf{x}_n}|=\frac{p\mathrm{Tr}(\sigma^{(C,1)}_{\mathbf{a}_n|\mathbf{x}_n})}{p_{\mathbf{a}_n|\mathbf{x}_n}}\frac{\sigma^{(C,1)}_{\mathbf{a}_n|\mathbf{x}_n}}{\mathrm{Tr}(\sigma^{(C,1)}_{\mathbf{a}_n|\mathbf{x}_n})}+\frac{(1-p)\mathrm{Tr}(\sigma^{(C,2)}_{\mathbf{a}_n|\mathbf{x}_n})}{p_{\mathbf{a}_n|\mathbf{x}_n}}\frac{\sigma^{(C,2)}_{\mathbf{a}_n|\mathbf{x}_n}}{\mathrm{Tr}(\sigma^{(C,2)}_{\mathbf{a}_n|\mathbf{x}_n})}.
\end{equation}
\end{widetext}From this $\sigma^{(C,i)}_{\mathbf{a}_n|\mathbf{x}_n}=\mathrm{Tr}(\sigma^{(C,i)}_{\mathbf{a}_n|\mathbf{x}_n})|\psi_{\mathbf{a}_n|\mathbf{x}_n}\rangle\langle \psi_{\mathbf{a}_n|\mathbf{x}_n}|$ for $i=1,2$. By considering remaining trivial possibilities ($p_{\mathbf{a}_n|\mathbf{x}_n}=0$ or without loss of generality $p_{\mathbf{a}_n|\mathbf{x}_n}\neq 0$ with $\sigma^{(C,1)}_{\mathbf{a}_n|\mathbf{x}_n}=0)$ we show that $\Sigma^{(C)}=p\Sigma^{(C)}_1+(1-p)\Sigma^{(C)}_2$ implies $\Sigma^{(C)}_1,\Sigma^{(C)}_2\in S_{\Sigma^{(C)}}$. Therefore, inflexibility implies extremality.

For a given inflexible assemblage $\Sigma^{(C)}=\left\{\sigma^{(C)}_{\mathbf{a}_n|\mathbf{x}_n}\right\}_{\mathbf{a}_n,\mathbf{x}_n}$ we may define operators 
\begin{equation}
 \rho^{(C)}_{\mathbf{a}_n|\mathbf{x}_n}=
\begin{cases}
0 & \mbox{for $\sigma^{(C)}_{\mathbf{a}_n|\mathbf{x}_n}=0$,} \\
\frac{\sigma^{(C)}_{\mathbf{a}_n|\mathbf{x}_n}}{\mathrm{Tr}(\sigma^{(C)}_{\mathbf{a}_n|\mathbf{x}_n})} & \mbox{for $\sigma^{(C)}_{\mathbf{a}_n|\mathbf{x}_n}\neq 0 $.}
\end{cases} 
\end{equation}and introduce a linear functional $F_{\Sigma^{(C)}}$ given by the formula
\begin{equation}\label{expression}
F_{\Sigma^{(C)}}(\tilde{\Sigma}^{(C)})=\sum_{\mathbf{a}_n,\mathbf{x}_n}\mathrm{Tr}(\rho^{(C)}_{\mathbf{a}_n|\mathbf{x}_n}\tilde{\sigma}^{(C)}_{\mathbf{a}_n|\mathbf{x}_n})
\end{equation}where $\tilde{\Sigma}^{(C)}$ is any no-signaling assemblage. Observe that 
\begin{equation}
F_{\Sigma^{(C)}}(\tilde{\Sigma}^{(C)})\leq \prod_{i=i}^n|\mathcal{X}_i|
\end{equation}for any $\tilde{\Sigma}^{(C)}$ and $\leq$ is replace by equality if and only if $\tilde{\Sigma}^{(C)}\in S_{\Sigma^{(C)}}$. Because of that inflexible assemblage $\Sigma^{(C)}$ is the unique no-signaling assemblage for which functional $F_{\Sigma^{(C)}}$ attains maximal value over the set of all no-signaling assemblages. In other words, not only inflexible assemblages are extremal, but they are exposed as well.

Finally, it is worth noting that if inflexible assemblage $\Sigma^{(C)}$ used in the above construction of $F_{\Sigma^{(C)}}$ does not admit LHS model or is not biseparable, then linear functional $F_{\Sigma^{(C)}}$ defines a steering inequality for multipartite scenario detecting entangled or genuinely entangled multipartite states respectively - see example of this in Theorem 4 in the main text).
 
\color{black}

\subsection{Inflexibility of a certain class of assemblages}\label{sec}

Here we introduce several concepts and use them to formulate and prove an important lemma concerning a certain class of assemblages.

Consider a steering scenario in which two uncharacterised parties steer a characterised one by measurements described respectively by labels of settings and outcomes $a,b,x,y\in \left\{0,1\right\}$. Let $\Sigma^{(C)}$ be a general no-signaling assemblage (possibly postquantum) related to that scenario and with all position occupied by at most rank one operators ($p_i\geq 0$), i.e. let $\Sigma^{(C)}$ be an assemblage of pure states
\begin{widetext}
\begin{equation}\label{ass_pure}
\Sigma^{(C)}=\begin{pmatrix}
\begin{array}{cc|cc}
   p_1|\psi_1\rangle \langle \psi_1| &  p_2|\psi_2\rangle \langle \psi_2| &  p_3|\psi_3\rangle \langle \psi_3| & p_4|\psi_4\rangle \langle \psi_4|\\[8pt]
    p_5|\psi_5\rangle \langle \psi_5| & p_6|\psi_6\rangle \langle \psi_6|  & p_7|\psi_7\rangle \langle \psi_7| &  p_8|\psi_8\rangle \langle \psi_8| \\[8pt] \hline
		 p_9|\psi_9\rangle \langle \psi_9| &  p_{10}|\psi_{10}\rangle \langle \psi_{10}| &  p_{11}|\psi_{11}\rangle \langle \psi_{11}| & p_{12}|\psi_{12}\rangle \langle \psi_{12}|\\[8pt]
    p_{13}|\psi_{13}\rangle \langle \psi_{13}| & p_{14}|\psi_{14}\rangle \langle \psi_{14}|  & p_{15}|\psi_{15}\rangle \langle \psi_{15}| &  p_{16}|\psi_{16}\rangle \langle \psi_{16}| 
\end{array}
\end{pmatrix}.
\end{equation}
\end{widetext}Now take any row or any column from this assemblage - without loss of generality let us focus on a first row
\begin{equation}\label{row}
\begin{matrix}
\begin{array}{cc|cc}
   p_1|\psi_1\rangle \langle \psi_1|& p_2|\psi_2\rangle \langle \psi_2|&  p_3|\psi_3\rangle \langle \psi_3| & p_4|\psi_4\rangle \langle \psi_4|.
\end{array}
\end{matrix}
\end{equation}As a part of no-signaling assemblage, this row must satisfy $p_1|\psi_1\rangle \langle \psi_1|+p_2|\psi_2\rangle \langle \psi_2|=p_3|\psi_3\rangle \langle \psi_3|+p_4|\psi_4\rangle \langle \psi_4|$. This is possible only in one of three mutually exclusive instances. Firstly, all $|\psi_i\rangle \langle \psi_i|$ may be the same - then we will say that row (column) is \textit{type I}. Secondly, if $|\psi_1\rangle \langle \psi_1|$ and $|\psi_2\rangle \langle \psi_2|$ are different (i.e. $|\psi_1\rangle$ and $|\psi_2\rangle$ are linearly independent) and remaining pair of $|\psi_i\rangle \langle \psi_i|$ is the same as the first one (up to relabeling), then we will say that row (column) is \textit{type II}. Take any such row (column) without loss of generality given as
\begin{equation}\nonumber
\begin{matrix}
\begin{array}{cc|cc}
   p_1|\psi_1\rangle \langle \psi_1|& p_2|\psi_2\rangle \langle \psi_2|&  p_3|\psi_1\rangle \langle \psi_1| & p_4|\psi_2\rangle \langle \psi_2|.
\end{array}
\end{matrix}
\end{equation}Assume that $p_1=0$, then $p_3=0$ as well (since for type II coefficients $p_i$ related to the same projection must be equal, i.e. $p_1=p_3,p_2=p_4$) and the considered row (column) is described by the single rank one operator $|\psi_2\rangle \langle \psi_2|$ - therefore it is indistinguishable from row (column) of type I. Because of that observation, from now on we will always assume that in case of type II all $p_i$ are nonzero. Finally, it may happen that all $|\psi_i\rangle \langle \psi_i|$ are different - then we will say that row (column) is \textit{type III}. Note that in the case of type III all $p_i$ are nonzero. Moreover, if a given row (column) of the form (\ref{row}) is type III, then any other valid row (column) of no-signaling assemblage with the same $|\psi_i\rangle \langle \psi_i|$ must be of the following form
\begin{equation}\label{row2}
\begin{matrix}
\begin{array}{cc|cc}
    q_1|\psi_1\rangle \langle \psi_1|&  q_2|\psi_2\rangle \langle \psi_2|&  q_3|\psi_3\rangle \langle \psi_3| & q_4|\psi_4\rangle \langle \psi_4|
\end{array}
\end{matrix}
\end{equation}where $q_i=\alpha p_i$ and $\alpha$ is some non-negative constant. Indeed, without loss of generality we can take positive $\alpha$ such that $\alpha p_1=q_1$. Then if we multiply by $\alpha$ the first row (the one with $p_i$) and subtract it from the second (the one with $q_i$) we will have (due to no-signaling constraints) that $(q_2 - \alpha p_2)|\psi_2\rangle \langle \psi_2| = (q_3 - \alpha p_3)|\psi_3 \rangle \langle \psi_3| + (q_4-\alpha p_4)|\psi_4 \rangle \langle \psi_4|$ and this is only true for $q_i=\alpha p_i$ since all $|\psi_i \rangle \langle \psi_i|$ are different.

Recall that if $\Sigma^{(C)}=\left\{p_i|\psi^{(C)}_i\rangle \langle \psi^{(C)}_i|\right\}_i$ and $\tilde{\Sigma}^{(C)}=\left\{q_i|\psi^{(C)}_i\rangle \langle \psi^{(C)}_i|\right\}_i$, then $\tilde{\Sigma}^{(C)}$ is \textit{similar} to $\Sigma^{(C)}$ when $p_i=0$ implies $q_i=0$ (see Definition 2 in the main text and Subsection \ref{multipartite_case}). Note that the notion of similarity is not a symmetric relation. It may happen that $\tilde{\Sigma}^{(C)}$ is similar to $\Sigma^{(C)}$, but $\Sigma^{(C)}$ is not similar to $\tilde{\Sigma}^{(C)}$ (if $\tilde{\Sigma}^{(C)}$ has more positions occupied by $0$ than $\Sigma^{(C)}$). 

As we show in Subsection \ref{multipartite_case} \textit{inflexibility implies extremality}, since convex decomposition of assebmlage f pure states $\Sigma^{(C)}=p\Sigma^{(C)}_1+(1-p)\Sigma^{(C)}_2$ is possible only if both $\Sigma^{(C)}_i$ are similar to $\Sigma^{(C)}$.

Now we are in the position to exploit the concepts introduced above. As we shall see below careful application of no-signaling constraints leads to the following lemma, providing a sufficient condition for inflexibility (hence extremality and exposedness), crucial for the proof of the Proposition 4 in the main text.

\begin{lem}\label{lemma1}
Let $\Sigma^{(C)}$ be a pure no-signaling assemblage with a single column of type III and two rows (forming a pair in upper or lower part of assemblage) of type III. Then $\Sigma^{(C)}$ is inflexible.
\end{lem}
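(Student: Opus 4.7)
The plan is to pick any assemblage $\tilde\Sigma = \{ \tilde p_i |\psi_i\rangle\langle\psi_i| \}$ similar to $\Sigma$ and show $\tilde\Sigma = \Sigma$ by repeated application of the scaling uniqueness property for type III rows and columns recalled just before the lemma. Assume without loss of generality that the two type III rows are the upper pair (rows indexed by $x=0$) and that the type III column is column 1 (indexed by $(b,y)=(0,0)$). Applying the scaling property to row 1, to row 2, and to column 1 yields nonnegative constants $\alpha_1, \alpha_2, \beta$ with $\tilde p_i = \alpha_1 p_i$ on row 1, $\tilde p_i = \alpha_2 p_i$ on row 2, and $\tilde p_i = \beta p_i$ on column 1.

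The next step is to fuse these three scalings using the positions where the type III column meets the type III rows. The entry $i=1$ lies in row 1 and in column 1, and $p_1 > 0$ since all coefficients in a type III row or column are strictly positive, so $\alpha_1 p_1 = \beta p_1$ forces $\alpha_1 = \beta$; the analogous argument at $i=5$ gives $\alpha_2 = \beta$. Hence a single constant $\alpha$ rescales all entries in rows 1, 2 and column 1. Invoking the normalization constraint at $(x,y)=(0,0)$ yields $\tilde p_1+\tilde p_2+\tilde p_5+\tilde p_6 = \alpha(p_1+p_2+p_5+p_6)=\alpha$, which must equal $1$, so $\alpha = 1$. Consequently $\tilde p_i = p_i$ for the ten positions $i \in \{ 1,\dots,9,13 \}$.

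It remains to determine the six coefficients in the lower-right $2\times 3$ block of the assemblage. Writing $\delta_i := \tilde p_i - p_i$ for $i \in \{ 10,11,12,14,15,16 \}$ and subtracting the no-signaling identities for $\Sigma$ from those for $\tilde\Sigma$, each of columns 2, 3 and 4 yields a two-term residual equation $\delta_{j+8}|\psi_{j+8}\rangle\langle\psi_{j+8}| + \delta_{j+12}|\psi_{j+12}\rangle\langle\psi_{j+12}|=0$ (where $j \in \{ 2,3,4 \}$ indexes the column). Since two distinct rank-1 projectors are linearly independent, whenever the two lower projectors differ in each of columns 2, 3 and 4, all six $\delta_i$ vanish and $\tilde\Sigma=\Sigma$, closing the generic case.

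The anticipated main obstacle is the degenerate subcase in which one or more columns has coinciding lower projectors, say $|\psi_{10}\rangle\langle\psi_{10}|=|\psi_{14}\rangle\langle\psi_{14}|$, where the column equation only fixes $\delta_{10}+\delta_{14}=0$ and leaves a free parameter. To eliminate it I would bring in the residual row no-signaling equations $\delta_{10}|\psi_{10}\rangle\langle\psi_{10}|=\delta_{11}|\psi_{11}\rangle\langle\psi_{11}|+\delta_{12}|\psi_{12}\rangle\langle\psi_{12}|$ for row 3 and its analogue for row 4 (simplified thanks to $\delta_9=\delta_{13}=0$), and perform a short case split on the type (I, II or III) of rows 3 and 4 together with the coincidence pattern among the lower projectors. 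In each subcase, the scaling uniqueness for any surviving type III row, the pairwise coefficient equalities forced by type II, positivity $\tilde p_i \geq 0$, and the marginal normalizations at $(x,y) \in \{ (1,0),(1,1) \}$ combine to force every remaining $\delta_i$ to zero, concluding inflexibility.
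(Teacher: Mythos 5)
Your opening moves match the paper's: the type III scaling property applied to rows 1, 2 and column 1, fused at the overlap entries and then normalized at $(x,y)=(0,0)$, correctly pins down the ten entries outside the lower-right $2\times 3$ block, and your residual equations for the remaining six coefficients are the right bookkeeping; the generic case (all three of columns 2, 3, 4 having distinct lower projectors) is complete. The gap is in the degenerate case, and it is not a matter of routine detail: the toolkit you list there (residual row equations for rows 3 and 4, type II pairings, positivity, and the normalizations at $(1,0)$ and $(1,1)$) is provably insufficient in some coincidence patterns. Concretely, suppose row 3 is type II with projectors $P_9,P_{10},P_{10},P_9$ and row 4 is type I with all four projectors equal to $P_{13}=P_{10}$. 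Then columns 2 and 3 have coinciding lower projectors while column 4 does not, and your full residual system reduces to $\delta_{10}=\delta_{11}=t$, $\delta_{14}=\delta_{15}=-t$, $\delta_{12}=\delta_{16}=0$ with $t$ a free parameter: both marginal normalizations read $t-t=0$, positivity only confines $t$ to an interval of positive length, and the type II equality $\tilde{p}_{10}=\tilde{p}_{11}$ holds for every $t$ because $p_{10}=p_{11}$. So in this subcase the tools you name do not force the remaining $\delta_i$ to vanish.

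What saves the lemma is that such coincidence patterns are incompatible with a hypothesis you have already ``used up'': whenever a column's two lower entries are proportional to a single projector, the full (non-residual) column no-signaling equality forces its two upper entries to be proportional to that same projector, and having two such columns among columns 2, 3, 4 forces two entries of row 1 (and of row 2) to coincide, contradicting row 1 being type III. In the configuration above, columns 2 and 3 would force $P_2=P_6=P_{10}=P_3=P_7$. This vacuity argument is exactly the step the paper performs --- e.g.\ its observation that $|\tilde{\psi}_{10}\rangle\langle\tilde{\psi}_{10}|=|\psi_{9}\rangle\langle\psi_{9}|$ would imply $|\psi_{2}\rangle\langle\psi_{2}|=|\psi_{4}\rangle\langle\psi_{4}|$ --- and it requires returning to the upper block rather than working only with residuals of rows 3 and 4. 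Your plan needs this exclusion of coincidence patterns added to the case split; with it, the surviving subcases do close along the lines you describe.
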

\begin{proof}We may consider assemblage where first column and first and second rows are type III. Let us discuss all possibilities (without loss of generality).

Firstly, assume that the third row of considered assemblage (\ref{ass_pure}) is type III. Now because of previous discussion for rows of type III, any assemblage $\tilde{\Sigma}^{(C)}$ similar to $\Sigma^{(C)}$ must be of the form 
\begin{widetext}
\begin{equation}
\tilde{\Sigma}^{(C)}=\begin{pmatrix}
\begin{array}{cc|cc}
  \alpha_1 p_1|\psi_1\rangle \langle \psi_1| & \alpha_1 p_2|\psi_2\rangle \langle \psi_2| & \alpha_1 p_3|\psi_3\rangle \langle \psi_3| & \alpha_1 p_4|\psi_4\rangle \langle \psi_4|\\[8pt]
    \alpha_2 p_5|\psi_5\rangle \langle \psi_5| & \alpha_2 p_6|\psi_6\rangle \langle \psi_6|  & \alpha_2 p_7|\psi_7\rangle \langle \psi_7| & \alpha_2  p_8|\psi_8\rangle \langle \psi_8| \\[8pt]\hline
		\alpha_3 p_9|\psi_9\rangle \langle \psi_9| & \alpha_3 p_{10}|\psi_{10}\rangle \langle \psi_{10}| & \alpha_3 p_{11}|\psi_{11}\rangle \langle \psi_{11}| &\alpha_3 p_{12}|\psi_{12}\rangle \langle \psi_{12}|\\[8pt]
  \beta_1  p_{13}|\psi_{13}\rangle \langle \psi_{13}| &\beta_2 p_{14}|\psi_{14}\rangle \langle \psi_{14}|  &\beta_3 p_{15}|\psi_{15}\rangle \langle \psi_{15}| &\beta_4  p_{16}|\psi_{16}\rangle \langle \psi_{16}| 
\end{array}
\end{pmatrix}.
\end{equation}
\end{widetext}But since the first column of $\Sigma^{(C)}$ is type III as well we have $\alpha=\alpha_1=\alpha_2=\alpha_3=\beta_1$. Moreover, in order to satisfy no-signaling conditions in columns we have to put $\alpha=\beta_i$ for all $i$. Then, by normalisation constraint, $\alpha=1$ and $\Sigma^{(C)}$ is indeed inflexible in this case.\newline

Now assume that the third and the fourth rows of considered assemblage $\Sigma^{(C)}$ are type I. Any assemblage $\tilde{\Sigma}^{(C)}$ similar to $\Sigma$ must be of the form
\begin{widetext}
\begin{equation}\label{mat_gen}
\tilde{\Sigma}^{(C)}=\begin{pmatrix}
\begin{array}{cc|cc}
  \alpha p_1|\psi_1\rangle \langle \psi_1| & \alpha p_2|\psi_2\rangle \langle \psi_2| & \alpha p_3|\psi_3\rangle \langle \psi_3| & \alpha p_4|\psi_4\rangle \langle \psi_4|\\[8pt]
    \alpha p_5|\psi_5\rangle \langle \psi_5| & \alpha p_6|\psi_6\rangle \langle \psi_6|  & \alpha p_7|\psi_7\rangle \langle \psi_7| & \alpha  p_8|\psi_8\rangle \langle \psi_8| \\[8pt]\hline
		\alpha p_9|\psi_9\rangle \langle \psi_9| & \beta_2 p_{10}|\psi_{9}\rangle \langle \psi_{9}| & \beta_3 p_{11}|\psi_{9}\rangle \langle \psi_{9}| &\beta_4 p_{12}|\psi_{9}\rangle \langle \psi_{9}|\\[8pt]
  \alpha  p_{13}|\psi_{10}\rangle \langle \psi_{10}| &\gamma_2 p_{14}|\psi_{10}\rangle \langle \psi_{10}|  &\gamma_3 p_{15}|\psi_{10}\rangle \langle \psi_{10}| &\gamma_4  p_{16}|\psi_{10}\rangle \langle \psi_{10}| 
\end{array}
\end{pmatrix}.
\end{equation}
\end{widetext}Since first column of $\Sigma^{(C)}$ is type III, $|\psi_{10}\rangle \langle \psi_{10}|$ is different from $|\psi_{9}\rangle \langle \psi_{9}|$. We will show that this implies $\alpha=\beta_i=\gamma_i$ for all $i$. Indeed, this is true if columns are type II or III. If one of columns is not, for example we have $|\psi_2\rangle \langle \psi_2|=|\psi_6\rangle \langle \psi_6|$, then either $p_{10}=0$ or $p_{14}=0$. Assume that $p_{10}=0$ (the other case is analogous). We may conclude that $\alpha=\gamma_2$ and $\beta_2$ may be arbitrary. However, observe that since $p_{10}=0$, the form of \ref{mat_gen} does not depend on the value of $\beta_2$. If so, then without loss of generality we can always put $\alpha=\gamma_2=\beta_2$ (and repeat this reasoning for any $i$ if necessary). Finally, by normalisation constraint we see that $\Sigma^{(C)}$ must be inflexible.\newline

Consider the case when the third row of assemblage $\Sigma^{(C)}$ is type I and the fourth row is type II. Any assemblage $\tilde{\Sigma}^{(C)}$ similar to $\Sigma^{(C)}$ must be of the form (up to relabeling in the last row)
\begin{widetext}
\begin{equation}
\tilde{\Sigma}^{(C)}=\begin{pmatrix}
\begin{array}{cc|cc}
  \alpha p_1|\psi_1\rangle \langle \psi_1| & \alpha p_2|\psi_2\rangle \langle \psi_2| & \alpha p_3|\psi_3\rangle \langle \psi_3| & \alpha p_4|\psi_4\rangle \langle \psi_4|\\[8pt]
    \alpha p_5|\psi_5\rangle \langle \psi_5| & \alpha p_6|\psi_6\rangle \langle \psi_6|  & \alpha p_7|\psi_7\rangle \langle \psi_7| & \alpha  p_8|\psi_8\rangle \langle \psi_8| \\[8pt]\hline
		\alpha p_9|\psi_9\rangle \langle \psi_9| & \beta_2 p_{10}|\psi_{9}\rangle \langle \psi_{9}| & \beta_3 p_{11}|\psi_{9}\rangle \langle \psi_{9}| &\beta_4 p_{12}|\psi_{9}\rangle \langle \psi_{9}|\\[8pt]
  \alpha  p_{13}|\psi_{10}\rangle \langle \psi_{10}| &\gamma_2 p_{14}|\tilde{\psi}_{10}\rangle \langle \tilde{\psi}_{10}|  &\gamma_3 p_{13}|\psi_{10}\rangle \langle \psi_{10}| &\gamma_4  p_{14}|\tilde{\psi}_{10}\rangle \langle \tilde{\psi}_{10}| 
\end{array}
\end{pmatrix}.
\end{equation}
\end{widetext}Because first column of $\Sigma^{(C)}$ is type III, $|\psi_{10}\rangle \langle \psi_{10}|$ is different from $|\psi_{9}\rangle \langle \psi_{9}|$, so by previous case we obtain $\alpha=\beta_3=\gamma_3$. Note that the first row and the second row are type III, so both $p_4$ and $p_8$ are non-zero (the same is true for $p_2$ and $p_6$). Assume that $|\psi_{9}\rangle \langle \psi_{9}|=|\tilde{\psi}_{10}\rangle \langle \tilde{\psi}_{10}|$. Analysis of the fourth column shows that $|\psi_{4}\rangle \langle \psi_{4}|=|\psi_{8}\rangle \langle \psi_{8}|=|\psi_{9}\rangle \langle \psi_{9}|$. The same analysis of the second column shows that $|\psi_{2}\rangle \langle \psi_{2}|=|\psi_{6}\rangle \langle \psi_{6}|=|\psi_{9}\rangle \langle \psi_{9}|$. Therefore, $|\psi_{2}\rangle \langle \psi_{2}|=|\psi_{4}\rangle \langle \psi_{4}|$ which is in contradiction with the fact that first row of $\Sigma^{(C)}$ is type III. Therefore, $|\tilde{\psi}_{10}\rangle \langle \tilde{\psi}_{10}|\neq|\psi_{9}\rangle \langle \psi_{9}|$ and eventually we have $\alpha=\beta_i=\gamma_i$ for all $i$ which shows that $\Sigma^{(C)}$ is inflexible.\newline

\color{black}To conclude the proof assume the last case. Let the third and the fourth rows of considered assemblage $\Sigma^{(C)}$ be type II. Then depending on particular form of $\Sigma^{(C)}$, any similar assemblage $\tilde{\Sigma}^{(C)}$ is given (up to relabeling) either by 
\begin{widetext}
\begin{equation}\label{last_1}
\tilde{\Sigma}^{(C)}=\begin{pmatrix}
\begin{array}{cc|cc}
  \alpha p_1|\psi_1\rangle \langle \psi_1| & \alpha p_2|\psi_2\rangle \langle \psi_2| & \alpha p_3|\psi_3\rangle \langle \psi_3| & \alpha p_4|\psi_4\rangle \langle \psi_4|\\[8pt]
    \alpha p_5|\psi_5\rangle \langle \psi_5| & \alpha p_6|\psi_6\rangle \langle \psi_6|  & \alpha p_7|\psi_7\rangle \langle \psi_7| & \alpha  p_8|\psi_8\rangle \langle \psi_8| \\[8pt]\hline
		\alpha p_9|\psi_9\rangle \langle \psi_9| & \beta_2 p_{10}|\tilde{\psi}_{9}\rangle \langle \tilde{\psi}_{9}| & \beta_3 p_{9}|\psi_{9}\rangle \langle \psi_{9}| &\beta_4 p_{10}|\tilde{\psi}_{9}\rangle \langle \tilde{\psi}_{9}|\\[8pt]
  \alpha  p_{13}|\psi_{10}\rangle \langle \psi_{10}| &\gamma_2 p_{14}|\tilde{\psi}_{10}\rangle \langle \tilde{\psi}_{10}|  &\gamma_3 p_{13}|\psi_{10}\rangle \langle \psi_{10}| &\gamma_4  p_{14}|\tilde{\psi}_{10}\rangle \langle \tilde{\psi}_{10}|
\end{array}
\end{pmatrix}
\end{equation}or by
\begin{equation}\label{last_2}
\tilde{\Sigma}^{(C)}=\begin{pmatrix}
\begin{array}{cc|cc}
  \alpha p_1|\psi_1\rangle \langle \psi_1| & \alpha p_2|\psi_2\rangle \langle \psi_2| & \alpha p_3|\psi_3\rangle \langle \psi_3| & \alpha p_4|\psi_4\rangle \langle \psi_4|\\[8pt]
    \alpha p_5|\psi_5\rangle \langle \psi_5| & \alpha p_6|\psi_6\rangle \langle \psi_6|  & \alpha p_7|\psi_7\rangle \langle \psi_7| & \alpha  p_8|\psi_8\rangle \langle \psi_8| \\[8pt]\hline
		\alpha p_9|\psi_9\rangle \langle \psi_9| & \beta_2 p_{10}|\tilde{\psi}_{9}\rangle \langle \tilde{\psi}_{9}| & \beta_3 p_{10}|\tilde{\psi}_{9}\rangle \langle \tilde{\psi}_{9}| &\beta_4 p_{9}|\psi_{9}\rangle \langle \psi_{9}|\\[8pt]
  \alpha  p_{13}|\psi_{10}\rangle \langle \psi_{10}| &\gamma_2 p_{14}|\tilde{\psi}_{10}\rangle \langle \tilde{\psi}_{10}|  &\gamma_3 p_{13}|\psi_{10}\rangle \langle \psi_{10}| &\gamma_4  p_{14}|\tilde{\psi}_{10}\rangle \langle \tilde{\psi}_{10}|
\end{array}
\end{pmatrix}.
\end{equation}
\end{widetext}Consider first possibility (\ref{last_1}). This case can be treated in the same way as the case with one row of type I and one row of type II. Assume then that $\tilde{\Sigma}^{(C)}$ is of type (\ref{last_2}). Because first and second rows are type III, there may be at most two columns with two equal states in lower part, i.e. $|\psi_{10}\rangle \langle \psi_{10}|=|\tilde{\psi}_{9}\rangle \langle \tilde{\psi}_{9}|$ and $|\tilde{\psi}_{10}\rangle \langle \tilde{\psi}_{10}|=|\psi_{9}\rangle \langle \psi_{9}|$ - this is the only nontrivial situation to consider. If this is the case, obviously $\alpha=\beta_2=\gamma_2$ and since third and fourth rows are type II, we also have $\alpha=\beta_3=\beta_4$ and $\alpha=\gamma_3=\gamma_4$. Therefore, due to normalisation the proof is completed.
\end{proof}

\subsection{Extremal quantum assemblages from pure states}\label{simp}

Consider a quantum assemblage given by formula 
\begin{equation}
\sigma^{(C)}_{ab|xy}=\mathrm{Tr}_{AB}(M^{(A)}_{a|x}\otimes N^{(B)}_{b|y}\otimes \mathds{1}\rho^{(ABC)})
\end{equation}and an eigendecomposition 
\begin{equation}\nonumber
\rho^{(ABC)}=\sum_i \lambda_i|\psi^{(ABC)}_{i}\rangle \langle \psi^{(ABC)}_{i}|.
\end{equation}Obviously, $\sigma^{(C)}_{ab|xy}=\sum_i \lambda_i \sigma^{(C),i}_{ab|xy}$ where for each $i$
\begin{equation}\nonumber
\sigma_{ab|xy}^{(C),i}=\mathrm{Tr}_{AB}(M^{(A)}_{a|x}\otimes N^{(B)}_{b|y}\otimes \mathds{1}|\psi^{(ABC)}_{i}\rangle \langle \psi^{(ABC)}_{i}|)
\end{equation}defines another quantum assemblage. If original assemblage is extremal then $\sigma^{(C),i}_{ab|xy}=\sigma^{(C)}_{ab|xy}$ for any $i$. Therefore, any quantum assemblage which is also extremal in the set of all no-signaling assemblages can be obtained by measurements performed on a pure state.

\subsection{Proof of Proposition 4 - auxiliary Lemmas}

In order to conclude the proof of Proposition 4 we need to justify existence of appropriate PVMs, which performed on a considered genuine entangled state, give a rise to the quantum assemblage with first column and first and second row of type III (see sketch of proof of Proposition 4 and Subsection \ref{sec}). This follows directly from the following simple auxiliary Lemmas which we state here for the sake of completeness. 

\begin{lem}\label{lemmaA} 
Let  $|\psi^{(ABE)}\rangle\in \mathbb{C}^{2}\otimes \mathbb{C}^{2}\otimes\mathbb{C}^{d_E}$ be a genuinely entangled pure state. There exists a pure state $|\varphi^{(A)}\rangle\in \mathbb{C}^{2}$ on subsystem $A$, for which $\langle \varphi^{(A)}|\psi^{(ABE)}\rangle\in \mathbb{C}^{2}\otimes \mathbb{C}^{d_E}$ is an entangled vector.
\end{lem}
\begin{proof}We can write
\begin{equation}
|\psi^{(ABE)}\rangle=\alpha_1 |0^{(A)}\rangle|\psi_{1}^{(BE)}\rangle+\alpha_2 |1^{(A)}\rangle|\psi_{2}^{(BE)}\rangle
\end{equation}where $|\psi_{i}^{(BE)}\rangle\in \mathbb{C}^{2}\otimes \mathbb{C}^{d}$ for $i=1,2$ are pure states and $\alpha_1,\alpha_2\neq 0$, because if not, then $|\psi^{(ABE)}\rangle$ would be separable in $A|BE$ cut. If $|\psi_{i}^{(BE)}\rangle$ is entangled, it is enough to put $|\varphi^{(A)}\rangle=|i^{(A)}\rangle$. Assume that on the contrary for $i=1,2$ we have
\begin{equation}\label{maly}
|\psi_{i}^{(BE)}\rangle=|\phi_{i}^{(B)}\rangle|\phi_{i}^{(E)}\rangle
\end{equation}with pure states $|\phi_{i}^{(B)}\rangle \in \mathbb{C}^2, |\phi_{i}^{(E)}\rangle \in \mathbb{C}^{d_E}$, i.e. states $|\psi_{i}^{(BE)}\rangle$ are separable. Observe that vectors $|\phi_{1}^{(B)}\rangle, |\phi_{2}^{(B)}\rangle$ are linearly independent, since in other case $|\psi^{(ABE)}\rangle$ would be separable in $B|AE$ cut. In similar way entanglement $|\psi^{(ABE)}\rangle$ in $E|AB$ cut implies that vectors $|\phi_{1}^{(E)}\rangle, |\phi_{2}^{(E)}\rangle$ are linearly independent as well. Let $|\varphi^{(A)}\rangle=\beta|0^{(A)}\rangle+\gamma|0^{(A)}\rangle$ for $\beta,\gamma\neq 0$. Because of (\ref{maly}) we have
\begin{equation}\label{maly2}
\langle \varphi^{(A)}|\psi^{(ABE)}\rangle=\overline{\beta}\alpha_1|\phi_{1}^{(B)}\rangle|\phi_{1}^{(E)}\rangle+\overline{\gamma}\alpha_2|\phi_{1}^{(B)}\rangle|\phi_{1}^{(E)}\rangle.
\end{equation}
Assume that vector (\ref{maly2}) is separable. In that case there exists a vector $|\phi^{(B)}\rangle\in \mathbb{C}^2$ and a pure state $|\phi^{(E)}\rangle\in \mathbb{C}^{d_E}$ such that
\begin{equation}\label{maly3}
|\phi^{(B)}\rangle|\phi^{(E)}\rangle=\overline{\beta}\alpha_1|\phi_{1}^{(B)}\rangle|\phi_{1}^{(E)}\rangle+\overline{\gamma}\alpha_2|\phi_{2}^{(B)}\rangle|\phi_{2}^{(E)}\rangle.
\end{equation}Let $|\phi_{i,\perp}^{(B)}\rangle$ be an orthonormal to $|\phi_{i}^{(B)}\rangle$ for $i=1,2$. Therefore, we can write
\begin{equation}
\langle \phi_{1,\perp}^{(B)} |\phi^{(B)}\rangle|\phi^{(E)}\rangle=\overline{\gamma}\alpha_2\langle \phi_{1,\perp}^{(B)}|\phi_{2}^{(B)}\rangle|\phi_{2}^{(E)}\rangle
\end{equation}and
\begin{equation}
\langle \phi_{2,\perp}^{(B)} |\phi^{(B)}\rangle|\phi^{(E)}\rangle=\overline{\beta}\alpha_1\langle \phi_{2,\perp}^{(B)}|\phi_{1}^{(B)}\rangle|\phi_{1}^{(E)}\rangle,
\end{equation}where $\langle \phi_{1,\perp}^{(B)}|\phi_{2}^{(B)}\rangle, \langle \phi_{2,\perp}^{(B)}|\phi_{1}^{(B)}\rangle\neq 0$ according to linear independence of $|\phi_{1}^{(B)}\rangle, |\phi_{2}^{(B)}\rangle$ in $\mathbb{C}^2$. If so, then $|\phi_{1}^{(E)}\rangle, |\phi_{2}^{(E)}\rangle$ are linearly dependent which is a contradiction. Therefore, vector (\ref{maly2}) must be entangled.
\end{proof}

\begin{lem}\label{lemmaB}
Consider a state $|\psi^{(ABC)}\rangle\in \mathbb{C}^{2}\otimes \mathbb{C}^{2}\otimes \mathbb{C}^{d}$ entangled within the $(A|BC)$ cut. Assume that there exists a state $|\varphi^{(A)}\rangle\in \mathbb{C}^{2}$ of the first subsystem such that $\langle \varphi^{(A)}|\psi^{(ABC)}\rangle$ is entangled. Then there exists an orthonormal basis $\left\{|\varphi_i^{(A)}\rangle\right\}$ of the first subsystem such that $|\psi_i^{(BC)}\rangle=\langle \varphi_i^{(A)}|\psi^{(ABC)}\rangle$ are entangled and linearly independent.
\end{lem}
\begin{proof}Without loss of generality put $|\varphi^{(A)}\rangle=|0\rangle$ and $|\psi^{(ABC)}\rangle = \lambda_1|0\rangle(a|00\rangle + b|11\rangle) + \lambda_2|1\rangle |\phi\rangle|\phi'\rangle$ where $\lambda_1,\lambda_2, a,b\neq 0$, $|\phi\rangle = \alpha |0\rangle + \beta |1\rangle$ and $|\phi'\rangle= \alpha' |0\rangle + \beta' |1\rangle + \gamma'|\Phi_{\perp}\rangle$ with $|\Phi_{\perp}\rangle$ orthogonal to $|0\rangle$ and $|1\rangle$. Observe that due to entanglement within the $(A|BC)$ cut, $ a|00\rangle + b|11\rangle$ and $|\phi\rangle|\phi'\rangle$ are linearly independent.

Choose the orthonormal basis in which we perform the measurement $\{|\varphi_1^{(A)}\rangle = x|0\rangle + y |1\rangle, |\varphi_2^{(A)}\rangle = y|0\rangle - x|1\rangle \}$ with reals $x,y \neq 0$. Then $|\psi_1^{(BC)}\rangle = \lambda_1 x (a |00\rangle +b|11\rangle) + \lambda_2 y |\phi\rangle|\phi'\rangle$ and $|\psi_2^{(BC)}\rangle = \lambda_1 y (a |00\rangle +b|11\rangle) - \lambda_2 x |\phi\rangle|\phi'\rangle$. Note that vectors $|\psi_i^{(BC)}\rangle$ are linearly independent.

Observe that $\rho_i^{(C)}=\mathrm{Tr}_B(|\psi^{(BC)}_{i}\rangle \langle \psi^{(BC)}_{i}|)=\sum_j |\psi_{i,j}\rangle \langle \psi_{i,j}|$ where 
\begin{equation}\label{1}
|\psi_{1,0}\rangle= \lambda_1 x a |0\rangle + \lambda_2 y \alpha (\alpha' |0\rangle + \beta' |1\rangle + \gamma'|\Phi_{\perp}\rangle),
\end{equation}
\begin{equation}
\label{2}
|\psi_{1,1}\rangle= \lambda_1 x b |1\rangle + \lambda_2 y \beta (\alpha' |0\rangle + \beta' |1\rangle + \gamma'|\Phi_{\perp}\rangle),
\end{equation}
\begin{equation}\label{3}
|\psi_{2,0}\rangle=\lambda_1 y a |0\rangle - \lambda_2 x \alpha (\alpha' |0\rangle + \beta' |1\rangle + \gamma'|\Phi_{\perp}\rangle),
\end{equation}
\begin{equation}\label{4}
|\psi_{2,1}\rangle=\lambda_1  y b |1\rangle - \lambda_2 x \beta (\alpha' |0\rangle + \beta' |1\rangle + \gamma'|\Phi_{\perp}\rangle).
\end{equation}

Looking at the form of this vectors one can see that only for finite number of pairs $x,y$, vectors $|\psi_{i,j}\rangle$ can be linearly dependent (for $j=0,1$ and fixed $i$). If so, there is only finite number of pairs $x,y$ for which at least one of $\rho_i^{(C)}$ is pure (equivalently $|\psi_i^{(BC)}\rangle$ is not entangled). Therefore, we can always find some basis $\{|\varphi^{(A)}_i\rangle\}$ of the first subsystem, parametrised by $x$ and $y$, such that $|\psi^{(BC)}_i\rangle$ are entangled and linearly independent.
\end{proof}

\begin{lem}\label{lemmaC}
Consider two entangled vectors (possibly subnormalised) $|\phi_1^{(AB)}\rangle,|\phi_2^{(AB)}\rangle\in \mathbb{C}^{2}\otimes \mathbb{C}^{d}$ and a pair of non-commuting PVMs with two outcomes (described by elements $P^{(A)}_{a|x}$) on the first subsystem. For each $j=1,2$ operators $\sigma^{(B),j}_{a|x}=\mathrm{Tr}_{A}(P^{(A)}_{a|x}\otimes \mathds{1}|\phi^{(AB)}_j\rangle \langle \phi^{(AB)}_j|)$ form a row of type III. Moreover, vectors $|\phi_1^{(AB)}\rangle,|\phi_2^{(AB)}\rangle$ are linearly dependent if and only if for any choice of index $(a,x)$ operators $\sigma^{(B),1}_{a|x}, \sigma^{(B),2}_{a|x}$ are linearly dependent.
\end{lem}

\begin{proof} Without loss of generality we may put $|\phi_1^{(AB)}\rangle = \sum_i\alpha_i |i\rangle|\phi_{1,i}\rangle$ and $|\phi_2^{(AB)}\rangle = \sum_i\beta_i |i\rangle|\phi_{2,i}\rangle$ where $i=0,1$. Obviously $\alpha_i\neq 0$ (respectively $\beta_i\neq 0$) and $|\phi_{1,i}\rangle$ (respectively $|\phi_{2,i}\rangle$)  are linearly independent. The first part can be seen from direct calculations with $P^{(A)}_{0|0}=|0\rangle\langle 0|$ and $P^{(A)}_{0|1}=|\tilde{0}\rangle\langle \tilde{0}|$ where $|\tilde{0}\rangle=\gamma |0\rangle +\delta|1\rangle$ is some normalised vector with $\gamma,\delta\neq 0$. The if and only if part follows from no-signaling property of type III row expressed by (\ref{row2}).
\end{proof}

Observe that due to above Lemmas, it was established that one can choose a pair of PVMs with two outcomes on subsystems A and B respectively, in such a way that no-signaling assemblage $\Sigma^{(C)}$ (obtained by this measurements preformed on some tripartite genuine entangled pure state $|\psi^{(ABC)}\rangle$) has first column and first and second row of type III. According to the Lemma \ref{lemma1}, this particular $\Sigma^{(C)}$ is therefore inflexible and that completes the proof of Proposition 4.

\subsection{Computation of LHS and BIS bounds for Example 5}
Let us consider a three qubits GHZ state $|\psi\rangle=\frac{1}{\sqrt{2}}\left(|000\rangle +|111\rangle\right)$. Define quantum assemblage $\Sigma_{GHZ}$ according to $\sigma_{ab|xy}=\mathrm{Tr}_{AB}(P_{a|x}\otimes Q_{b|y}\otimes \mathds{1}|\psi\rangle \langle \psi|)$ where $P_{0|0}=Q_{0|0}=|+\rangle \langle +|$ and $P_{0|1}=Q_{0|1}=|0\rangle \langle 0|$, i.e. 
\begin{equation}\nonumber
\Sigma_{GHZ}=\frac{1}{4}\begin{pmatrix}
\begin{array}{cc|cc}
 |+\rangle \langle +| &  |-\rangle \langle -| & |0\rangle \langle 0| &  |1\rangle \langle 1| \\  
 |-\rangle \langle -| & |+\rangle \langle +|& |0\rangle \langle 0|  &|1\rangle \langle 1|\\ \hline
 |0\rangle \langle 0| & |0\rangle \langle 0| &2|0\rangle \langle 0|&  0  \\
   |1\rangle \langle 1| &  |1\rangle \langle 1|& 0 & 2|1\rangle \langle 1| 
\end{array}
\end{pmatrix}.
\end{equation}According to Proposition 4 $\Sigma^{(C)}_{GHZ}$ is inflexible and not biseparable. We will compute upper bounds $\mathcal{C}_{\mathbf{LHS}}$ and $\mathcal{C}_{\mathbf{BIS}}$ for inequality $F_{\Sigma^{(C)}_{GHZ}}$.
To do so we adopt the convention in which $\Sigma^{(C)}\in \bigoplus_{a,b,x,y} B(\mathbb{C}^2)_{sa}$ for any no-signaling assemblage $\Sigma^{(C)}$. With that we have $F_{\Sigma^{(C)}_{GHZ}}(\Sigma^{(C)})=\mathrm{Tr}(F_{\Sigma^{(C)}_{GHZ}}\Sigma)$ where $F_{\Sigma^{(C)}_{GHZ}}\in \bigoplus_{a,b,x,y} B(\mathbb{C}^2)_{sa}$ is of the form
\begin{equation}\label{sym_form}
F_{\Sigma^{(C)}_{GHZ}}=\begin{pmatrix}
\begin{array}{cc|cc}
 |+\rangle \langle +| &  |-\rangle \langle -| & |0\rangle \langle 0| &  |1\rangle \langle 1| \\  
 |-\rangle \langle -| & |+\rangle \langle +|& |0\rangle \langle 0|  &|1\rangle \langle 1|\\ \hline
 |0\rangle \langle 0| & |0\rangle \langle 0| &|0\rangle \langle 0|&  0  \\
   |1\rangle \langle 1| &  |1\rangle \langle 1|& 0 & |1\rangle \langle 1| 
\end{array}
\end{pmatrix}.
\end{equation}
Graphical analysis of representation (\ref{sym_form}) leads to simplification of further calculations. For further convenience we will introduce vector $\vec{\sigma}=(\sigma_x,\sigma_y,\sigma_z)$ consisting of standard Pauli matrices.

Consider an LHS assemblage $\Sigma^{(C)}_{LHS}=L\otimes |\phi\rangle \langle \phi|$ where $L$ is one of $16$ possible deterministic boxes (on susbsystem AB). Form of deterministic boxes implies that 
\begin{equation}
F_{\Sigma^{(C)}_{GHZ}}(\Sigma^{(C)}_{LHS})=\mathrm{Tr}\left((3|u\rangle \langle u|+|v\rangle \langle v|)|\phi\rangle \langle \phi|\right),
\end{equation}with $u=0,1,v=+,-$. If the above expression attains its maximal value for $(u,v)=(0,+)$ and some state $|\phi\rangle \langle \phi|$,then it attains its maximal value respectively for $(u,v)=(0,-),(1,+),(1,-)$ and states $\sigma_z|\phi\rangle \langle \phi|\sigma_z$, $\sigma_x|\phi\rangle \langle \phi|\sigma_x$ i $\sigma_x\sigma_z|\phi\rangle \langle \phi|\sigma_z\sigma_x$. In the end we may write
\begin{align}
\mathcal{C}_{\mathbf{LHS}}&=\sup_{|\phi\rangle}\mathrm{Tr}\left((3|0\rangle \langle 0|+|+\rangle \langle +|)|\phi\rangle \langle \phi|\right)\nonumber \\ \nonumber 
&=2+\sup_{|\phi\rangle}\mathrm{Tr}\left(\left(\frac{1}{2}\sigma_x+\frac{3}{2}\sigma_z\right)|\phi\rangle \langle \phi|\right)\\ 
&=\frac{4+\sqrt{10}}{2}.
\end{align}

In order to compute $\mathcal{C}_{\mathbf{BIS}}$ consider assemblages $\Sigma^{(C)}_I$ of the form (\ref{sig_1}) and $\Sigma^{(C)}_{II}$ of the form (\ref{sig_2}). As (\ref{sym_form}) is symmetric under transposition, we may write $\sup_{\Sigma^{(C)}_{I}}F_{\Sigma^{(C)}_{GHZ}}(\Sigma^{(C)}_{I})=\sup_{\Sigma^{(C)}_{II}}F_{\Sigma^{(C)}_{GHZ}}(\Sigma^{(C)}_{II})$ and restrict our attention to assemblages of the form (\ref{sig_1}) given explicitly by
\begin{widetext}
\begin{equation}\nonumber
\Sigma_{I}^{(C),1}=\begin{pmatrix}
\begin{array}{cc|cc}
 \sigma^{(C),1}_{0|0} & \sigma^{(C),1}_{1|0} & \sigma^{(C),1}_{0|1} &  \sigma^{(C),1}_{1|1} \\  
 0 & 0& 0  &0\\ \hline
 \sigma^{(C),1}_{0|0} & \sigma^{(C),1}_{1|0} &  \sigma^{(C),1}_{0|1} &   \sigma^{(C),1}_{1|1}  \\
   0 &  0& 0 & 0
\end{array}
\end{pmatrix},\ \  \Sigma_{I}^{(C),2}=\begin{pmatrix}
\begin{array}{cc|cc}
0 & 0& 0  &0\\  
\sigma^{(C),2}_{0|0} & \sigma^{(C),2}_{1|0} & \sigma^{(C),2}_{0|1} &  \sigma^{(C),2}_{1|1} \\ \hline
 \sigma^{(C),2}_{0|0} & \sigma^{(C),2}_{1|0} &  \sigma^{(C),2}_{0|1} &   \sigma^{(C),2}_{1|1}  \\
   0 &  0& 0 & 0 
\end{array}
\end{pmatrix}
\end{equation}

\begin{equation}\nonumber
\Sigma_{I}^{(C),3}=\begin{pmatrix}
\begin{array}{cc|cc}
 \sigma^{(C),3}_{0|0} & \sigma^{(C),3}_{1|0} & \sigma^{(C),3}_{0|1} &  \sigma^{(C),3}_{1|1} \\  
 0 & 0& 0  &0\\ \hline
   0 &  0& 0 & 0 \\
	\sigma^{(C),3}_{0|0} & \sigma^{(C),3}_{1|0} &  \sigma^{(C),3}_{0|1} &   \sigma^{(C),3}_{1|1}
\end{array}
\end{pmatrix},\ \  \Sigma_{I}^{(C),4}=\begin{pmatrix}
\begin{array}{cc|cc}
0 & 0& 0  &0\\  
\sigma^{(C),4}_{0|0} & \sigma^{(C),4}_{1|0} & \sigma^{(C),4}_{0|1} &  \sigma^{(C),4}_{1|1} \\ \hline
   0 &  0& 0 & 0 \\
	\sigma^{(C),4}_{0|0} & \sigma^{(C),4}_{1|0} &  \sigma^{(C),4}_{0|1} &   \sigma^{(C),4}_{1|1}
\end{array}
\end{pmatrix}
\end{equation}
\end{widetext}

Let $\Sigma_I^{(C),1}$ be described by operators $\left\{\sigma^{(C),1}_{0|0}, \sigma^{(C),1}_{1|0},\sigma^{(C),1}_{0|1},\sigma^{(C),1}_{1|1}\right\}$. If $\Sigma_I^{(C),2}$ is described by operators $\sigma^{(C),2}_{b|y}=\sigma_z\sigma^{(C),1}_{b|y}\sigma_z$ for $y,b=0,1$, then we have $F^{(C)}_{\Sigma^{(C)}_{GHZ}}(\Sigma^{(C),1}_{I})=F_{\Sigma^{(C)}_{GHZ}}(\Sigma^{(C),2}_{I})$. Similarly, if $\Sigma_I^{(C),3}$ is described by operators $\left\{\sigma^{(C),3}_{0|0}, \sigma^{(C),3}_{1|0},\sigma^{(C),3}_{0|1},\sigma^{(C),3}_{1|1}\right\}$ and $\Sigma_I^{(C),4}$ is described by operators $\sigma^{(C),4}_{0|0}=\sigma_z\sigma^{(C),3}_{b|y}\sigma_z$ for $y,b=0,1$ we have $F^{(C)}_{\Sigma_{GHZ}}(\Sigma^{(C),3}_{I})=F_{\Sigma^{(C)}_{GHZ}}(\Sigma^{(C),4}_{I})$. Finally, if $\Sigma_I^{(C),3}$ is described by operators $\sigma^{(C),3}_{b|y}=\sigma_x\sigma^{(C),1}_{b\oplus y|y}\sigma_x$, then $F_{\Sigma^{(C)}_{GHZ}}(\Sigma^{(C),1}_{I})=F_{\Sigma^{(C)}_{GHZ}}(\Sigma^{(C),3}_{I})$. Because of this reasoning, we may restrict our attention to assemblages given by $\Sigma_I^{(C),1}$.

Using representation with Pauli matrices we may express positive operators $\sigma^{(1)}_{b|y}$ by
\begin{equation}\label{pauli1}
\sigma^{(C),1}_{0|0}=\frac{\alpha_1}{2}\mathds{1}+\frac{\vec{n}\cdot\vec{\sigma}}{2},
\end{equation}
\begin{equation}\label{pauli2}
\sigma^{(C),1}_{1|0}=\frac{\alpha_2}{2}\mathds{1}+\frac{\vec{m}\cdot\vec{\sigma}}{2},
\end{equation}
\begin{equation}\label{pauli3}
\sigma^{(C),1}_{0|1}=\frac{\alpha_3}{2}\mathds{1}+\frac{\vec{l}\cdot\vec{\sigma}}{2},
\end{equation}\begin{equation}\label{pauli4}
\sigma^{(C),1}_{1|1}=\frac{\alpha_4}{2}\mathds{1}+\frac{\vec{k}\cdot\vec{\sigma}}{2},
\end{equation}where $0\leq \alpha_i \leq 1$, $|\vec{n}|\leq \alpha_1, |\vec{m}|\leq \alpha_2, |\vec{l}|\leq \alpha_3, |\vec{k}|\leq \alpha_4$ and $\alpha_1+\alpha_2=1=\alpha_3+\alpha_4$, $\vec{n}+\vec{m}=\vec{l}+\vec{k}$. Symmetries of (\ref{sym_form}) imply that $F_{\Sigma^{(C)}_{GHZ}}(\Sigma^{(C),1}_I)$ remains the same, when operators $\sigma^{(C),1}_{b|y}$ are replaced by operators $\tilde{\sigma}^{(C),1}_{b|y}=\sigma_z\sigma^{(C),1}_{b\oplus y \oplus 1|y}\sigma_z$ for $b,y=0,1$.

Without loss of generality we may then restrict only to operators satisfying $\sigma_z\sigma^{(C),1}_{0|0}\sigma_z=\sigma^{(C),1}_{1|0}$ and $\sigma_z\sigma^{(C),1}_{b|1}\sigma_z=\sigma^{(C),1}_{b|1}$ with $b=0,1$. Theretofore, (\ref{pauli1}-\ref{pauli4}) can be expressed in a simplified form
\begin{equation}\label{pauli10}
\sigma^{(C),1}_{0|0}=\frac{1}{4}\mathds{1}+\frac{n_x}{2}\sigma_x+\frac{n_y}{2}\sigma_y+\frac{n_z}{2}\sigma_z,
\end{equation}
\begin{equation}\label{pauli20}
\sigma^{(C),1}_{1|0}=\frac{1}{4}\mathds{1}-\frac{n_x}{2}\sigma_x-\frac{n_y}{2}\sigma_y+\frac{n_z}{2}\sigma_z,
\end{equation}
\begin{equation}\label{pauli30}
\sigma^{(C),1}_{0|1}=\frac{\alpha_3}{2}\mathds{1}+\frac{l_z}{2}\sigma_z,
\end{equation}\begin{equation}\label{pauli40}
\sigma^{(C),1}_{1|1}=\frac{\alpha_4}{2}\mathds{1}+\frac{k_z}{2}\sigma_z,
\end{equation}with $0\leq \alpha_3 \leq 1$, $\alpha_4=1-\alpha_3$, $|\vec{n}|\leq \frac{1}{2}$, $|l_z|\leq \alpha_3, |k_z|\leq 1-\alpha_3$ and $2n_z=l_z+k_z$.
According to (\ref{pauli10}-\ref{pauli40}) we may write
\begin{equation}\label{ost_szac}
F_{\Sigma^{(C)}_{GHZ}}(\Sigma^{(C),1}_I)=\frac{3}{2}+n_x+\frac{1}{2}\alpha_3+\frac{3}{2}l_z.
\end{equation}

Without loss of generality we may moreover assume that $n_x\geq 0$ and $n_y=0$, $|\vec{n}|= \frac{1}{2}$ and put $n_x=\sqrt{\frac{1}{4}-n_z^2}$. Expression (\ref{ost_szac}) reduce in this case to
\begin{equation}\label{ost_szac2rrrrrr}
f(n_z,\alpha_3,l_z)=\frac{3}{2}+\sqrt{\frac{1}{4}-n_z^2}+\frac{1}{2}\alpha_3+\frac{3}{2}l_z,
\end{equation}under conditions $0\leq \alpha_3\leq 1$, $|l_z|\leq \alpha_3$ and $|2n_z-l_z|\leq 1-\alpha_3$. Using explicitly Kuhn-Tucker method \cite{kuhn_tucker} or Mathematica package \cite{mathematica} it can be shown that (under constraints given above)
\begin{equation}
\sup_{\Sigma^{(C)}_I}F_{\Sigma^{(C)}_{GHZ}}(\Sigma^{(C)}_I)=\sup_{n_z,\alpha_3,l_z}f(n_z,\alpha_3,l_z)=\frac{5+\sqrt{5}}{2}.
\end{equation}

Let as now consider any assemblage of the form $\Sigma^{(C)}_{III}=P\otimes |\phi\rangle \langle \phi|$ where $P$ is some no-signaling box (assemblages of this form provide a relaxation of (\ref{sig_3}). 
\begin{equation}\label{P1}
P=\begin{pmatrix}
\begin{array}{cc|cc}
 p_1 &  p_2 & q_1 &  q_2 \\  
 p_3 & p_4 & q_3 &q_4\\ \hline
 t_1 & t_2 &  r_1&  r_2 \\
  t_3 & t_4& r_3 & r_4 
\end{array}
\end{pmatrix}.
\end{equation}
Symmetry of (\ref{sym_form}) with respect to transposition implies that the value of $F_{\Sigma_{GHZ}}(\Sigma_{III})$ remains the same if box $P$ of the form (\ref{P1}) is replaced by
\begin{equation}\label{P2}
P=\begin{pmatrix}
\begin{array}{cc|cc}
 p_1 &  p_3 & t_1 &  t_3 \\  
 p_2 & p_4 & t_2 &t_4\\ \hline
 q_1 & q_3 &  r_1&  r_3 \\
  q_2 & q_4& r_2 & r_4 
\end{array}
\end{pmatrix}.
\end{equation}If so, then we can restrict our attention to $P$ given by
\begin{equation}\label{P3}
P=\begin{pmatrix}
\begin{array}{cc|cc}
 p_1 &  p_2 & q_1 &  q_2 \\  
 p_2 & p_4 & q_3 &q_4\\ \hline
 q_1 & q_3 &  r_1&  r_2 \\
  q_2 & q_4& r_2 & r_4 
\end{array}
\end{pmatrix}.
\end{equation}Observe that (\ref{sym_form}) is invariant with respect to exchange between first and second rows composed with exchange between first and second columns. Therefore, the value of $F_{\Sigma^{(C)}_{GHZ}}(\Sigma^{(C)}_{III})$ remains the same if $P$ given by (\ref{P3}) is replace by the no-signaling box of the following form
\begin{equation}\label{P4}
P=\begin{pmatrix}
\begin{array}{cc|cc}
 p_4 &  p_2 & q_3 &  q_4 \\  
 p_2 & p_1 & q_1 &q_2\\ \hline
 q_3 & q_1 &  r_1&  r_2 \\
  q_4 & q_2& r_2 & r_4 
\end{array}
\end{pmatrix}.
\end{equation}Without loss of generality it is enough to consider $P$ given as
\begin{equation}\label{P5}
P=\begin{pmatrix}
\begin{array}{cc|cc}
 p_1 &  p_2 & q_1 &  q_2 \\  
 p_2 & p_1 & q_1 &q_2\\ \hline
 q_1 & q_1 &  r_1&  r_2 \\
  q_2 & q_2& r_2 & r_4 
\end{array}
\end{pmatrix}.
\end{equation}Observe that the value of $F_{\Sigma^{(C)}_{GHZ}}(\Sigma_{III})$ cannot decrease if a no-signaling box $P$ of the form (\ref{P5}) is replaced by
\begin{equation}\label{P6}
P=\begin{pmatrix}
\begin{array}{cc|cc}
 p_1 &  p_2 & q_1 &  q_2 \\  
 p_2 & p_1 & q_1 &q_2\\ \hline
 q_1 & q_1 &  r'_1&  0 \\
  q_2 & q_2& 0 & r'_4 
\end{array}
\end{pmatrix}
\end{equation}with $r'_1=r_1+r_2$ and $r'_4=r_4+r_2$. Therefore, it is enough to consider no-signaling boxes $P$ parametrized by
\begin{equation}
P=\begin{pmatrix}
\begin{array}{cc|cc}
 p &  \frac{1}{2}-p & \frac{r}{2} &  \frac{1-r}{2} \\  
 \frac{1}{2}-p & p & \frac{r}{2} &\frac{1-r}{2}\\ \hline
 \frac{r}{2} & \frac{r}{2} &  r&  0 \\
   \frac{1-r}{2} &  \frac{1-r}{2}& 0 & 1-r 
\end{array}
\end{pmatrix}
\end{equation}where $0\leq p\leq \frac{1}{2}$ and $0\leq r\leq 1$. With this parametrization we obtain
\begin{widetext}
\begin{equation}
F_{\Sigma^{(C)}_{GHZ}}(\Sigma^{(C)}_{III})=\mathrm{Tr}\left(|\phi\rangle \langle \phi|\left(2p|+\rangle \langle +|+(1-2p)|-\rangle \langle -|+3r|0\rangle \langle 0|+3(1-r)|1\rangle \langle 1|\right)\right)
\end{equation}which implies
\begin{align}
\sup_{\Sigma^{(C)}_{III}}F_{\Sigma^{(C)}_{GHZ}}(\Sigma_{III})&=2+\sup_{p,r}\left(\sup_{|\phi\rangle} \mathrm{Tr}\left(|\phi\rangle \langle \phi|\left(\left(2p-\frac{1}{2}\right)\sigma_x+\frac{3}{2}\left(2r-1\right)\sigma_z\right)\right)\right) \nonumber \\
&=2+\sup_{p,r}\sqrt{\left(2p-\frac{1}{2}\right)^2+\frac{9}{4}\left(2r-1\right)^2}=\frac{4+\sqrt{10}}{2}=\mathcal{C}_{\mathbf{LHS}},
\end{align}
\end{widetext}
so initial relaxation of (\ref{sig_3}) was admissible.

Finally, we get
\begin{equation}
\mathcal{C}_{\mathbf{BIS}}=\max_{i=I,II,III}\left\{\sup_{\Sigma^{(C)}_{i}}F_{\Sigma^{(C)}_{GHZ}}(\Sigma^{(C)}_{i})\right\}=\frac{5+\sqrt{5}}{2}
\end{equation}and $\mathcal{C}_{\mathbf{LHS}}<\mathcal{C}_{\mathbf{BIS}}$.

\subsection{Proof of Proposition 6 - application of Jordan's Lemma}

To conclude the proof of nontrivial implication of Proposition 6, recall the following Lemma.

\begin{lem}[Jordan's Lemma \cite{Jordan}]\label{lemma5} Let $P,Q\in M_n(\mathbb{C})$ be two projections. Then the space $\mathbb{C}^n=\bigoplus_i H_i$ can be decompose as a direct sum of subspaces $H_i$ of dimension lesser than or equal to two in such a way that projections $P$ and $Q$ restricted to subspaces $H_i$ are either given by
\begin{equation}\nonumber
P,Q|_{H_i}\in \left\{(0),(1)\right\}
\end{equation}when $\mathrm{dim}\ H_i =1$ or 
\begin{equation}\nonumber
P|_{H_i}=\begin{pmatrix}
    1& 0  \\
    0& 0 
\end{pmatrix},\   Q|_{H_i}=\begin{pmatrix}
    \cos^2\theta_i & \cos\theta_i\sin\theta_i  \\
    \cos\theta_i\sin\theta_i &\sin^2\theta_i 
\end{pmatrix}
\end{equation}for some angle $\theta_i\in \left[0,\frac{\pi}{2}\right)$, when $\mathrm{dim}\ H_i =2$, i.e. for any two projections $P,Q$ there exists an orthonormal basis for witch $P$ and $Q$ are simultaneously block diagonal.
\end{lem}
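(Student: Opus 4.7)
The plan is to use the spectral theorem for the self-adjoint operator $M:=PQP$ acting on the invariant subspace $\mathrm{Range}(P)$, whose spectrum is contained in $[0,1]$. Each eigenvalue $c\in[0,1]$ of $M|_{\mathrm{Range}(P)}$ will yield either a one-dimensional block (when $c\in\{0,1\}$) or a two-dimensional block (when $c\in(0,1)$) of the claimed form, and a mirror application to $\mathrm{Range}(P)^\perp$ will produce the remaining one-dimensional blocks.

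First I would handle the extreme eigenvalues. For $c=1$ the eigenspace of $M$ in $\mathrm{Range}(P)$ equals $\mathrm{Range}(P)\cap\mathrm{Range}(Q)$, on which both projections act as the identity, yielding one-dimensional blocks with $(P,Q)|_{H_i}=(1,1)$. For $c=0$, any eigenvector $v\in\mathrm{Range}(P)$ of $M$ satisfies $\|Qv\|^2=\langle v,Q^2v\rangle=\langle v,Qv\rangle=\langle v,PQPv\rangle=0$, so $Qv=0$ and we obtain one-dimensional blocks with $(P,Q)|_{H_i}=(1,0)$.

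For each $c\in(0,1)$ I would pick a unit eigenvector $v\in\mathrm{Range}(P)$ with $Mv=cv$ and set $u:=Qv-cv$. A direct check gives $\langle u,v\rangle=0$, $\|u\|^2=\langle v,Qv\rangle-c=c(1-c)$, and $Pu=PQv-cPv=cv-cv=0$, so $u\in\mathrm{Range}(P)^\perp$. With $e_2:=u/\sqrt{c(1-c)}$, the identity $Q^2=Q$ yields $Qu=(1-c)Qv=(1-c)cv+(1-c)u$, from which one verifies that $\mathrm{span}\{v,e_2\}$ is invariant under both $P$ and $Q$ with
\begin{equation*}
P|_{H_i}=\begin{pmatrix}1&0\\0&0\end{pmatrix},\qquad Q|_{H_i}=\begin{pmatrix}c&\sqrt{c(1-c)}\\\sqrt{c(1-c)}&1-c\end{pmatrix},
\end{equation*}
matching the claimed form with $\cos^2\theta_i=c$, $\theta_i\in(0,\pi/2)$. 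For mutual orthogonality of different two-dimensional blocks one uses $\langle v_1,Qv_2\rangle=0$ whenever $v_1,v_2\in\mathrm{Range}(P)$ are orthogonal eigenvectors of $M$ (since $Qv_2\in\mathbb{C}v_2+\mathrm{Range}(P)^\perp$), which propagates to orthogonality of the associated $u$-vectors.

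To finish the decomposition I would apply the same analysis to $\mathrm{Range}(P)^\perp$ using $M':=(I-P)Q(I-P)$. The eigenvalues $c'\in(0,1)$ of $M'$ pair up with those of $M$ via $c'=1-c$ and recover exactly the same two-dimensional blocks already constructed, while the extreme eigenvalues $c'=1$ and $c'=0$ supply the still-missing one-dimensional blocks $(P,Q)|_{H_i}=(0,1)$ on $\mathrm{Range}(P)^\perp\cap\mathrm{Range}(Q)$ and $(P,Q)|_{H_i}=(0,0)$ on $\mathrm{Range}(P)^\perp\cap\ker Q$. The main technical point is the explicit two-dimensional block computation together with the orthogonality check between distinct blocks; once these are in place, the full decomposition is a routine consequence of the spectral theorem applied in both halves.
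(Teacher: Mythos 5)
Your proof is correct. Note that the paper itself offers no proof of this statement: it is quoted as a classical result and attributed to Jordan, so there is nothing to compare against line by line. Your argument is the standard modern derivation via the spectral decomposition of the self-adjoint operator $PQP$ restricted to $\mathrm{Range}(P)$, and all the key computations check out: the identification of the extreme eigenvalues $c\in\{0,1\}$ with the one-dimensional blocks $\mathrm{Range}(P)\cap\mathrm{Range}(Q)$ and $\mathrm{Range}(P)\cap\ker Q$, the construction $u=Qv-cv$ with $Pu=0$, $\|u\|^2=c(1-c)$ and $Qu=(1-c)(cv+u)$ yielding the two-dimensional block with $\cos^2\theta_i=c$, and the orthogonality of distinct blocks via $\langle v_1,Qv_2\rangle=\langle v_1,PQPv_2\rangle=c_2\langle v_1,v_2\rangle=0$. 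Two small points worth making explicit: first, the whole argument tacitly assumes $P$ and $Q$ are \emph{orthogonal} (self-adjoint) projections, which is indeed what the paper means since they are PVM elements, but the word ``projection'' alone does not guarantee it. Second, when you say the eigenvalues of $M'=(I-P)Q(I-P)$ in $(0,1)$ ``pair up'' with those of $M$ and recover the same blocks, the underlying fact is that $v\mapsto u$ maps the $c$-eigenspace of $M|_{\mathrm{Range}(P)}$ injectively into the $(1-c)$-eigenspace of $M'|_{\ker P}$ (one checks $M'u=(1-c)u$) and vice versa, so the multiplicities agree and the $u$-vectors exhaust the non-extreme spectral subspaces of $\ker P$; you assert this rather than prove it, but it is a one-line verification and does not constitute a gap.
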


Consider now any pure state $|\tilde{\psi}^{(A'B'C)}\rangle\in \mathbb{C}^{d_{A'}}\otimes \mathbb{C}^{d_{B'}}\otimes \mathbb{C}^{d_{C}}$ and assemblage $\tilde{\Sigma}^{(C)}$ given by two pairs of PVMs with two outcomes $\tilde{\sigma}^{(C)}_{ab|xy}=\mathrm{Tr}_{A'B'}(\tilde{P}^{(A')}_{a|x}\otimes \tilde{Q}^{(B')}_{b|y}\otimes \mathds{1}|\tilde{\psi}^{(A'B'C)}\rangle \langle \tilde{\psi}^{(A'B'C)}|)$. By Jordan's Lemma two pairs of projections $\tilde{P}^{(A')}_{0|0}, \tilde{P}^{(A')}_{0|1}$ on subsystem A' and $\tilde{Q}^{(B')}_{0|0}, \tilde{Q}^{(B')}_{0|1}$ on subsystem B' can be considered as block-diagonal operators acting on a direct sum of qubit spaces (up to local isometries putting $\mathbb{C}$ inside $\mathbb{C}^2$). Therefore, effectively assemblage $\tilde{\Sigma}^{(C)}$ is given by a convex combination of assemblages obtained by PVMs performed on states of $\mathbb{C}^2\otimes \mathbb{C}^2\otimes \mathbb{C}^{d_{C}}$. If $F_{\Sigma^{(C)}}(\tilde{\Sigma}^{(C)})=4$ then each of this states must be up to local unitaries equal to $|\psi^{(ABC)}\rangle$ (which defines $F_{\Sigma^{(C)}}$). Using isomorphism $\bigoplus_i^k\mathbb{C}^2=\mathbb{C}^k\otimes \mathbb{C}^2$ for subsystems A' and B' we conclude the proof.

\subsection{Example of infexibility in tripartite scenario with $d$-outcomes}

In this subsection, we will provide an additional example of a construction of iflexible (hence extremal and exposed) no-signaling assemblage with quantum realisation. This construction in a tripartite scenario where two parties perform two measurements, each with $d$ outcomes, is not based on sufficient condition of inflexibility given in a central Lemma \ref{lemma1} from Subsection \ref{sec}. 

Consider an assemblage $\Sigma^{(C)}_{GHZ_d}=\left\{\sigma^{(C)}_{ab|xy}\right\}_{a,b,x,y}$ define by
\begin{equation}
\sigma^{(C)}_{ab|xy}=\mathrm{Tr}_{AB}(P^{(A)}_{a|x}\otimes Q^{(B)}_{b|y}\otimes \mathds{1}|GHZ^{(ABC)}_d\rangle \langle GHZ^{(ABC)}_d|)
\end{equation}with $P^{(A)}_{k|0}=|k\rangle \langle k|$, $P^{(A)}_{k|1}=|\phi_k\rangle \langle \phi_k|$, $P^{(A)}_{a|x}=Q^{(B)}_{a|x}$ and $a,b=0,\ldots, d-1$, $x,y=0,1$, where 
\begin{equation}
|GHZ^{(ABC)}_d\rangle=\frac{1}{\sqrt{d}}\sum_{k=0}^{d-1}|kkk\rangle.
\end{equation}and 
\begin{equation}
|\phi_k\rangle=\frac{1}{\sqrt{d}}\sum_{l=0}^{d-1}\omega^{lk}|l\rangle 
\end{equation}with $\omega=e^{\frac{2\pi i}{d}}$ i $k=0,\ldots, d-1$. Straightforward computations show that
\begin{eqnarray}\label{ghz1-4}
\sigma^{(C)}_{ab|00}=\frac{1}{d}\delta_{ab}|a\rangle \langle a|, \nonumber \\
\sigma^{(C)}_{ab|01}=\frac{1}{d^2}|a\rangle \langle a|, \nonumber \\
\sigma^{(C)}_{ab|10}=\frac{1}{d^2}|b\rangle \langle b|, \nonumber \\
\sigma^{(C)}_{ab|11}=\frac{1}{d^2}|\phi_r\rangle \langle \phi_r|,
\end{eqnarray}
with $r=d-a-b\ \mathrm{mod}\ d$.


To see that $\Sigma^{(ABC)}_{GHZ_d}$ is inflexible consider
\begin{equation}\label{equality}
\sum_{k=0}^{d-1} |\tilde{k}\rangle \langle \tilde{k}|=\sum_{k=0}^{d-1} |\tilde{\phi}_k\rangle \langle \tilde{\phi}_k|,
\end{equation}with $|\tilde{k}\rangle =\frac{1}{d}|k\rangle$ and $|\tilde{\phi}_k\rangle =\frac{1}{d}|\phi_k\rangle$ as well as another equality
\begin{equation}\label{test}
\sum_{k=0}^{d-1}\alpha_k |\tilde{k}\rangle \langle \tilde{k}|=\sum_{k=0}^{d-1} \beta_k|\tilde{\phi}_k\rangle \langle \tilde{\phi}_k|
\end{equation}with some $\alpha_k,\beta_k\geq 0$. Observe that if equality \ref{test} holds then either $\alpha_k,\beta_k=0$ for all $k$ or $\alpha_k,\beta_k>0$ for all $k$. We will show that moreover it implies that $\alpha_k=\beta_k=\alpha$ for some constant $\alpha$. 

Indeed, there exists a unitary matrix $U$, such that
\begin{equation}\label{eq1}
|\tilde{\phi}_k\rangle=\sum_{i=0}^{d-1}u_{ki}|\tilde{i}\rangle
\end{equation}and $|u_{ik}|=\frac{1}{\sqrt{d}}$. According to \cite{HJW93} and (\ref{test}) there exists another unitary $\tilde{U}$ such that
\begin{equation}\label{eq2}
\sqrt{\beta_k}|\tilde{\phi}_k\rangle=\sum_{i=0}^{d-1}\tilde{u}_{ki} \sqrt{\alpha_i}|\tilde{i}\rangle.
\end{equation}If so, then
\begin{equation}\label{r}
\sqrt{\beta_k}u_{ki}=\tilde{u}_{ki} \sqrt{\alpha_i}
\end{equation}for all $k,i$.

Without loss of generality, set $\alpha_1=\max_i \left\{\alpha_i\right\}$, $\alpha_2=\min_i \left\{\alpha_i\right\}$ and assume $\alpha_1>\alpha_2$. As for a fixed $k$ we have $|\tilde{u}_{ki}| \sqrt{\alpha_i}=|\tilde{u}_{kj}| \sqrt{\alpha_j}$ for all $i,j$, the above assumption implies that $|\tilde{u}_{k1}|<|\tilde{u}_{k2}|$. On the other hand, 
\begin{equation}
\sum_{k=0}^{d-1}|\tilde{u}_{k1}|^2=1=\sum_{k=0}^{d-1}|\tilde{u}_{k2}|^2
\end{equation}which is a contradiction. If so, then $\alpha_i=\alpha$ for all $i$ and since 
\begin{equation}
\sum_{i=0}^{d-1}\beta_k|u_{ki}|^2=\sum_{i=0}^{d-1}\alpha|\tilde{u}_{ki}|^2
\end{equation}unitarity of $U$ and $\tilde{U}$ implies $\beta_k=\alpha$ for any $k$.

 If $\Sigma^{(C)}$ is similar to $\Sigma^{(C)}_{GHZ_d}$, then obviously $\Sigma=\left\{\alpha_{ab|xy}\sigma^{(C)}_{ab|xy}\right\}_{a,b,x,y}$ with $\alpha_{ab|xy}\geq 0$. Moreover, for $y=1$ and $b$ 
\begin{equation}\label{r1}
\sum_a\alpha^{(C)}_{ab|11}\sigma_{ab|11}=\sum_a\alpha_{ab|01}\sigma^{(C)}_{ab|01},
\end{equation}while for $x=1$ and $a$
\begin{equation}\label{r2}
\sum_b\alpha_{ab|11}\sigma^{(C)}_{ab|11}=\sum_b\alpha_{ab|10}\sigma^{(C)}_{ab|10}.
\end{equation}Observe that analogous no-signaling constrains for $\Sigma^{(C)}_{GHZ_d}$ are given by (\ref{equality}). Therefore, by (\ref{r1}-\ref{r2}), (\ref{ghz1-4}) and the previous reasoning and (\ref{r1}-\ref{r2}), (\ref{ghz1-4}) we have $\alpha_{ab|xy}=1$ for all $a,b,x,y$ (because of normalization). Finally, $\Sigma^{(C)}=\Sigma^{(C)}_{GHZ_d}$ and $\Sigma^{(C)}_{GHZ_d}$ form an example of inflexible (hence extremal and exposed) assemblage. 

Note that according to (\ref{sig_1}), (\ref{sig_2}) i (\ref{sig_3}) shows that considered assemblage $\Sigma^{(C)}_{GHZ_d}$ is not biseparable (hence does nit admit an LHS model), therefore construction of functional $F_{\Sigma^{(C)}_{GHZ_d}}$ (see Subsection \ref{multipartite_case}) provides an example of steering inequality detecting tripartite genuine entanglement (including tripartite GHZ state $|GHZ^{(ABC)}_d\rangle$).

\subsection{Remarks on quantum realisations of no-signaling assemblages in different scenarios}

The main question of our interest asks whether a quantum assemblage can be an extremal non-classical point in the larger convex set of all no-signaling assemblages. In the case of bipartite steering, each no-signaling assemblage admits a quantum realisation, therefore such a question has a trivial answer or in other words it is somehow ill-defined. Below we identify the minimal scenario in which the question has a nontrivial answer. The simplest generalisation of the bipartite steering scenario is a tripartite case in which two uncharacterised parties (A, B) steer a characterised one (C). Firstly assume that Alice and Bob may perform local (and independent) measurements with settings labeled by $x$ or $y$ (respectively) and outcomes labeled by $a$ or $b$ (respectively), where $a,b,y\in\left\{0,1\right\}$ and $x=0$, i.e. there is \textit{only one measurement setting} for subsystem A. This situation is described by a no-signaling assemblage $\Sigma^{(C)}=\left\{\sigma^{(C)}_{ab|0y}\right\}_{a,b,y}$ with box representation (\ref{sup_box})
\begin{equation}
\Sigma^{(C)}=\begin{pmatrix}
\begin{array}{cc|cc}
 \sigma^{(C)}_{00|00} &  \sigma^{(C)}_{01|00} & \sigma^{(C)}_{00|01} &  \sigma^{(C)}_{01|01} \\  
 \sigma^{(C)}_{10|00} & \sigma^{(C)}_{11|00}& \sigma^{(C)}_{10|01}  & \sigma^{(C)}_{11|01}
\end{array}
\end{pmatrix}.
\end{equation}Take any PVM $P^{(A)}_{a|0}$ on subsystem A. Define bipartite assemblage $\tilde{\Sigma}^{(AC)}=\left\{\tilde{\sigma}^{(AC)}_{b|y}\right\}_{b,y}$ where $\tilde{\sigma}^{(AC)}_{b|y}=P^{(A)}_{0|0}\otimes \sigma^{(C)}_{0b|0y}+P^{(A)}_{1|0}\otimes\sigma^{(C)}_{1b|0y}$. Because any bipartite no-signaling assemblage admits a quantum realisation, there exist a subsystem B, POVMs given by $N^{(B)}_{b|y}$ and a state $\rho^{(ABC)}$ such that
\begin{equation}
\tilde{\sigma}^{(AC)}_{b|y}=\mathrm{Tr}_{B}(N^{(B)}_{b|y}\otimes \mathds{1}\rho^{(ABC)})
\end{equation}and as a consequence 
\begin{equation}
\sigma^{(C)}_{ab|0y}=\mathrm{Tr}_{AB}(P^{(A)}_{a|0}\otimes N^{(B)}_{b|y}\otimes \mathds{1}\rho^{(ABC)}).
\end{equation}Note that above reasoning still holds if we exchange the roles of Alice and Bob. To summarize any no-signaling assemblage coming form  tripartie scenario with only one measurement setting in either of uncharacterised parties still admits quantum realisation. The situation when both uncharacterised parties have a single measurement setting is obviously trivial. The whole above analysis shows that the steering scenario in which two uncharacterised parties steer to characterised one by measurements described respectively by labels of settings and outcomes $a,b,x,y\in \left\{0,1\right\}$, is the first nontrivial setting in which one can ask the initial question.

Since we wish to discuss in this setting quantum realisation of extremal assemblages, according to Section \ref{simp}, we may restrict our attention only to tripartite states which are pure. Assume for the moment that $|\psi^{(ABC)}\rangle=|\phi^{(AB)}\rangle|\phi^{(C)}\rangle$ is separable in the cut $AB|C$. For any choice of POVMs given by $M^{(A)}_{a|x}, N^{(B)}_{b|y}$ consider
\begin{align}\nonumber
\sigma_{ab|xy}^{(C)}&=\mathrm{Tr}_{AB}(M^{(A)}_{a|x}\otimes N^{(B)}_{b|y}\otimes \mathds{1}|\phi^{(AB)}\rangle|\phi^{(C)}\rangle \langle \phi^{(AB)}|\langle\phi^{(C)}|)\\ \nonumber
&=p^{(AB)}(ab|xy)|\phi^{(C)}\rangle \langle\phi^{(C)}|. \nonumber
\end{align}Assemblage defined like that is extremal if and only if $p^{(AB)}(ab|xy)$ is extremal in a polytope of no-signaling correlations. Since this box of probabilities admits quantum realisation, the former is possible only if $p^{(AB)}(ab|xy)$ describes one of local boxes, but in this case extremal assemblage is LHS and hence trivial.

To consider other possibility, assume without loss of generality that $|\psi^{(ABC)}\rangle=|\phi^{(A)}\rangle|\phi^{(BC)}\rangle$ is separable in the cut $A|BC$ and $|\phi^{(BC)}\rangle$ is entangled. For any choice of POVMs given by $M^{(A)}_{a|x}, N^{(B)}_{b|y}$ consider 
\begin{align}\nonumber
\sigma_{ab|xy}^{(C)}&=\mathrm{Tr}_{AB}(M^{(A)}_{a|x}\otimes N^{(B)}_{b|y}\otimes \mathds{1}|\phi^{(A)}\rangle |\phi^{(BC)}\rangle \langle \phi^{(A)}| \langle\phi^{(BC)}|)\\ \nonumber
&=\mathrm{Tr}_{A}(M^{(A)}_{a|x}|\phi^{(A)}\rangle\langle \phi^{(A)}|)\sigma_{b|y}^{(C)}.\nonumber
\end{align}where $\sigma_{b|y}^{(C)}=\mathrm{Tr}_{B}(N^{(B)}_{b|y}\otimes \mathds{1}|\phi^{(BC)}\rangle\langle\phi^{(BC)}|)$ form a bipartite no-signaling assemblage. If we put $\alpha_{a|x}=\mathrm{Tr}_{A}(M^{(A)}_{a|x}|\phi^{(A)}\rangle\langle \phi^{(A)}|)$ we may represents this assemblage in box form (\ref{sup_box}) as
\begin{equation}\nonumber
\Sigma^{(C)}=\begin{pmatrix}
\begin{array}{cc|cc}
 \alpha_{0|0}\sigma^{(C)}_{0|0} &  \alpha_{0|0}\sigma^{(C)}_{1|0} & \alpha_{0|0}\sigma^{(C)}_{0|1} &  \alpha_{0|0}\sigma^{(C)}_{1|1} \\  
 \alpha_{1|0}\sigma^{(C)}_{0|0} & \alpha_{1|0}\sigma^{(C)}_{1|0}& \alpha_{1|0}\sigma^{(C)}_{0|1}  & \alpha_{1|0}\sigma^{(C)}_{1|1} \\ \hline
 \alpha_{0|1}\sigma^{(C)}_{0|0} & \alpha_{0|1}\sigma^{(C)}_{1|0} & \alpha_{0|1}\sigma^{(C)}_{0|1}&  \alpha_{0|1}\sigma^{(C)}_{1|1}  \\
   \alpha_{1|1}\sigma^{(C)}_{0|0} & \alpha_{1|1}\sigma^{(C)}_{1|0} & \alpha_{1|1}\sigma^{(C)}_{0|1} & \alpha_{1|1}\sigma^{(C)}_{1|1}
\end{array}
\end{pmatrix}.
\end{equation}If both $\alpha_{0|0}$ and $\alpha_{1|0}$ (respectively $\alpha_{0|1}$ and $\alpha_{1|1}$) are nonzero, then $\Sigma^{(C)}$ is not extremal. Indeed, $\Sigma^{(C)}=\alpha_{0|0}\Sigma^{(C)}_{0|0}+\alpha_{1|0}\Sigma^{(C)}_{1|0}$ where
\begin{equation}\nonumber
\Sigma^{(C)}_{0|0}=\begin{pmatrix}
\begin{array}{cc|cc}
 \sigma^{(C)}_{0|0} &  \sigma^{(C)}_{1|0} & \sigma^{(C)}_{0|1} &  \sigma^{(C)}_{1|1} \\  
 0 & 0& 0  & 0 \\ \hline
 \alpha_{0|1}\sigma^{(C)}_{0|0} & \alpha_{0|1}\sigma^{(C)}_{1|0} & \alpha_{0|1}\sigma^{(C)}_{0|1}&  \alpha_{0|1}\sigma^{(C)}_{1|1}  \\
   \alpha_{1|1}\sigma^{(C)}_{0|0} & \alpha_{1|1}\sigma^{(C)}_{1|0} & \alpha_{1|1}\sigma^{(C)}_{0|1} & \alpha_{1|1}\sigma^{(C)}_{1|1}
\end{array}
\end{pmatrix}
\end{equation}and
\begin{equation}\nonumber
\Sigma^{(C)}_{1|0}=\begin{pmatrix}
\begin{array}{cc|cc}
 0 &  0 & 0 &  0 \\  
 \sigma^{(C)}_{0|0} &\sigma^{(C)}_{1|0}& \sigma^{(C)}_{0|1}  & \sigma^{(C)}_{1|1} \\ \hline
 \alpha_{0|1}\sigma^{(C)}_{0|0} & \alpha_{0|1}\sigma^{(C)}_{1|0} & \alpha_{0|1}\sigma^{(C)}_{0|1}&  \alpha_{0|1}\sigma^{(C)}_{1|1}  \\
   \alpha_{1|1}\sigma^{(C)}_{0|0} & \alpha_{1|1}\sigma^{(C)}_{1|0} & \alpha_{1|1}\sigma^{(C)}_{0|1} & \alpha_{1|1}\sigma^{(C)}_{1|1}
\end{array}
\end{pmatrix}.
\end{equation}The only way to obtain extremal assemblage is to, without loss of generality, put $M^{(A)}_{0|x} =|\phi^{(A)}\rangle \langle \phi^{(A)}|$ for $x=0,1$ and obtain 
\begin{equation}\nonumber
\Sigma^{(C)}=\begin{pmatrix}
\begin{array}{cc|cc}
 \sigma^{(C)}_{0|0} &  \sigma^{(C)}_{1|0} & \sigma^{(C)}_{0|1} & \sigma^{(C)}_{1|1} \\  
 0 & 0 & 0  & 0 \\ \hline
   \sigma^{(C)}_{0|0} & \sigma^{(C)}_{1|0} & \sigma^{(C)}_{0|1} & \sigma^{(C)}_{1|1}\\
	 0 & 0 & 0 &  0 
\end{array}
\end{pmatrix}.
\end{equation}If one choose $N^{(B)}_{b|y}$ that $\left\{\sigma^{(C)}_{b|y}\right\}_{b,y}$ form a row of type III (it is always possible due to entanglement of $|\phi^{(BC)}\rangle$ and Lemma \ref{lemmaC}), then, according to discussion in Section \ref{sec}, $\Sigma^{(C)}$ will be indeed extremal (and without LHS model).

However, extremality of that assemblage follows only from the lack of distinction between the set no-signaling assemblages and quantum assemblages in a bipartite case. This can be seen simply as an inclusion of bipartite scenario into a tripartite one. As examples of such type are somehow trivial (despite being not LHS), it explains the focus of Proposition 4 on the case of a genuine entangled pure tripartite states $|\psi^{(ABC)}\rangle$, which provide results that cannot be reduced to bipartite situation.

\end{document}